\documentclass{my_article_class}
\usepackage{settings}

\title{\bf\fontsize{20.00}{25.00}\selectfont 
	Pauli Encodings \& Unclonable Encryption
}

\author{Pierre Botteron
	\thanks{pierre.botteron@ens-lyon.fr, $\qquad^\dagger$sebastien.designolle@inria.fr, $\qquad^\ddagger$omar.fawzi@ens-lyon.fr}}
\author{Sébastien Designolle$^\dagger\!\!$}
\author{Omar Fawzi$^\ddagger$}

\affil{INRIA, ENS de Lyon, Lyon, France}

\date{}

\begin{document}

\maketitle

\vspace{-0.5cm}
\begin{abstract}
	The unclonable bit question asks whether quantum encryption can prevent an adversary from producing two systems that both reveal the plaintext once the key is disclosed.
	We introduce and study \emph{Pauli Encodings}, a simple class of one-bit encryption schemes whose ciphertexts are normalized eigenspace projectors of Pauli strings.
	For every Pauli Encoding with $K$ Pauli strings, we prove a universal lower bound $1/2+1/(2\sqrt{K})$ on the optimal monogamy-of-entanglement winning probability, together with sharper bounds for several structured families.
	We then establish two limitations of natural approaches to unclonable security.
	First, if the Pauli strings are restricted to strings of $X$ and $Z$ of length $n$, the encoding is not secure.
	Second, we identify a universal 3/4 obstruction showing that arguments based only on pairwise guessing marginals cannot establish unclonable-indistinguishable security.
	When the Pauli strings all pairwise anticommute, the protocol becomes the one studied in~\href{https://doi.org/10.22331/q-2026-07-08-2157}{[Quantum 10, 2157 (2026)]}.
	We exploit the symmetry of this protocol to solve the third level of the natural semidefinite programming relaxation obtaining an asymptotic upper bound of approximately 0.5556 on the winning probability.
	Finally, we prove strong unclonable-indistinguishable security against bounded-local-dimension adversaries and strong indistinguishability security for several efficient Pauli families.
	First-level NPA computations provide additional numerical evidence towards the strong unclonable-indistinguishable security.
\end{abstract}
\vspace{0.5cm}

\small
\setcounter{tocdepth}{2}
\tableofcontents{}
\normalsize

\section{Introduction}

Classical information can be copied, even when it is encrypted.
By contrast, the no-cloning theorem \cite{Dieks-82,Wootters-Zurek-82} states that an unknown quantum state cannot be perfectly cloned.
This principle can be leveraged to obtain unclonable security guarantees, opening the way to unclonable cryptography.

\paragraph{The Unclonable Bit Question.}
This line of research was initiated by \citeauthor{Gottesman-03} in \cite{Gottesman-03} and further developed by \citeauthor{Broadbent-Lord-20} in \cite{Broadbent-Lord-20}.
A central question is whether a classical bit\footnote{
	It is sufficient to study one-bit protocols because they can be transformed into $k$-bit protocols with the same security \cite{Hiroka-Kitagawa-Nishimaki-Yamakawa-23}.
	However, unlike in other cryptographic settings, repeatedly encoding a one-bit message does not trivially yield security for a string of bits.
} $x\in\set{0,1}$ can be encrypted under a classical key~$k$ as a quantum state $\rho_{x|k}$ while satisfying
correctness (the message can be recovered from the key and the ciphertext),
efficiency (the algorithms run in polynomial time),
and an unclonable property called \emph{unclonable-indistinguishability} (\Cref{def:Security}).
Informally, an adversary first applies a quantum channel~$\Phi$ to split the ciphertext $\rho_{x|k}$ between two noncommunicating parties, Bob and Charlie.
After the key is revealed, security requires that Bob and Charlie cannot both recover Alice's message~$x$ with probability substantially greater than~$1/2$.
A protocol satisfying these requirements is called an \emph{unclonable bit}.

\paragraph{Motivation.}
The unclonable bit question is important because of its applications to several cryptographic primitives,
including private-key quantum money \cite{Broadbent-Lord-20},
quantum functional encryption \cite{Mehta-Muller-24},
quantum copy-protection \cite{Ananth-Kaleoglu-21},
quantum position verification \cite{George-Allerstorfer-Lunel-Chitambar-25},
and uncloneable decryption \cite{Georgiou-Zhandry-20,Sattath-Wyborski-22,Kundu-Tan-22}.

\paragraph{Prior Work.}
Many approaches have been proposed since the initial formulation of the problem.
The question has been studied in the quantum random oracle model (QROM) \cite{Broadbent-Lord-20,Ananth-Kaleoglu-Li-Liu-Zhandry-22,Ananth-Kaleoglu-Liu-23},
in an interactive setting \cite{Broadbent-Culf-23},
in device-independent variants \cite{Kundu-Tan-22},
under obfuscation assumptions \cite{Ananth-Behera-24,Chevalier-Hermouet-Vu-24},
with quantum keys \cite{Ananth-Kaleoglu-Yuen-24},
through the relaxation known as untelegraphable-indistinguishable security \cite{Broadbent-Culf-Rochette-25},
and in the Haar random oracle model \cite{Bartusek-Goldin-26}.
Several limitations have also been established \cite{Majenz-Schaffner-Tahmasbi-21,Ananth-Kaleoglu-Li-Liu-Zhandry-22,Coladangelo-Liu-Xie-25}.

More recently, a candidate scheme based on anticommuting Pauli strings was proposed in \cite{Botteron-Broadbent-Culf-Nechita-Pellegrini-Rochette-24}.
Its conjectured security was proved for $2\leq K\leq7$ and verified numerically up to $K=17$.
A different construction based on Haar-random unitaries was then shown to be efficient and weakly secure in \cite{Bhattacharyya-Culf-26}, and subsequently strongly secure but inefficient in \cite{Bhattacharyya-Broadbent-Culf-26}.
Finally, concurrently with and independently of this manuscript, two works established strong security and efficiency for schemes based on Pauli strings \cite{Ananth-Sahai-26,Ragavan-26}.
More detailed comparisons with these works are given in \Cref{subsec:Comparison-with-Recent-Works}.

\subsection{Contributions}

In this work, we generalize the anticommuting protocol of~\cite{Botteron-Broadbent-Culf-Nechita-Pellegrini-Rochette-24} to a broad class of Pauli protocols, which we call \emph{Pauli Encodings}.
For a subset
$\PauliSet^{(n)}\subseteq\set{\Identity_2,X,Y,Z}^{\otimes n}\backslash\set{\Identity_{2^n}}$
(or, more generally, a multiset of Pauli strings), we consider encryption schemes of the form
$$
	\rho_{x|P}^{(n)}
	\,\coloneqq\,
	\frac{\Identity_{2^n}+(-1)^xP}{2^n}\,,
$$
where $n$ is the number of qubits,
the message $x$ belongs to $\set{0,1}$,
and the key $P$ belongs to $\PauliSet^{(n)}$.
We write $K\coloneqq\abs{\PauliSet^{(n)}}$ for the number of keys, counting multiplicities.
These protocols are perfectly correct.
Given the key~$P$, ciphertext preparation and decryption are efficient, while the efficiency of key generation depends on the choice of $\PauliSet^{(n)}$.
We exclude $P=\Identity_{2^n}$ because $\rho_{x|P}^{(n)}$ would not be a quantum state for both values of~$x$.

Following a standard reduction, we upper-bound the winning probability of the cloning game by that of the monogamy-of-entanglement (MoE) game introduced in \cite{Tomamichel-Fehr-Kaniewski-Wehner-13}.
The MoE winning probability is the main quantity studied in this manuscript.

\subsubsection{Lower Bounds}

We first improve the elementary lower bound $1/2+1/(2K)$, obtained by targeting a single key, to a universal bound valid for every Pauli Encoding.

\begin{result}[Informal, \Cref{cor:Universal-Lower-Bound}]
	\label{result:Universal-Lower-Bound}
	For every Pauli Encoding, the MoE winning probability satisfies
	$$
		\Prob^*\of[\big]{\text{\normalfont win MoE}}
		\,\geq\,
		\frac12+\frac{1}{2\sqrt K}\,.
	$$
	This exactly matches the conjectured optimal value for the anticommuting family $\Kanticomm$ \cite{Botteron-Broadbent-Culf-Nechita-Pellegrini-Rochette-24}.
	Thus, if the conjecture holds, the anticommuting protocol is optimal among Pauli Encodings with a fixed number~$K$ of keys.
\end{result}

We then refine this lower bound for several families.\footnote{
	Although the identity cannot be used as an encryption key, it may be adjoined in the MoE formulation.
	Indeed, \Cref{prop:Pwin-of-P_n-union-Q_n} gives an exact affine relation between the MoE winning probabilities of $\PauliSet^{(n)}$ and $\PauliSet^{(n)}\sqcup\set{\Identity_{2^n}}$.
	In particular, the two values are asymptotically equivalent when $\abs{\PauliSet^{(n)}}\to\infty$.
}

\begin{result}[Informal, \Cref{subsec:Improved-Lower-Bounds}]
	For the following Pauli families, we obtain lower bounds stronger than \Cref{result:Universal-Lower-Bound}:
	\begin{itemize}
		\item For
		$
		\Kall^{(n)}\sqcup\set{\Identity_{2^n}}
		=
		\set[\big]{\Identity_2,X,Y,Z}^{\otimes n},
		$
		the lower bound is
		$
		\frac12+\frac12\of[\big]{\frac{1+\sqrt3}{4}}^n.
		$
		\item For
		$
		\Kreal^{(n)}\sqcup\set{\Identity_{2^n}}
		=
		\set[\big]{\Identity_2,X,Z}^{\otimes n},
		$
		the lower bound is
		$
		\frac12+\frac12\of[\big]{\frac{1+\sqrt2}{3}}^n.
		$
		\item For the augmented HZH multiset
		$$
			\set[\Big]{
				H^\theta Z^aH^\theta
				\,:\,
				\theta,a\in\set{0,1}^n
			},
		$$
		which contains $2^n$ identity labels, the lower bound is
		$
		\frac12+\frac12\of[\big]{\frac12+\frac{1}{2\sqrt2}}^n.
		$
	\end{itemize}
	None of our numerical seesaw searches found a larger lower bound.
\end{result}

\subsubsection{No-Go Results}

We establish two no-go results for natural approaches.
The first concerns a pairwise relaxation of the MoE game.
For a tripartite state $\sigma_{\Alice\Bob\Charlie}$, the \emph{pairwise MoE game} averages the two marginal guessing probabilities
$\Prob\of{x_\Alice=x_\Bob\,|\,\sigma_{\Alice\Bob\Charlie}}$
and
$\Prob\of{x_\Alice=x_\Charlie\,|\,\sigma_{\Alice\Bob\Charlie}}$.
This relaxation gives a natural upper bound on the usual MoE winning probability, but the following result shows that it cannot prove strong security.
We found this no-go result because some previous versions of papers in the community relied on this natural upper bound.
We refer to this obstruction as the ``curse of $3/4$.''

\begin{result}[Informal, \Cref{prop:Curse-of-the-3_4}]
	For every correct binary encryption scheme, the optimal winning probability of the pairwise MoE game is at least~$3/4$.
\end{result}

We also give an alternative proof, in the Pauli-Encoding formalism, of the exact BB84 result previously established in \cite{Coladangelo-Liu-Xie-25}.

\begin{result}[Informal, \Cref{prop:Non_Security_of_the_BB84_Encoding}]
	\label{result:Non_Security_of_the_BB84_Encoding}
	For every $n\in\NN^*$, the BB84 Encoding
	$
	\KBBeightyfour^{(n)}
	\coloneqq
	\set{X,Z}^{\otimes n}
	$
	satisfies
	$$
		\Prob^*\of[\Big]{
			\text{\normalfont win MoE}
			\,\Big|\,
			\KBBeightyfour^{(n)}
		}
		=
		\cos^2\of[\Big]{\frac{\pi}{8}}
		=
		\frac12+\frac{1}{2\sqrt2}\,.
	$$
	Consequently, the BB84 Encoding is not weakly unclonable-indistinguishable secure.
\end{result}

\subsubsection{Security Results}

We next establish several positive security results.
First, for bounded-dimensional adversaries, a Frobenius-norm argument yields the following statement.

\begin{result}[\Cref{lem:upper-bound-in-terms-of-Bobs-dimension}]
	Denote by $d\in\NN^*$ a common upper bound on the dimensions of Bob's and Charlie's Hilbert spaces.
	Then, for any $n\in\NN^*$ and any $\PauliSet^{(n)}\subseteq\set{\Identity_2,X,Y,Z}^{\otimes n}$, we have:
	$$
		\Prob^*\of[\big]{ \text{\normalfont win MoE} }
		\,\leq\,
		\frac12 + \frac{d\,\sqrt{2^n}}{2\sqrt{\abs{\PauliSet^{(n)}}}}\,.
	$$
\end{result}

We then prove strong indistinguishability security, a necessary but insufficient condition for unclonable-indistinguishability.

\begin{result}[Informal, \Cref{prop:Strong_Indistinguishable_Security}]
	The families $\Kall^{(n)}$, $\Kreal^{(n)}$, and $\KHZH^{(n,\Lall(n))}$ satisfy strong indistinguishability security.
	The BB84 Encoding $\KBBeightyfour^{(n)}$ also satisfies strong indistinguishability security, even though it is not unclonable-indistinguishable secure by \Cref{result:Non_Security_of_the_BB84_Encoding}.
\end{result}

For the anticommuting family, we go beyond the numerical evidence of~\cite{Botteron-Broadbent-Culf-Nechita-Pellegrini-Rochette-24}.
We perform a full symmetry reduction of the NPA moment hierarchy through level three and determine the associated asymptotic relaxations.

\begin{result}[Informal, \Cref{subsubsec:anticommuting-asymptotic-level-three}]
	For the anticommuting protocol $\Kanticomm^{(n)}$, the level-three NPA relaxation yields
	$$
		\limsup_{n\to\infty}
		\Prob^*\of[\Big]{
			\text{\normalfont win MoE}
			\,\Big|\,
			\Kanticomm^{(n)}
		}
		\,\leq\,
		\omega_3^{(\infty)},
		\qquad
		\omega_3^{(\infty)}\approx0.555608131\,.
	$$
\end{result}

The first level of the hierarchy was already studied in \cite{Botteron-Broadbent-Culf-Nechita-Pellegrini-Rochette-24}, which obtained the asymptotic upper bound~$5/8$.
We prove that the anticommuting commutation pattern is optimal at this level.

\begin{result}[Informal, \Cref{prop:anticommuting-best-at-level-one}]
	Among all Pauli commutation patterns with a fixed number~$K$ of keys, the anticommuting pattern minimizes the level-one NPA upper bound.
	Consequently, no level-one analysis of a Pauli protocol can yield an asymptotic upper bound below~$5/8$.
\end{result}

We also present numerical upper bounds for several efficient Pauli Encodings at small values of~$n$ in \Cref{tab:level-one-efficient-Pauli-protocols}.
These computations provide further evidence that the corresponding protocols are promising candidates for the unclonable bit.
Our implementation can be found in the accompanying GitHub repository \cite{GitHub}.

Finally, \Cref{app:anticommuting-NPA-reduction} gives the technical details of the exact symmetry reduction of the NPA hierarchy for the anticommuting protocol $\Kanticomm^{(n)}$.

\paragraph{Future Work.}
Two concurrent and independent works \cite{Ananth-Sahai-26,Ragavan-26} prove strong security for efficient Pauli Encodings.
The construction of \cite{Ananth-Sahai-26} uses
$
\set{X,Y}\otimes\set{\Identity_2,X,Y,Z}^{\otimes(n-1)},
$
whereas \cite{Ragavan-26} proves strong security for
$
\Kall^{(n)}
=
\set{\Identity_2,X,Y,Z}^{\otimes n}\backslash\set{\Identity_{2^n}}.
$
Based on the evidence presented here, we conjecture that strong security and efficiency can also be achieved for the following Pauli Encodings (see \Cref{subsec:Examples-of-Pauli-Encodings}):
\begin{itemize}
	\item
	$
	\Kreal^{(n)}
	\coloneqq
	\set[\big]{\Identity_2,X,Z}^{\otimes n}\backslash\set{\Identity_{2^n}},
	$
	\item
	$
	\KHZH^{(n,\Lall(n))}
	\coloneqq
	\set[\Big]{
		H^\theta Z^aH^\theta
		\,:\,
		\theta\in\set{0,1}^n,\,
		a\in\set{0,1}^n\backslash\set{\zero}
	},
	$
	\item
	$
	\KHZH^{(n,\Lsym(\lceil n/2\rceil,n))}
	\coloneqq
	\set[\Big]{
		H^\theta Z^aH^\theta
		\,:\,
		\theta\in\set{0,1}^n,\,
		a\in\set{0,1}^n,\,
		\abs{a}=\lceil n/2\rceil
	}.
	$
\end{itemize}
It would also be interesting to prove the conjectured strong security of the anticommuting family $\Kanticomm^{(n)}$ for all~$n$, despite its inefficiency \cite{Botteron-Broadbent-Culf-Nechita-Pellegrini-Rochette-24}.

\subsection{Comparison with Recent Works}
\label{subsec:Comparison-with-Recent-Works}

\paragraph{Comparison with \cite{Botteron-Broadbent-Culf-Nechita-Pellegrini-Rochette-24}.}
The work \cite{Botteron-Broadbent-Culf-Nechita-Pellegrini-Rochette-24} studies the Pauli Encoding defined by pairwise anticommuting Pauli strings.
The number of qubits grows linearly with~$K$ and is therefore exponential in the key length $\lambda=\lceil\log_2K\rceil$.
The authors prove the optimal value $1/2+1/(2\sqrt K)$ for $2\leq K\leq7$, verify it numerically up to $K=17$, and provide strong evidence that it holds for every $K\in\NN^*$.
We further study this protocol by numerically verifying the conjectured value up to $K=40$ at level three and by analyzing the asymptotic NPA relaxations; see \Cref{subsec:Security_Analysis_of_the_Anticommuting_Protocol}.

\paragraph{Comparison with \cite{Bhattacharyya-Culf-26,Bhattacharyya-Broadbent-Culf-26}.}
The Haar-unitary construction of \cite{Bhattacharyya-Culf-26} achieves weak security with inverse-polynomial advantage and can be implemented efficiently using unitary designs.
The follow-up work \cite{Bhattacharyya-Broadbent-Culf-26} achieves exponentially small advantage in terms of number of qubits, but its construction is not efficient.
These schemes are not Pauli Encodings and are not studied in this manuscript.

\paragraph{Concurrent and Independent Works \cite{Ananth-Sahai-26,Ragavan-26}.}
In July 2026, two concurrent and independent works established efficient, strongly secure Pauli Encodings.
Ananth and Sahai \cite{Ananth-Sahai-26} use the family
$
\set{X,Y}\otimes\set{\Identity_2,X,Y,Z}^{\otimes(n-1)}
$
and obtain winning probability at most
$
1/2+1/\sqrt{2^{n+1}}.
$
Ragavan \cite{Ragavan-26} uses
$
\Kall^{(n)}
=
\set{\Identity_2,X,Y,Z}^{\otimes n}\backslash\set{\Identity_{2^n}}
$
and proves the upper bound
$
1/2+\frac12\sqrt{2^n/(4^n-1)}.
$
The latter family is also analyzed here, concurrently and independently, although our results establish only partial security guarantees for it.

\subsection{Acknowledgments}

The authors are grateful to
Gilles Brassard,
Anne Broadbent,
Paul Hermouet,
Ion Nechita, and
Denis Rochette
for insightful discussions on unclonable cryptography.
They also thank
Peter Brown and Lewis Wooltorton for helpful comments on the ncpol2sdpa Python package~\cite{Wittek-15},
Nando Leijenhorst for assistance with his SDP-rounding package~\cite{CLL24},
and Vincent Russo for the toqito Python package~\cite{toqito}.
This project received funding from the European Union's Horizon Europe research and innovation program through the Quantum Secure Networks Partnership (QSNP, grant agreement No.~101114043).

\paragraph{Use of AI.}
The idea of generalizing the anticommuting scheme of \cite{Botteron-Broadbent-Culf-Nechita-Pellegrini-Rochette-24} to general Pauli Encodings originated with the authors from May 2026.
The research phase was conducted by the authors and, in some sections, supported by extended discussions with GPT-5.5 Free and Plus and Gemini 3.5 Free,
especially for the results presented in \Cref{subsec:Universal_Lower_Bound,subsec:Strong_Security_Against_Bounded_Adversaries,subsec:Security_Analysis_of_the_Anticommuting_Protocol}.
This manuscript was written by the authors themselves, 
except for the following technical sections: \Cref{subsec:Security_Analysis_of_the_Anticommuting_Protocol,subsec:Security_Analysis_of_some_Efficient_Protocols} and \Cref{app:anticommuting-NPA-reduction} were drafted by ChatGPT through extended discussions and refined by the authors.
AI tools were also used to polish the manuscript.
The authors take full accountability for all results and proofs presented in this work.

\section{Preliminaries}
\label{sec:Preliminaries}

\subsection{Cloning and Monogamy-of-Entanglement Games}

\subsubsection{Definition of the Cloning Games}

In the cloning game, Alice samples a classical message $x\in \Xset$ and a classical key $k\in \Kset$ uniformly at random.
She encrypts~$x$ into a quantum state $\rho_{x|k}\in\Density(\Hilbert_\Alice)$.
She then sends the state to a pirate $\Pirate$ who applies a quantum channel
$$
	\Phi:\Bounded(\Hilbert_\Alice)\to\Bounded(\Hilbert_\Bob\otimes \Hilbert_\Charlie)
$$ 
without knowing the key $k$.
Then Bob and Charlie obtain the key $k$ and perform measurements using their respective POVMs $\set{B_{x_\Bob|k}}_{x_\Bob}$ and $\set{C_{x_\Charlie|k}}_{x_\Charlie}$.
We say that the adversary team $(\Pirate,\Bob,\Charlie)$ wins the cloning game with the strategy $(\Phi, \set{B_{x_\Bob|k}}_{x_\Bob,k}, \set{C_{x_\Charlie|k}}_{x_\Charlie,k})$ if both Bob and Charlie correctly guess the message, that is:
$$
	x_\Bob = x_\Charlie = x\,.
$$
The winning probability is expressed as follows:
\begin{align*}
	\Prob\of[\Big]{\text{$(\Phi, B_{x_\Bob|k}, C_{x_\Charlie|k})$ wins cloning} \,\Big| \set{\rho_{x|k}}}
	&
	\,\coloneqq\,
	\frac{1}{|\Xset|\,|\Kset|} \sum_{x,x_\Bob,x_\Charlie\in \Xset} \sum_{k\in \Kset} \Tr\ofAlt[\Big]{ \Phi\of{\rho_{x|k}} \of[\Big]{ B_{x_\Bob|k}\otimes C_{x_\Charlie|k} } } \Indic_{x_\Bob = x_\Charlie = x}
	\\
	&\,=\,
	\frac{1}{|\Xset|\,|\Kset|} \sum_{x\in \Xset} \sum_{k\in \Kset} \Tr\ofAlt[\Big]{ \Phi\of{\rho_{x|k}} \of[\Big]{ B_{x|k}\otimes C_{x|k} } }\,,
\end{align*}
and the optimal winning probability is the supremum over all possible adversarial attacks:
$$
	\Prob^*\of[\Big]{\text{win cloning} \,\Big| \set{\rho_{x|k}}}
	\,\coloneqq\,
	\sup_{(\Phi, B, C)}
	\Prob\of[\Big]{\text{$(\Phi, B, C)$ wins cloning} \,\Big| \set{\rho_{x|k}}}\,.
$$

\begin{remark}
	The choice of a cloning game is specified by the family of states $\set{\rho_{x|k}}_{x,k}$. If one changes this family, then we have a different cloning game with different optimal winning probability.
\end{remark}

\subsubsection{Upper Bound using the Choi Matrix}

Using the Choi matrix $C_\Phi$ of the quantum channel $\Phi$, we can rephrase the winning probability:
\begin{align*}
	\Prob^*\of[\Big]{\text{win cloning} \,\Big| \set{\rho_{x|k}}}
	&\,=\,
	\sup_{\substack{C_{\Phi}\succcurlyeq\zero\\\Tr_{(\Bob,\Charlie)}\ofAlt{C_\Phi}=\Identity_\Alice}}
	\sup_{(B, C)}
	\frac{1}{|\Xset|\,|\Kset|} \sum_{x\in \Xset} \sum_{k\in \Kset} \Tr\ofAlt[\Big]{ C_\Phi \of[\big]{\rho_{x|k}^\top \otimes B_{x|k}\otimes C_{x|k} } }\,,
	\\
	&\,\leq\,
	\sup_{\substack{C_{\Phi}\succcurlyeq\zero\\\Tr\ofAlt{C_\Phi}=d_\Alice}}
	\sup_{(B, C)}
	\frac{1}{|\Xset|\,|\Kset|} \sum_{x\in \Xset} \sum_{k\in \Kset} \Tr\ofAlt[\Big]{ C_\Phi \of[\big]{\rho_{x|k}^\top \otimes B_{x|k}\otimes C_{x|k} } }\,,
\end{align*}
where we used the relaxation from the partial trace $\Tr_{(\Bob,\Charlie)}\ofAlt{C_\Phi}=\Identity_\Alice$ to the standard trace $\Tr\ofAlt{C_\Phi}=d_\Alice$.

\subsubsection{Link with Monogamy-of-Entanglement Games}

We renormalize the Choi matrix $C_\Phi$ so that we can view it as a quantum state $\sigma_{\Alice\Bob\Charlie}$ and we obtain the following upper bound:
\begin{align}
	\Prob^*\of[\Big]{\text{win cloning} \,\Big| \set{\rho_{x|k}}}
	&\,\leq\,
	\sup_{\substack{\sigma\succcurlyeq\zero\\\Tr\ofAlt{\sigma}=1}}
	\sup_{(B, C)}
	\frac{1}{|\Xset|\,|\Kset|} \sum_{x\in \Xset} \sum_{k\in \Kset} \Tr\ofAlt[\Big]{ d_\Alice\sigma_{\Alice\Bob\Charlie}\, \of[\big]{\rho_{x|k}^\top \otimes B_{x|k}\otimes C_{x|k} } }\,,
	\\
	\label{eq:winning-prob-at-the-MOE-game}
	&\,=\,
	\sup_{\substack{\sigma\succcurlyeq\zero\\\Tr\ofAlt{\sigma}=1}}
	\sup_{(B, C)}
	\frac{1}{|\Kset|} \sum_{x\in \Xset} \sum_{k\in \Kset} \Tr\ofAlt[\Big]{ \sigma_{\Alice\Bob\Charlie}\, \of[\big]{ A_{x|k} \otimes B_{x|k}\otimes C_{x|k} } }\,,
\end{align}
where we used the following notation:
\begin{equation}
	\label{eq:def_of_Axk_in_terms_of_rho}
	A_{x|k}
	\,\coloneqq\,
	\frac{d_\Alice}{|\Xset|}\, \rho_{x|k}^\top\,.
\end{equation}
Now, whenever $A_{x|k}$ is a valid POVM, \ie when the normalization condition $\sum_{x\in \Xset}\frac{d_\Alice}{|\Xset|}\, \rho_{x|k}^\top = \Identity_\Alice$ holds for all $k\in \Kset$, the expression in \cref{eq:winning-prob-at-the-MOE-game} is exactly the optimal winning probability for the monogamy-of-entanglement game (MoE game) associated with the family $\set{A_{x|k}}_{x,k}$ \cite{Tomamichel-Fehr-Kaniewski-Wehner-13}.
Indeed, recall that in the MoE game, the adversaries choose a quantum tripartite state $\sigma_{\Alice\Bob\Charlie}$, then Alice obtains the classical outcome $x$ using her POVM $\set{A_{x|k}}_{x}$, and then Bob and Charlie also perform their POVMs $\set{B_{x_\Bob|k}}_{x_\Bob}$ and $\set{C_{x_\Charlie|k}}_{x_\Charlie}$. The players Bob and Charlie output $x_\Bob$ and $x_\Charlie$, and they win if and only if $x=x_\Bob=x_\Charlie$. 
Hence, the optimal winning probability is exactly the expression in \cref{eq:winning-prob-at-the-MOE-game} and it yields the following relation:
\begin{equation}
	\label{eq:Cloning_upper_bounded_by_MoE}
	\Prob^*\of[\Big]{\text{win cloning} \,\Big| \set{\rho_{x|k}}}
	\,\leq\,
	\Prob^*\of[\Big]{\text{win MoE} \,\Big| \set{A_{x|k}}}\,.
\end{equation}

\subsection{Unclonable Encryption}

We call \emph{encryption scheme} the family of states $\set{\rho_{x|k}}_{x,k}\subseteq \Density(\Hilbert_\Alice)$ used by Alice in the cloning game.

\begin{definition}[Correctness]
	\label{def:Correctness}
	An encryption scheme $\set{\rho_{x|k}}_{x,k}$ is said to be \emph{correct} if there exists a decryption algorithm $\Dec:\Density(\Hilbert_\Alice)\times \Kset\to \Xset$ that enables retrieval of the original message $x\in \Xset$ with probability one:
	$$
		\forall x\in \Xset,
		\forall k\in \Kset,\quad
		\Prob\of[\Big]{ \Dec\of{ \rho_{x|k}, k } = x }
		\,=\,
		1\,.
	$$
	Equivalently, the scheme $\set{\rho_{x|k}}_{x,k}$ is correct if and only if for every $k\in\Kset$, there exists a POVM $\set{D_{x|k}}_{x\in\Xset}\subseteq\Bounded(\Hilbert_\Alice)$ such that:
	$$
		\forall x\in \Xset,
		\forall k\in \Kset,\quad
		\Tr\ofAlt[\Big]{ D_{x|k}\,\,\rho_{x|k} }
		\,=\,
		1\,.
	$$
\end{definition}

\begin{definition}[Efficiency]
	\label{def:Efficiency}
	A family of correct encryption schemes $\set[\big]{\rho_{x|k}^{(\lambda)}}_{\lambda\in\NN}$ is said to be \emph{efficient} if the key sampling, the encrypted message preparation, and the decryption can be implemented in polynomial time in $\lambda$, where $\lambda$ is called the \emph{security parameter}.
\end{definition}

\noindent
We are interested in the following definition of security, formally introduced by \citeauthor{Broadbent-Lord-20} in \cite{Broadbent-Lord-20}:

\begin{definition}[Unclonable Security]
	\label{def:Security}
	A family of encryption schemes $\set[\big]{\rho_{x|k}^{(\lambda)}}_{\lambda\in\NN}$ is said to be \emph{weakly unclonable secure} if there exists a function $f:\NN\to[0,1]$ with the vanishing condition $f(\lambda)\to0$ as $\lambda\to\infty$  such that:
	$$
		\forall \lambda\in\NN,\qquad
		\Prob^*\of[\Big]{\text{win cloning} \,\Big| \set[\big]{\rho_{x|k}^{(\lambda)}}}
		\,\leq\,
		\frac{1}{|\Xset|} + f(\lambda)\,.
	$$
	If in addition $f$ is negligible, meaning that for every polynomial $p:\NN\to\mathbb R_{>0}$ there exists $\lambda_0\in\NN$ such that $f(\lambda)\leq 1/p(\lambda)$ for all $\lambda\geq\lambda_0$, then we say that the encryption scheme is \emph{strongly unclonable secure}.
	\\
	In the particular case of binary messages $\Xset=\set{0,1}$, the security is called (weak/strong) \emph{unclonable-indistinguishable security}.
\end{definition}


\noindent
Studying the unclonable-indistinguishable security is enough since it implies the more general unclonable security~\cite{Broadbent-Lord-20}. Moreover, it also implies the indistinguishable security~\cite{Broadbent-Lord-20}, which is a standard cryptographic notion.

\begin{definition}[Unclonable Bit]
	The \emph{unclonable bit} is a family of encryption schemes $\set[\big]{\rho_{x|k}^{(\lambda)}}_{\lambda\in\NN} \subseteq \Density\of[\big]{\Hilbert_\Alice^{(\lambda)}}$ that is correct, efficient, and unclonable-indistinguishable secure.
\end{definition}

\begin{question}[\cite{Broadbent-Lord-20}]
	Is the unclonable bit achievable in the plain model?
\end{question}

\section{Candidate Scheme: Pauli Encodings}

Fix a bit $x\in \Xset = \set{0,1}$
and an integer $n\in\NN^*$.
Consider a set $\PauliSet^{(n)} \subseteq \set{\Identity_2,X,Y,Z}^{\otimes n}\backslash\set{\Identity_{2^n}}$\footnote{
    One can more generally assume that $\PauliSet$ is a multiset, \ie it possibly contains several copies of the same element, like the HZH Encoding $\KHZH$ defined in \Cref{ex:HZH_Encoding}.
    In that case, all sums and cardinalities count multiplicities.
} of length-$n$ Pauli strings, which we simply write $\PauliSet$ when the context is clear.
Write $K^{(n)}\coloneqq|\PauliSet^{(n)}|$, or simply $K$ when the context is clear.
Consider the security parameter $\lambda \coloneqq \lceil\log_2 K^{(n)}\rceil$ and keys of the form $k=P\in \PauliSet$.
Recall the expression of the Pauli matrices:
$$
    \Identity_2
    \,\coloneqq\,
    \begin{bmatrix}
        1 & 0 \\
        0 & 1
    \end{bmatrix}
    \,,
    \qquad
    X
    \,\coloneqq\,
    \begin{bmatrix}
        0 & 1 \\
        1 & 0
    \end{bmatrix}
    \,,
    \qquad
    Y
    \,\coloneqq\,
    \begin{bmatrix}
        0 & -i \\
        i & 0
    \end{bmatrix}
    \,,
    \qquad
    Z
    \,\coloneqq\,
    \begin{bmatrix}
        1 & 0 \\
        0 & -1
    \end{bmatrix}
    \,.
$$

		\subsection[Definition]{Definition of the Pauli Encodings}
		\label{section:Definition-of-the-Scheme-with-all-Pauli-strings}


\paragraph{Motivation.}
We want to define a natural family of POVMs $\set[\big]{A_{x|k}^{(n)}}_x\subseteq\Bounded\of[\big]{\Complex^{2^n}}$ for the encryption scheme. 
For simplicity, it is natural to assume they are projective measurements (PVMs).
Now, as they are dichotomic measurements ($x\in\set{0,1}$), they can always be written under the following form:
$$
	A_{x|k}^{(n)}
	\,=\,
	\frac{\Identity_{2^n} + (-1)^x U_k}{2}\,,
$$
for some Hermitian unitaries $U_k\in\Unitary\of[\big]{\Complex^{2^n}}$.
In this work, we find it natural to choose the more specific family of $U_k$ given by Pauli strings:

\begin{definition}[Pauli Encoding, POVM Version]
    \label{def:Pauli_Encoding_POVM_version}
	Consider the following family of POVMs:
	\begin{equation}
		A_{x|P}^{(n)}
		\,\coloneqq\,
		\frac{\Identity_{2^n} + (-1)^x P}{2}
		\in\Bounded\of[\big]{\Complex^{2^n}}
		\,,
	\end{equation}
	where $x\in\set{0,1}$ and $P\in\PauliSet^{(n)}$.
\end{definition}

\noindent
For convenience, when the context is clear, we may simply write $A_{x|P}$.
Note that this family satisfies the positivity condition $A_{x|P}\succcurlyeq\zero$ and the normalization condition $\sum_xA_{x|P} = \Identity_{2^n}$ for all~$P$, so it is a valid POVM.
Then, using \cref{eq:def_of_Axk_in_terms_of_rho}, and up to the key-dependent outcome relabelling explained below, the associated scheme for the cloning game is defined as follows:

\begin{definition}[Pauli Encoding, State Version]
    \label{def:Pauli_Encoding_state_version}
	Consider the following family of quantum states:
	\begin{equation}
        \label{eq:Pauli_Unclonable_Bit}
		\rho_{x|P}^{(n)}
		\,\coloneqq\,
		\frac{\Identity_{2^n} + (-1)^x P}{2^n}
		\in\Density\of[\big]{\Complex^{2^n}}
		\,,
	\end{equation}
	where $x\in\set{0,1}$ and $P\in\PauliSet^{(n)}$.
\end{definition}

\noindent
Again, we may simply write $\rho_{x|P}$ when the context is clear.
We have $\rho_{x|P}\succcurlyeq\zero$, and using that $\Identity_{2^n}\notin\PauliSet^{(n)}$ by definition, we have $\Tr\ofAlt{\rho_{x|P}}=1$ for all~$x$ and~$P$, so this is a family of valid quantum states.

For each Pauli string $P$, let $\eta(P)\in\set{0,1}$ be defined by $P^\top=(-1)^{\eta(P)}P$.
The POVM obtained from $\rho_{x|P}$ through \cref{eq:def_of_Axk_in_terms_of_rho} is then $\set{A_{x\oplus\eta(P)|P}}_x$.
Since the key~$P$ is revealed before Bob and Charlie measure, this key-dependent relabelling does not change the optimal winning probability, and below we identify these two families after this relabelling.

\begin{remark}
	\label{rem:L-without-the-identity}
    Although $\rho_{x|P}$ is not a valid quantum state when $P=\Identity_{2^n}$, we have that $\set[\big]{A_{x|P}}_x$ remains a valid POVM even when $P=\Identity_{2^n}$.
    Therefore, when computing upper bounds on $A_{x|P}$, we can consider sets $\PauliSet^{(n)} \subseteq \set{\Identity_2,X,Y,Z}^{\otimes n}$ containing the identity, which simplifies computations, and then deduce the result for $\PauliSet^{(n)}\backslash\set{\Identity_{2^n}}$ using the affine transformation from \Cref{prop:Pwin-of-P_n-union-Q_n}.
\end{remark}

        \subsection[Examples]{Examples of Pauli Encodings}
        \label{subsec:Examples-of-Pauli-Encodings}

In this manuscript, we study several variants of Pauli Encodings.

\begin{example}[BB84 Encoding $\KBBeightyfour$]
    \label{ex:BB84_Encoding} 
    The BB84 protocol is the first quantum key distribution (QKD) protocol and was introduced by \citeauthor{Bennett-Brassard-84} in 1984 \cite{Bennett-Brassard-84}.
    The idea is to encode a classical bit $x\in\set{0,1}$ with a key $\theta\in\set{0,1}$ into the qubit $H^\theta\ket{x}$,
    where $H=\frac{1}{\sqrt{2}}\begin{bsmallmatrix}
        1 & 1 \\
        1 & -1
    \end{bsmallmatrix}$ is the Hadamard gate.
    This ciphertext can be perfectly decoded using the correct key, and it yields uniformly random outcomes if using the incorrect key.
    This protocol is a Pauli Encoding for $n=1$ with $\KBBeightyfour^{(1)} = \set{X,Z}$ because the state $H^\theta\ketbra{x}{x}H^\theta$ can be rewritten as $\frac{1}{2}\of[\big]{\Identity_2+(-1)^xH^\theta Z H^\theta}$ and because $HZH = X$.
    This can be generalized to any number of qubits $n$ by taking: 
    $$
        \KBBeightyfour^{(n)} 
        \,\coloneqq\,
        \set{X,Z}^{\otimes n} 
        \,=\, 
        \set[\Big]{H^\theta (Z\otimes\cdots\otimes Z) H^\theta\,:\,\theta\in\set{0,1}^n}\,, 
    $$
    that we call the \emph{BB84 encoding},
    where $H^\theta\coloneqq H^{\theta_1}\otimes\cdots\otimes H^{\theta_n}$.
    Although this protocol is quite natural, we prove in \Cref{prop:Non_Security_of_the_BB84_Encoding} that it achieves neither strong nor weak security.
\end{example}

\begin{example}[HZH Encoding $\KHZH$]
    \label{ex:HZH_Encoding}
    One way to generalize the $\KBBeightyfour$ Encoding is to add variations on the $Z$ gates. 
    We introduce the \emph{HZH Encoding} by considering the following multiset:
    $$
        \KHZH^{(n,L)} 
        \,\coloneqq\, 
        \set[\Big]{
            H^\theta Z^a H^\theta\,:\,
            \theta\in\set{0,1}^n,\,
            a\in L
        }\,,
    $$
    where~$L$ is a subset of $\set{0,1}^n\backslash\set{\zero}$.
    In this paper, we study two specific examples of subsets~$L$. 
    The first one is $\Lall(n)\coloneqq\set{0,1}^n\backslash\set{\zero}$ and it yields the multiset obtained from $\set[\big]{\Identity_2,\Identity_2,X,Z}^{\otimes n}$ by removing its $2^n$ copies of $\Identity_{2^n}$.
    The second one is $\Lsym(w,n)\coloneqq\set[\big]{a\in\set{0,1}^n:\abs{a}=w}$, where $\abs{a}$ denotes the Hamming weight of~$a$.
    The Pauli Encodings associated with $\Lall(n)$ and $\Lsym(\lceil n/2\rceil,n)$ are promising in terms of security, see the discussion in \Cref{subsec:Security_Analysis_of_some_Efficient_Protocols}.
\end{example}

\begin{example}[Anticommuting Encoding $\Kanticomm$]
    \label{ex:Anticommuting_Encoding}
    The anticommuting Encoding was introduced and studied in \cite{Botteron-Broadbent-Culf-Nechita-Pellegrini-Rochette-24}.
    It is defined as follows:
    $$
        \Kanticomm^{(n)}
        \,\coloneqq\,
        \set[\Big]{ X^{\otimes(i-1)} \otimes Y \otimes \Identity_2^{\otimes(n-i)}\,:\,i\in\set{1,\ldots,n} }
        \,\cup\,
        \set[\Big]{ X^{\otimes(i-1)} \otimes Z \otimes \Identity_2^{\otimes(n-i)}\,:\,i\in\set{1,\ldots,n} }
        \,\cup\,
        \set{X^{\otimes n}}\,.
    $$
    It satisfies the property that any two distinct elements $P,Q\in\Kanticomm$ are anticommuting, \ie they satisfy $PQ=-QP$.
    The size of this set is $2n+1$, which makes this protocol inefficient.
    Note that, for a fixed number of qubits~$n$, it is possible to show that $\Kanticomm^{(n)}$ has the largest possible size among sets containing pairwise anticommuting elements only.
    We improve the security results from \cite{Botteron-Broadbent-Culf-Nechita-Pellegrini-Rochette-24} in \Cref{subsec:Security_Analysis_of_the_Anticommuting_Protocol}.
\end{example}

\begin{example}[Other Encodings]
    Here are other examples of Pauli Encodings that we study in this manuscript:
    $$ 
        \Kall^{(n)}
        \,\coloneqq\,
        \set[\big]{\Identity_2,X,Y,Z}^{\otimes n}\backslash\set{\Identity_{2^n}}
        \qquad\text{and}\qquad
        \Kreal^{(n)}
        \,\coloneqq\,
        \set[\big]{\Identity_2,X,Z}^{\otimes n}\backslash\set{\Identity_{2^n}}\,.
    $$
    These two Pauli Encodings behave well in terms of security, as discussed in \Cref{subsec:Security_Analysis_of_some_Efficient_Protocols}.
\end{example}

\subsection[Correctness]{Correctness of Pauli Encodings}

We prove that, given a key~$P\in\PauliSet$ and a quantum ciphertext~$\rho_{x|P}$, one can perfectly retrieve~$x$:

\begin{lemma}[Correctness]
    \label{lem:Correctness}
    The Pauli Encoding defined in \cref{eq:Pauli_Unclonable_Bit}
    is correct for any $n\in\NN^*$ and any $\PauliSet^{(n)}\subseteq\set{\Identity_2,X,Y,Z}^{\otimes n}\backslash\set{\Identity_{2^n}}$.
\end{lemma}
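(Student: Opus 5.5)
The plan is to use the POVM characterization of correctness in \Cref{def:Correctness}. For each key $P\in\PauliSet^{(n)}$, we exhibit an explicit decoding POVM $\set{D_{x|P}}_{x\in\set{0,1}}$ and check that it outputs the correct message with certainty. The natural candidate is the projective measurement onto the eigenspaces of~$P$:
$$
	D_{x|P}\,\coloneqq\,\frac{\Identity_{2^n}+(-1)^xP}{2}\,,
$$
which is exactly the POVM $A_{x|P}^{(n)}$ of \Cref{def:Pauli_Encoding_POVM_version}. That definition already shows it is positive and sums to the identity.

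The key step is computing $\Tr\ofAlt{D_{x|P}\,\rho_{x|P}}$. Using $P^2=\Identity_{2^n}$ (Pauli strings are Hermitian and unitary), we get
$$
	D_{x|P}\,\rho_{x|P}
	\,=\,\frac{(\Identity_{2^n}+(-1)^xP)^2}{2^{n+1}}
	\,=\,\frac{2(\Identity_{2^n}+(-1)^xP)}{2^{n+1}}
	\,=\,\rho_{x|P}\,.
$$
Its trace is therefore $\Tr\ofAlt{\rho_{x|P}}=1$. Here we use $\Tr\ofAlt{P}=0$, which holds because $P\neq\Identity_{2^n}$, so some tensor factor is a traceless Pauli matrix. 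Equivalently, $D_{x|P}$ is the projector onto the $(-1)^x$-eigenspace of~$P$, and $\rho_{x|P}$ is the normalized projector onto that same eigenspace, which has dimension $2^{n-1}$.

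The argument uses only $P^2=\Identity$ and $\Tr P=0$, so it works for any $n$ and any subset of non-identity Pauli strings, including multisets. I do not expect a real obstacle. The only point needing care is the role of excluding $\Identity_{2^n}$: it is what makes $\rho_{x|P}$ a valid normalized state, and it is also where $\Tr P=0$ enters.
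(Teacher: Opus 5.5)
Your proposal is correct and follows essentially the same route as the paper. Both use the decryption POVM $D_{x|P}=A_{x|P}$ and need only $P^2=\Identity_{2^n}$ and $\operatorname{Tr}(P)=0$. The paper expands the trace term by term, while you first note that $D_{x|P}\rho_{x|P}=\rho_{x|P}$, which amounts to the same computation.
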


\begin{proof}
    Consider the decryption POVM $D_{x|P}=A_{x|P}$.
    Then, we have:
    \begin{align*}
        \Tr\ofAlt[\Big]{ D_{x|P}\,\rho_{x|P} }
        &\,=\,
        \Tr\ofAlt[\Bigg]{ \of[\bigg]{\frac{\Identity_{2^n} + (-1)^x P}{2}} \of[\bigg]{\frac{\Identity_{2^n} + (-1)^x P}{2^n}} }
        \\
        &\,=\,
        \frac{1}{2^{n+1}}\of[\Bigg]{ \Tr\ofAlt[\big]{\Identity_{2^n}} + 2(-1)^x\Tr\ofAlt[\big]{P} + \Tr\ofAlt[\big]{P^2} }
        \\
        &\,=\,
        \frac{1}{2^{n+1}}\of[\Big]{ 2^n + 0 + 2^n }
        \,=\,
        1\,,
    \end{align*}
    using that $\Tr\ofAlt[\big]{P}=0$ for $P\neq\Identity_{2^n}$ and that $P^2=\Identity_{2^n}$.
\end{proof}

\subsection[Efficiency]{Efficiency of Pauli Encodings}

Implementing an $n$-qubit Pauli string is efficient in the parameter~$n$. 
Moreover, the encrypted message $\rho^{(n)}_{x|P}$ can be prepared by sampling uniformly from product eigenstates of $P$ with total eigenvalue $(-1)^x$.
Therefore, the preparation of the encrypted message $\rho^{(n)}_{x|P}$ (see \Cref{def:Pauli_Encoding_state_version})
and the implementation of the decryption PVM $A^{(n)}_{x|P}$ (see \Cref{def:Pauli_Encoding_POVM_version})
are both efficient in the parameter~$n$.

It only remains to check that the key-generation algorithm (\ie sampling $P$) is efficient. 
This depends on the choice of $\PauliSet^{(n)}$. 
Among the examples of Pauli Encodings given in \Cref{subsec:Examples-of-Pauli-Encodings}, all are efficient except the anticommuting one that requires an exponential number of qubits in the security parameter~$\lambda$.

\section{Basic Properties of Pauli Encodings}

\subsection{Exact Expression of the Winning Probability}

The optimal winning probability is expressed as follows:
\begin{align*}
	\Prob^*\of[\big]{\text{\normalfont win MoE}}
	&\,=\,
	\sup_{B,C,\sigma}
	\sum_{x\in\set{0,1}} 
	\frac{1}{K}
	\sum_{P\in\PauliSet}
	\Tr\ofAlt[\Big]{ \of[\big]{ A_{x|P}\otimes B_{x|P}\otimes C_{x|P} } \sigma_{\Alice\Bob\Charlie} }
	\\
	&\,=\,
	\sup_{B,C}
	\,\lambda_{\max}\of[\Bigg]{
		\frac{1}{K}
		\sum_{x\in\set{0,1}} 
		\sum_{P\in\PauliSet}
		\of[\big]{ A_{x|P}\otimes B_{x|P}\otimes C_{x|P} }
	}\,.
\end{align*}
In order to simplify the winning probability expression, we use the following lemma adapted from \cite{Botteron-Broadbent-Culf-Nechita-Pellegrini-Rochette-24}:

\begin{lemma}
	\label{lem:change_B_and_C_into_unitaries}
	Without loss of generality, we may assume that: 
	$$
		B_{x|P}
		\,=\,
		C_{x|P}
		\,=\,
		\frac{\Identity_\Bob + (-1)^x V_{P}}{2}\,,
	$$
	for some Hermitian unitary $V_{P}$ of some dimension $d_\Bob$, 
	and we may assume that $\sigma_{\Alice\Bob\Charlie}$ is symmetric in the registers~$\Bob$ and~$\Charlie$.
\end{lemma}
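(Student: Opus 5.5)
We reduce an arbitrary strategy $(\sigma,B,C)$ in three steps, none of which decreases the winning probability. The three steps are: Naimark dilation, a symmetrization trick that makes Bob and Charlie identical, and the standard translation of two-outcome projective measurements into Hermitian unitaries.

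\emph{Step 1 (projective measurements).} By Naimark's dilation theorem, each binary POVM $\set{B_{x|P}}_x$ on $\Hilbert_\Bob$ can be realized as a projective measurement $\set{\tilde B_{x|P}}_x$ on $\Hilbert_\Bob\otimes\Complex^{m}$, with an ancilla prepared in a fixed state $\ket{0}$. We replace $\sigma$ by $\sigma\otimes\ketbra{0}{0}$, and do the same on Charlie's side. Each term $\Tr[(A_{x|P}\otimes B_{x|P}\otimes C_{x|P})\sigma]$ is then unchanged. One dilation can serve all keys $P$ simultaneously: take the ancilla dimension large enough, for instance a direct sum over $P$, or simply dilate each POVM into a common enlarged space. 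After that, pad with zero blocks so that $d_\Bob=d_\Charlie=:d$. A two-outcome projective measurement can always be written as $\tilde B_{x|P}=(\Identity+(-1)^xV_P)/2$ with $V_P=\tilde B_{0|P}-\tilde B_{1|P}$, which is a Hermitian unitary. Likewise $\tilde C_{x|P}=(\Identity+(-1)^xW_P)/2$.

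\emph{Step 2 (making Bob and Charlie identical).} This is the main step. We use the operator inequality from \cite{Botteron-Broadbent-Culf-Nechita-Pellegrini-Rochette-24}. Write the winning operator as
$$
W(B,C)=\frac1K\sum_{x,P}A_{x|P}\otimes B_{x|P}\otimes C_{x|P}.
$$
For each fixed $x,P$, the operator $A_{x|P}\otimes B\otimes C$ is a product of commuting positive operators. The plan is to show
$$
\lambda_{\max}(W(B,C))\le\max\set[\big]{\lambda_{\max}(W(B,B)),\lambda_{\max}(W(C,C))}.
$$
To do this, take the Hilbert space $\Hilbert=\Complex^{2^n}\otimes\Complex^d\otimes\Complex^d$, a maximizing vector $\ket{\psi}$, and define the vectors
$$
\ket{u_{x,P}}=(\sqrt{A_{x|P}}\otimes B_{x|P}\otimes\Identity)\ket{\psi}\quad\text{(in a suitably swapped arrangement)}.
$$
The key manipulation is Cauchy--Schwarz:
$$
\bra{\psi}A\otimes B\otimes C\ket{\psi}\le\tfrac12\of[\big]{\bra{\psi}A\otimes B\otimes B'\ket{\psi}+\dots}.
$$
Concretely, we apply Cauchy--Schwarz to $\sum_{x,P}\bra{\psi}(A_{x|P}\otimes B_{x|P}\otimes C_{x|P})\ket{\psi}$ after rewriting it as an inner product on the doubled space. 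This gives a bound by the geometric mean $\sqrt{\omega(B,B)\,\omega(C,C)}$, which is at most the maximum of the two.

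I expect that getting this inequality right is the main obstacle. If the direct Cauchy--Schwarz route fails, I would fall back on a symmetrization that avoids it. Give Bob the measurement $B\oplus C$ and Charlie the same measurement $B\oplus C$, acting on $\Hilbert_\Bob\oplus\Hilbert_\Charlie$. Embed $\sigma$ so that Bob's share lies in the first summand and Charlie's share in the second. Then $\sigma$ is supported where $(B\oplus C)\otimes(B\oplus C)$ acts as $B\otimes C$, so the winning probability is unchanged. This is clean and gives $B_{x|P}=C_{x|P}$ with $V_P=V'_P\oplus W_P$, which is still a Hermitian unitary. Note that the winning probability is determined entirely by the state's support inside the $(1,2)$ block of the two direct sums, so no inequality is needed. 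I would use this fallback as the actual argument.

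\emph{Step 3 (symmetric state).} Once $B=C$, the operator $W$ commutes with the swap $F_{\Bob\Charlie}$ of the two $d$-dimensional registers. Let $\sigma'=F\sigma F$. It achieves the same value, and by linearity so does $(\sigma+\sigma')/2$, which is symmetric in $\Bob$ and $\Charlie$. Alternatively, $W$ restricts to the symmetric and antisymmetric subspaces, so one can choose the optimal eigenvector to be an eigenvector of $F$. Either way, $\sigma_{\Alice\Bob\Charlie}$ can be taken symmetric.

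Combining the three steps yields the claimed form $B_{x|P}=C_{x|P}=(\Identity_\Bob+(-1)^xV_P)/2$ together with a $\Bob\Charlie$-symmetric state, without loss of optimality.
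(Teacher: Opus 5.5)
Your fallback argument is exactly the paper's proof and is correct: Naimark dilation to projective measurements, then giving both parties $B\oplus C$ on $\Hilbert_\Bob\oplus\Hilbert_\Charlie$ with $\sigma$ embedded in the $\Hilbert_\Bob\otimes\Hilbert_\Charlie$ block, then averaging $\sigma$ with $F\sigma F$. Delete the Cauchy--Schwarz/geometric-mean sketch, which is not established as written, and the zero-padding remark in Step~1, since padding a POVM with zero blocks breaks normalization and the direct sum already makes the two dimensions equal.
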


\begin{proof}
	Using Naimark's Dilation Theorem, we can assume that Bob and Charlie use projective measurements. 
	Moreover, as the measurements are dichotomic, we can always write $B_{x|P}$ as:
	$$
		B_{x|P}
		\,=\,
		\frac{\Identity_\Bob + (-1)^x V_{P}}{2}\,,
	$$
	for some Hermitian unitary~$V_{P}$ of some dimension~$d_\Bob$.
	Finally, one can show that if a winning probability $q\in[0,1]$ can be achieved using a strategy $\of[\big]{\sigma, \set{B_{x|P}}, \set{C_{x|P}}}$, 
	then the same winning probability $q$ can be achieved by symmetric adversaries $\of[\big]{\sigma'', \set{B'_{x|P}}, \set{C'_{x|P}}}$ with $B'_{x|P} = C'_{x|P}$ and $\sigma''$ symmetric in Bob's and Charlie's registers.
	To this end, we consider
	Bob's and Charlie's new Hilbert spaces to be $\Hilbert_{\Bob'} = \Hilbert_{\Charlie'} = \Hilbert_{\Bob} \oplus \Hilbert_{\Charlie}$,
	we define the quantum state $\sigma'_{\Alice\Bob'\Charlie'}\coloneqq\zero\oplus\sigma_{\Alice\Bob\Charlie}\oplus\zero\oplus\zero
			\in \Density\of{\Hilbert_{\Alice\Bob'\Charlie'}}$,
	where:
	$$
		\Hilbert_{\Alice\Bob'\Charlie'}
		= \Hilbert_{\Alice}\otimes \Hilbert_{\Bob'} \otimes \Hilbert_{\Charlie'}
		= \of[\Big]{\Hilbert_{\Alice}\otimes \Hilbert_{\Bob} \otimes \Hilbert_{\Bob}}
		\oplus \of[\Big]{\Hilbert_{\Alice}\otimes \Hilbert_{\Bob} \otimes \Hilbert_{\Charlie}}
		\oplus \of[\Big]{\Hilbert_{\Alice}\otimes \Hilbert_{\Charlie} \otimes \Hilbert_{\Bob}}
		\oplus \of[\Big]{\Hilbert_{\Alice}\otimes \Hilbert_{\Charlie} \otimes \Hilbert_{\Charlie}}
		\,,
	$$
	we take the new quantum state to be $\sigma''_{\Alice\Bob'\Charlie'}\coloneqq\frac{1}{2}\of[\big]{\sigma'_{\Alice\Bob'\Charlie'}+F_{\Bob'\Charlie'}\sigma'_{\Alice\Bob'\Charlie'} F_{\Bob'\Charlie'}}$
	where $F_{\Bob'\Charlie'}$ is the swap operator between registers $\Bob'$ and $\Charlie'$,
	and we consider the direct sum $B'_{x|P} = C'_{x|P} \coloneqq B_{x|P} \oplus C_{x|P}$.
	In such a setting, we have that $\sigma''$ is symmetric in Bob's and Charlie's registers:
	$$
		F_{\Bob'\Charlie'}\sigma''_{\Alice\Bob'\Charlie'} F_{\Bob'\Charlie'}
		\,=\,
		\sigma''_{\Alice\Bob'\Charlie'}\,,
	$$
	and that the winning probability matches the expected one:
	\begin{align*}
		\hspace{1cm}
		&
		\hspace{-1cm}
		\Prob\of[\Big]{\text{\normalfont $\of[\big]{\sigma''_{\Alice\Bob'\Charlie'}, \set{B'_{x|P}}, \set{C'_{x|P}}}$ wins MoE}
		}
		\\
		& \,=\,
		\frac{1}{K} 
		\sum_{x\in\set{0,1}}  
		\sum_{P\in\PauliSet}
		\Tr\ofAlt[\Big]{ \of[\big]{ A_{x|P} \otimes B'_{x|P}\otimes C'_{x|P} } \sigma''_{\Alice\Bob'\Charlie'} }
		\\
		&\,=\,
		\frac{1}{K} 
		\sum_{x\in\set{0,1}}  
		\sum_{P\in\PauliSet}
		\Tr\ofAlt[\bigg]{ 
			\of[\big]{ A_{x|P} \otimes B'_{x|P}\otimes C'_{x|P} } \frac{1}{2}\sigma'_{\Alice\Bob'\Charlie'}
			+ \of[\big]{ A_{x|P} \otimes C'_{x|P}\otimes B'_{x|P} }  \frac{1}{2}\sigma'_{\Alice\Bob'\Charlie'}
		}
		\\
		& \,=\,
		\frac{1}{K} 
		\sum_{x\in\set{0,1}}  
		\sum_{P\in\PauliSet}
		\Tr\ofAlt[\Big]{ \of[\big]{ A_{x|P} \otimes B_{x|P}\otimes C_{x|P} } \sigma_{\Alice\Bob\Charlie} }
		\\
		& \,=\,
		\Prob\of[\Big]{\text{\normalfont $\of[\big]{\sigma_{\Alice\Bob\Charlie}, \set{B_{x|P}}, \set{C_{x|P}}}$ wins MoE} 
		}
		\,=\,
		q
		\,.
	\end{align*}
	Hence the result.
\end{proof}

Therefore, we obtain:
\begin{align}
	&
	\Prob^*\of[\big]{ \text{\normalfont win MoE} }\nonumber\\
	&
    \,=\,
	\sup_{V}
	\,\lambda_{\max}\of[\Bigg]{
		\frac{1}{K}
		\sum_{x\in\set{0,1}} 
		\sum_{P\in\PauliSet}
		\of[\bigg]{\frac{1}{2}\,\Identity_\Alice + \frac{1}{2}\,(-1)^x P} 
		\otimes \of[\bigg]{ \frac{1}{2}\,\Identity_\Bob + \frac{1}{2}\,(-1)^x V_P } 
		\otimes \of[\bigg]{ \frac{1}{2}\,\Identity_\Charlie + \frac{1}{2}\,(-1)^x V_P } 
	}
	\nonumber
	\\
	&\,=\,
	\frac{1}{8}
	\sup_{V}
	\,\lambda_{\max}\Bigg(
		\frac{1}{K}
		\sum_{x\in\set{0,1}} 
		\sum_{P\in\PauliSet}
		\bigg[
			\Identity_{\Alice\Bob\Charlie}
			+ (-1)^x \of[\Big]{ 
				\Identity_{\Alice\Bob}\otimes V_P
				+ \Identity_{\Alice}\otimes V_P \otimes \Identity_\Charlie
				+ P \otimes \Identity_{\Bob\Charlie}
			}
	\nonumber
	\\
	&\hspace{3cm}
			+ \of[\Big]{ 
				P \otimes V_P \otimes \Identity_\Charlie
				+ P \otimes \Identity_{\Bob} \otimes V_P
				+ \Identity_{\Alice} \otimes V_P \otimes V_P
			}
			+ (-1)^x P \otimes V_P \otimes V_P
		\bigg]
	\Bigg)
	\nonumber
	\\
	&\,=\,
	\frac{1}{8}
	\sup_{V}
	\,\lambda_{\max}\of[\Bigg]{
		\frac{1}{K}
		\sum_{P\in\PauliSet}
		\ofAlt[\bigg]{
			2\,\Identity_{\Alice\Bob\Charlie} 
			+ 2\of[\Big]{ 
				P \otimes V_P \otimes \Identity_\Charlie
				+ P \otimes \Identity_{\Bob} \otimes V_P
				+ \Identity_{\Alice} \otimes V_P \otimes V_P
			}
		} 
	}
	\nonumber
	\\
	&\,=\,
	\label{eq:expression-of-Pwin}
	\frac{1}{4}
	+
	\frac{1}{4}
	\sup_{V}
	\,\lambda_{\max}\of[\big]{
		T_V
	}\,,
\end{align}
where $T_V\coloneqq\frac{1}{K}
		\sum_{P\in\PauliSet}
		\of[\big]{ 
			P \otimes V_P \otimes \Identity_\Charlie
			+ P \otimes \Identity_{\Bob} \otimes V_P
			+ \Identity_{\Alice} \otimes V_P \otimes V_P
		}$.

\subsection{Naive Upper Bound and the Curse of the 3/4}
\label{rem:Curse-of-the-3/4}

	
	One might be tempted to use the subadditivity of $\lambda_{\max}$ on Hermitian matrices to obtain the following upper bound:
	\begin{align*}
		\lambda_{\max}\of[\big]{T_V}
		&\,\leq\,
		\lambda_{\max}\of[\Bigg]{
			\frac{1}{K}
			\sum_{P\in\PauliSet}
			\of[\big]{
			P \otimes V_P \otimes \Identity_\Charlie
			+ P \otimes \Identity_{\Bob} \otimes V_P
			}
		}
		+
		\frac{1}{K}
		\sum_{P\in\PauliSet}
		\underbrace{
		\lambda_{\max}\of[\Bigg]{
			\Identity_{\Alice} \otimes V_P \otimes V_P
		}
		}_{=1}
		\\
		&\,=\,
		\lambda_{\max}\of[\Bigg]{
			\frac{1}{K}
			\sum_{P\in\PauliSet}
			P \otimes \of[\big]{
			V_P \otimes \Identity_\Charlie
			+ \Identity_{\Bob} \otimes V_P
			}
		}
		+
		1\,,
	\end{align*}
	where we used that $\Identity_\Alice\otimes V_P\otimes V_P$ is Hermitian and unitary so its spectrum is included in $\set{\pm1}$. 
	This gives the following upper bound on the winning probability:
	\begin{equation}  \label{eq:upper-bound-on-pwin-with-the-1/2}
		\Prob^*\of[\big]{ \text{\normalfont win MoE} }
		\,\leq\,
		\underbrace{
		\frac12 + \frac{1}{4K}\,
		\sup_{V}
		\lambda_{\max}\of[\Bigg]{
			\sum_{P\in\PauliSet}
			P \otimes \of[\big]{
			V_P \otimes \Identity_\Charlie
			+ \Identity_{\Bob} \otimes V_P
			}
		}}_{\eqqcolon\,\Prob^*\of{ \text{\normalfont win \MoEprime} }}
		\,.
	\end{equation}
	This quantity belongs to $[0,1]$ and can be seen as a success probability that we write $\Prob^*\of[\big]{ \text{\normalfont win \MoEprime} }$.
	The corresponding game, that we call the \emph{pairwise MoE game}, can be seen as a relaxation of the usual MoE game, in which we average the marginals
	$\Prob\of{x_\Alice=x_\Bob \,|\, \sigma_{\Alice\Bob\Charlie}}\coloneqq\frac{1}{K} \sum_{k} \sum_{x} 
	\Tr\ofAlt[\big]{ 
		\of[\big]{A_{x|k}
		\otimes B_{x|k}
		\otimes \Identity_\Charlie}\,
		\sigma_{\Alice\Bob\Charlie}
	}$ 
	and $\Prob\of{x_\Alice=x_\Charlie\,|\, \sigma_{\Alice\Bob\Charlie}}$ for the same quantum state $\sigma_{\Alice\Bob\Charlie}$ (one can easily check that the average of the two marginals yields the expression as in \cref{eq:upper-bound-on-pwin-with-the-1/2}).

	But we argue that this upper bound is too loose and leads to the ``curse of the $3/4$''.
	This result tells us about the intrinsic tripartite nature of the unclonable-indistinguishable security.

\begin{proposition}
	\label{prop:Curse-of-the-3_4}
	For any correct binary encryption scheme $\set{\rho_{x|k}}_{x,k}$ for which $\set{A_{x|k}}_x$ defined in \cref{eq:def_of_Axk_in_terms_of_rho} is a POVM for every~$k$, the marginals $\Prob\of[\big]{x_\Alice = x_\Bob}$ and $\Prob\of[\big]{x_\Alice = x_\Charlie}$ can be simultaneously bounded away from $1/2$.
	More precisely, there exists a strategy, consisting of a state $\sigma_{\Alice\Bob\Charlie}$ and local POVMs, that yields:
	$$
		\Prob\of[\big]{x_\Alice = x_\Bob\,\big|\,\sigma_{\Alice\Bob\Charlie}}
		\,=\,
		\Prob\of[\big]{x_\Alice = x_\Charlie\,\big|\,\sigma_{\Alice\Bob\Charlie}}
		\,=\,
		\frac{3}{4}\,.
	$$
	Consequently, we have for any such encryption scheme:
	$$
		\Prob^*\of[\Big]{ \text{\normalfont win \MoEprime}}
		\,\geq\,
		\frac{3}{4}\,.
	$$
\end{proposition}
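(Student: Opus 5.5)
The idea is to split a single maximally entangled state with Alice ``in time'' between Bob and Charlie. With probability $1/2$ Bob holds the purification of Alice's system and Charlie holds nothing useful. With probability $1/2$ the roles are reversed. A classical flag tells each party which case occurred. Each marginal then equals $\frac12\cdot 1+\frac12\cdot\frac12=\frac34$. This uses only correctness and the POVM condition on $\set{A_{x|k}}_x$, never any property specific to Pauli strings.

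\emph{Step 1 (perfect correlation from correctness).} Let $d\coloneqq d_\Alice$ and let $\ket{\Phi}=\frac{1}{\sqrt d}\sum_i\ket{ii}$. Recall the identity $\Tr\ofAlt{(M\otimes N)\ketbra{\Phi}{\Phi}}=\frac1d\Tr\ofAlt{MN^\top}$. By \Cref{def:Correctness} there is a decryption POVM $\set{D_{x|k}}_x$ with $\Tr\ofAlt{D_{x|k}\rho_{x|k}}=1$. If the party holding the second half of $\ket\Phi$ measures $\set{D_{x|k}}_x$, then by \cref{eq:def_of_Axk_in_terms_of_rho},
$$
\sum_x \frac1d\Tr\ofAlt[\big]{A_{x|k}D_{x|k}^\top}
=\sum_x\frac12\Tr\ofAlt[\big]{\rho_{x|k}^\top D_{x|k}^\top}
=\sum_x\frac12\Tr\ofAlt[\big]{D_{x|k}\rho_{x|k}}=1
$$
for every $k$, so Alice's and this party's outcomes agree with probability one.

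\emph{Step 2 (the uncorrelated party).} The other party ignores its system and always outputs $0$. Alice's reduced state is $\Identity_\Alice/d$ in both branches. Hence the agreement probability is $\frac1d\Tr\ofAlt{A_{0|k}}=\frac12\Tr\ofAlt{\rho_{0|k}^\top}=\frac12$ for every $k$.

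\emph{Step 3 (the state and measurements).} Set $\Hilbert_\Bob=\Hilbert_{\Bob_1}\otimes\Complex^2_{\Bob_2}$ and $\Hilbert_\Charlie=\Hilbert_{\Charlie_1}\otimes\Complex^2_{\Charlie_2}$, with $\Hilbert_{\Bob_1}\cong\Hilbert_{\Charlie_1}\cong\Hilbert_\Alice$. Define
$$
\sigma_{\Alice\Bob\Charlie}\coloneqq\tfrac12\,\ketbra{\Phi}{\Phi}_{\Alice\Bob_1}\otimes\ketbra{0}{0}_{\Charlie_1}\otimes\ketbra{00}{00}_{\Bob_2\Charlie_2}
+\tfrac12\,\ketbra{\Phi}{\Phi}_{\Alice\Charlie_1}\otimes\ketbra{0}{0}_{\Bob_1}\otimes\ketbra{11}{11}_{\Bob_2\Charlie_2}.
$$
Bob measures his flag $\Bob_2$. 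On flag $0$ he applies $\set{D_{x|k}}_x$ to $\Bob_1$; on flag $1$ he outputs $0$. Charlie does the same with the roles of the flag values exchanged. Formally $B_{x|k}=\ketbra00_{\Bob_2}\otimes D_{x|k}+\ketbra11_{\Bob_2}\otimes\delta_{x,0}\Identity$, which is a valid POVM, and similarly for $C_{x|k}$. By Steps 1 and 2, $\Prob\of{x_\Alice=x_\Bob\,|\,\sigma}=\frac12\cdot1+\frac12\cdot\frac12=\frac34$, and by symmetry the same holds for Charlie. The bound $\Prob^*\of{\text{win \MoEprime}}\ge\frac34$ follows because this quantity is the supremum of the average of the two marginals, as noted before the statement.

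The only delicate step is the transpose bookkeeping in Step 1. One must check that Bob should use $D_{x|k}$ itself rather than $D_{x|k}^\top$, given that $A_{x|k}$ involves $\rho_{x|k}^\top$. The identity $\Tr\ofAlt{A^\top B^\top}=\Tr\ofAlt{BA}$ settles this, and everything else is routine.
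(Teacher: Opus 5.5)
Your proof is correct and follows essentially the paper's argument. A fair classical flag decides which of Bob or Charlie holds the maximally entangled partner of Alice's system and applies the decryption POVM, which gives agreement $1$ via the transpose/Choi identity; the other party merely guesses, which gives agreement $1/2$. The differences are cosmetic: you use tensor-product flag registers and a deterministic output $0$, whereas the paper uses direct-sum spaces with a one-dimensional extra sector and the uniform POVM element $\tfrac12$ there.
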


\begin{proof}
	For any such encryption scheme $\rho_{x|k}$ over some Hilbert space $\Hilbert_\Alice$, there exists by \Cref{def:Correctness} a decryption POVM $\set{D_{x|k}}_x\subseteq\Bounded(\Hilbert_\Alice)$ for every~$k$ such that for all $x$ and $k$:
	$$
		\Tr\ofAlt[\Big]{ D_{x|k}\,\,\rho_{x|k} }
		\,=\,
		1\,.
	$$
	In particular, we have $\Tr\ofAlt[\big]{ D_{x|k}\,\Phi(\rho_{x|k}) }=1$ with $\Phi=\Id:\Hilbert_\Alice \to \Hilbert_{\Alice'}$,
	where $\Hilbert_{\Alice'} = \Hilbert_{\Alice}$.
	We apply the Choi identity stating that $\Tr\ofAlt[\big]{ D_{x|k}\,\Phi(\rho_{x|k}) }=\Tr\ofAlt[\big]{ \of[\big]{\rho_{x|k}^\top\otimes D_{x|k}}\,C_\Phi }$,
	where $C_\Phi = d_\Alice \ketbra{\Omega}{\Omega}_{\Alice\Alice'}$ is the Choi matrix of $\Phi$ (it is the unnormalized maximally entangled rank-one operator in this case), 
	where $\ket{\Omega}_{\Alice\Alice'}\coloneqq\frac{1}{\sqrt{d_\Alice}}\sum_{i=0}^{d_\Alice-1}|i\rangle_{\Alice}\otimes|i\rangle_{\Alice'}$.
	Therefore, we obtain:
	\begin{equation}  \label{eq:expression-of-Omega-A-tensor-D-Omega}
		1
		\,=\,
		\Tr\ofAlt[\Big]{ \of[\big]{\rho_{x|k}^\top\otimes D_{x|k}}\,C_\Phi }
		\,=\,
		\Tr\ofAlt[\Big]{ \of[\big]{\textstyle\frac{2}{d_\Alice}\,A_{x|k}\otimes D_{x|k}}\, d_\Alice \ketbra{\Omega}{\Omega}_{\Alice\Alice'} }
		\,=\,
		2\,\bra{\Omega} \of[\big]{A_{x|k}\otimes D_{x|k}} \ket{\Omega}\,,
	\end{equation}
	where we used the relation between $\rho_{x|k}$ and $A_{x|k}$ from \cref{eq:def_of_Axk_in_terms_of_rho} in the second-to-last equality.
	Now, we explicitly construct a strategy achieving the wanted probability values.
	Consider two copies of the spaces $\Hilbert_{\Alice'} = \Hilbert_{\Alice''} = \Hilbert_\Alice$ and $\Complex_\Bob = \Complex_\Charlie = \Complex$ to keep track of the parties.
	In the respective spaces $\Hilbert_\Bob \coloneqq \Hilbert_{\Alice'}\oplus\Complex_\Bob$ and $\Hilbert_\Charlie \coloneqq \Hilbert_{\Alice''}\oplus\Complex_\Charlie$,
	consider the following POVMs for Bob and Charlie:
	$$
		B_{x|k} 
		\,=\,
		C_{x|k} 
		\,=\,
		D_{x|k} \oplus \frac12\,,
	$$
	which are positive semidefinite and sum to the identities $\Identity_{\Alice'}\oplus1_\Bob$ and $\Identity_{\Alice''}\oplus1_\Charlie$ respectively.
	The product space $\Hilbert_{\Alice\Bob\Charlie}$ is of the following form:
	\begin{align*}
		\Hilbert_{\Alice\Bob\Charlie}
		&\,=\, 
		\Hilbert_{\Alice}\otimes \Hilbert_{\Bob} \otimes \Hilbert_{\Charlie}
		\,=\, 
		\Hilbert_{\Alice}
		\otimes \of[\Big]{\Hilbert_{\Alice'}\oplus\Complex_\Bob } 
		\otimes \of[\Big]{\Hilbert_{\Alice''}\oplus\Complex_\Charlie }
		\\
		&\,\cong\,
		\of[\Big]{\Hilbert_{\Alice}\otimes \Hilbert_{\Alice'} \otimes \Hilbert_{\Alice''}}
		\oplus \of[\Big]{\Hilbert_{\Alice}\otimes \Hilbert_{\Alice'} \otimes \Complex_\Charlie}
		\oplus \of[\Big]{\Hilbert_{\Alice}\otimes \Complex_\Bob \otimes \Hilbert_{\Alice''}}
		\oplus \of[\Big]{\Hilbert_{\Alice}\otimes \Complex_\Bob \otimes \Complex_\Charlie}
		\,,
	\end{align*}
	in which we consider the following quantum state:
	$$
		\sigma_{\Alice\Bob\Charlie}
		\,\coloneqq\,
		\frac{1}{2}\ofAlt[\Big]{
			\zero
			\oplus \of[\Big]{ \ketbra{\Omega}{\Omega}_{\Alice\Alice'}\otimes 1_\Charlie }
			\oplus \of[\Big]{ \ketbra{\Omega}{\Omega}_{\Alice\Alice''}\otimes 1_\Bob }
			\oplus \zero
		}\,.
	$$
	Then, we have:
	\begin{align*}
		\Prob\of[\big]{x_\Alice = x_\Bob}
		&\,=\,
		\frac{1}{K} \sum_{k} \sum_{x} 
		\Tr\ofAlt[\Big]{ 
			\of[\big]{A_{x|k}
			\otimes B_{x|k}
			\otimes \Identity_\Charlie}\,
			\sigma_{\Alice\Bob\Charlie} 
		}
		\\
		&\,=\,
		\frac{1}{K} \sum_{k} \sum_{x} 
		\Tr\ofAlt[\Big]{ 
			\of[\Big]{A_{x|k}
			\otimes \of[\big]{D_{x|k} \oplus (1/2)_\Bob}
			\otimes \of[\big]{\Identity_{\Alice''} \oplus 1_\Charlie}}\,
			\sigma_{\Alice\Bob\Charlie} 
		}
		\\
		&\,=\,
		\frac{1}{K} \sum_{k} 
		\frac{1}{2}
		\Bigg(
		\sum_x
		\Tr\ofAlt[\Big]{ 
			\of[\big]{A_{x|k}
			\otimes D_{x|k} 
			\otimes 1_\Charlie}\,
			\of[\big]{\ketbra{\Omega}{\Omega}_{\Alice\Alice'}\otimes 1_\Charlie}
		}
		\\
		&
		\hspace{2.5cm} 
		+
		\sum_{x} 
		\Tr\ofAlt[\Big]{ 
			\of[\big]{A_{x|k}
			\otimes (1/2)_\Bob 
			\otimes \Identity_{\Alice''}}\,
			\of[\big]{\ketbra{\Omega}{\Omega}_{\Alice\Alice''}\otimes 1_\Bob}
		}
		\Bigg)\,.
	\end{align*}
	Using \cref{eq:expression-of-Omega-A-tensor-D-Omega},
	we get that the first trace is equal to $\bra{\Omega} \of[\big]{A_{x|k}\otimes D_{x|k}} \ket{\Omega} \times \Tr\ofAlt{1_\Charlie} = 1/2$, 
	so the first sum over~$x\in\set{0,1}$ is equal to~$1$.
	As for the second sum, 
	using the POVM normalization property $\sum_x A_{x|k} = \Identity_\Alice$ for any $k$, 
	we have:
	$$
		\sum_x \bra{\Omega} \of[\big]{A_{x|k}\otimes \Identity_{\Alice''}} \ket{\Omega} \times \Tr\ofAlt[\big]{(\frac12)_\Bob \times 1_\Bob}
		\,=\, 
		\bra{\Omega} \of[\big]{\Identity_{\Alice}\otimes \Identity_{\Alice''}} \ket{\Omega} \times \frac{1}{2}
		\,=\, 
		\frac{1}{2}\,.
	$$
	Finally, we obtain:
	$$
		\Prob\of[\big]{x_\Alice = x_\Bob}
		\,=\,
		\frac{1}{K} \sum_{k} 
		\frac{1}{2}
		\of[\Big]{ 1 + \frac{1}{2} }
		\,=\,
		\frac{3}{4}\,,
	$$
	and similarly for $\Prob\of[\big]{x_\Alice = x_\Charlie}$, as wanted.
\end{proof}

\subsection{Universal Lower Bound}
\label{subsec:Universal_Lower_Bound}

For all Pauli Encodings, an attack that perfectly targets one key and guesses otherwise has winning probability at least $1/2 + 1/(2K)$, yielding a trivial universal lower bound on the optimal winning probability.
Here, we improve the exponent of this universal bound by a factor~$2$.
Note that other universal lower bounds exist in the literature, especially in terms of the dimension $d_\Alice$ of the ciphertext, such as $\frac12+\Omega(\frac1{d_\Alice})$ \cite{Majenz-Schaffner-Tahmasbi-21} and its improvement $\frac12+\Omega(\frac1{\sqrt{d_\Alice}})$ \cite{Broadbent-Culf-Rochette-25}.


\begin{proposition}[Universal Lower Bound]
	\label{cor:Universal-Lower-Bound}
	For any $n\in\NN^*$ and any non-empty multiset $\PauliSet^{(n)}$ of elements of $\set{\Identity_2,X,Y,Z}^{\otimes n}$,
    the optimal winning probability is lower-bounded by:
    $$
		\Prob^*\of[\big]{ \text{\normalfont win MoE} }
        \,\geq\,
        \frac{1}{2} + \frac{1}{2\sqrt{K^{(n)}}}
        \,.
    $$
\end{proposition}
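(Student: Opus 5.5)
The plan is to exhibit an explicit, essentially classical strategy for Bob and Charlie and plug it into the exact expression \cref{eq:expression-of-Pwin}. That expression gives
$$
\Prob^*\of[\big]{\text{\normalfont win MoE}}\,=\,\frac14+\frac14\sup_V\lambda_{\max}(T_V).
$$
So it suffices to find Hermitian unitaries $V_P$ with $\lambda_{\max}(T_V)\geq 1+\frac{2}{\sqrt K}$. By \Cref{lem:change_B_and_C_into_unitaries}, any choice of Hermitian unitaries $V_P$, in any dimension $d_\Bob$, is admissible. We take $d_\Bob=d_\Charlie=1$ and $V_P=s\in\set{\pm1}$, the same scalar sign $s$ for every key $P$. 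Operationally, Bob and Charlie always output the same fixed bit, and the state $\sigma$ lives entirely on Alice's side.

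With this choice, each summand of $T_V$ is $P\cdot s+P\cdot s+s^2\,\Identity=2sP+\Identity$. Hence
$$
T_V=\Identity_{2^n}+\frac{2s}{K}\,M,\qquad M\coloneqq\sum_{P\in\PauliSet}P,
$$
where the sum counts multiplicities. Since $M$ is Hermitian, choosing $s$ to be the sign of whichever of $\lambda_{\max}(M)$ and $-\lambda_{\min}(M)$ is larger in absolute value gives
$$
\lambda_{\max}(T_V)=1+\frac{2}{K}\norm{M}_{\mathrm{op}}.
$$
The whole claim therefore reduces to showing $\norm{M}_{\mathrm{op}}\geq\sqrt K$.

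For this key step we bound the operator norm by the normalized Frobenius norm: $\norm{M}_{\mathrm{op}}^2\geq \frac{1}{2^n}\Tr\ofAlt{M^2}$. Expanding,
$$
\frac{1}{2^n}\Tr\ofAlt{M^2}=\sum_{P,Q}\frac{1}{2^n}\Tr\ofAlt{PQ}.
$$
For Pauli strings, $\frac{1}{2^n}\Tr\ofAlt{PQ}$ equals $1$ when $P$ and $Q$ are the same string and $0$ otherwise; this holds even when the identity is present. So the double sum equals $\sum_{R}m_R^2$, where $m_R$ is the multiplicity of the string $R$. This is at least $\sum_R m_R=K$, with equality for genuine sets. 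Hence $\norm{M}_{\mathrm{op}}\geq\sqrt K$.

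Substituting back gives $\lambda_{\max}(T_V)\geq 1+\frac{2}{\sqrt K}$, and therefore $\Prob^*\geq\frac14+\frac14\of[\big]{1+\frac2{\sqrt K}}=\frac12+\frac1{2\sqrt K}$. The only mildly delicate point is the sign choice: $\lambda_{\max}(M)$ alone could be small, for instance if the spectrum of $M$ is skewed negative. This is why we allow $s=-1$, i.e.\ both guessers output the opposite fixed bit. We also use that the Frobenius bound controls $\max\of{\abs{\lambda_{\max}},\abs{\lambda_{\min}}}$ rather than $\lambda_{\max}$ itself.
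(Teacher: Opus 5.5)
Your proposal is correct and follows essentially the same route as the paper. The paper also uses the constant attack $V_P=(-1)^s\Identity$ to reduce the value to $\frac12+\frac{1}{2K}\|\sum_P P\|$, and it bounds this norm below by $\sqrt{K}$ via $\operatorname{Tr}[(\sum_Q m_Q Q)^2]=2^n\sum_Q m_Q^2\geq 2^nK$. The only difference is phrasing: the paper says that some eigenvalue satisfies $\lambda_i^2\geq K$, where you use the normalized Frobenius-norm bound.
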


\begin{proof}
	It is sufficient to show that a value of at least $1/2 + 1/(2\sqrt{K})$ is achieved by one attack. Here, we consider the best attack between $V_P=\Identity$ and $V_P=-\Identity$.
	We have:
    \begin{align}
        \max_{s\in\set{0,1}} \Prob\of[\Big]{ \text{\normalfont $V=(-1)^s\Identity$ wins MoE} } \hspace{-2cm} & \nonumber\\
        &
        \,=\,
		\nonumber
        \frac{1}{4}
        +
        \frac{1}{4}
		\max_{s\in\set{0,1}}
        \lambda_{\max}\of[\Bigg]{
            \frac{1}{K}
            \sum_{P\in\PauliSet}
            \of[\Big]{ 
                (-1)^s\,P \otimes \Identity_{\Bob\Charlie}
                + (-1)^s\,P \otimes \Identity_{\Bob\Charlie}
                + \Identity_{\Alice\Bob\Charlie}
            }
        }
        \\
        &\,=\,
		\nonumber
        \frac{1}{2}
        +
        \frac{1}{2K}
		\max_{s\in\set{0,1}}
        \lambda_{\max}\of[\Bigg]{
            (-1)^s\sum_{P\in\PauliSet} P
		}
		\\
		&\,=\,
		\label{eq:max_s_P_V_I_wins}
		\frac{1}{2}
        +
        \frac{1}{2K}
        \,\operatornorm[\Bigg]{
            \,\sum_{P\in\PauliSet} P\,\,
        }\,,
    \end{align}
	where we used the hermiticity of $\sum_{P} P$ in the last line.
	Denote by $\lambda_{1},\dots,\lambda_{2^n}$ the eigenvalues of $\sum_{P} P$.
	Group the equal Pauli strings and denote by $m_Q$ the multiplicity of each distinct string~$Q$.
	Using the Hilbert--Schmidt orthogonality of distinct Pauli strings, we have:
    $$
		\sum_{i=1}^{2^n} \lambda_i^2
		\,=\,
        \Tr\ofAlt[\Bigg]{ \of[\bigg]{ \sum_{Q} m_Q Q }^2 }
        \,=\,
        2^n\sum_Q m_Q^2
        \,\geq\,
        2^n\sum_Q m_Q
        \,=\,
        2^n\,K\,.
    $$
    Hence, at least one of the $2^n$ eigenvalues satisfies $\lambda_i^2 \geq K$, which implies that\footnote{In the particular case of the anticommuting set $\PauliSet$, this inequality is tight, see \Cref{ex:P_win_for_anticommuting}.} $\operatornorm[\big]{\sum_{P} P}\geq \sqrt{K}$,
	and we obtain the wanted inequality.
\end{proof}

\subsection{Improved Lower Bounds}
\label{subsec:Improved-Lower-Bounds}

We compute the exact value of the winning probability for the attack $V_P=\pm\Identity$ or $V_P=P$ in specific choices of $\PauliSet$ in order to obtain better lower bounds than the universal one from \Cref{cor:Universal-Lower-Bound}.
The results are summarized in the table below.

\begin{example}[Anticommuting Case]
	\label{ex:P_win_for_anticommuting}
	If all pairs $(P,Q)$ of distinct elements in $\PauliSet$ are anticommuting (\ie that $PQ = -QP$),
	then $\sum_{P\neq Q}PQ = \zero$. 
	As a consequence, 
	$
		\of[\big]{ \sum_{P} P }^2
        \,=\,
        \sum_{P} P^2 + \sum_{P\neq Q} P\,Q 
        \,=\,
        K\,\,\Identity
	$
	and all the eigenvalues $\lambda_i$ of $\sum_P P$ satisfy $\lambda_i^2 = K$.
        Hence, $\operatornorm{\sum_{P} P}= \sqrt{K}$ and we obtain from \cref{eq:max_s_P_V_I_wins} that:
	$$
		\max_{s\in\set{0,1}}
		\Prob\of[\Big]{ \text{\normalfont $V_P=(-1)^s\Identity$ wins MoE} \,\Big|\,\text{$\PauliSet^{(n)}$ anticommuting} }
		\,=\,
		\frac{1}{2} + \frac{1}{2\sqrt{K^{(n)}}}
		\,,
	$$
	which coincides with the lower bound from \Cref{cor:Universal-Lower-Bound}.
\end{example}

\begin{example}[Commuting Case]
	\label{ex:P_win_for_commuting}
	If all pairs of elements in $\PauliSet$ are commuting (\ie if $PQ = QP$ for all $P,Q\in \PauliSet$), 
	then they are all diagonalizable in a common basis.
	We know that the eigenvalues of a Pauli string are $\pm1$.
	Suppose that all the Pauli strings $P\in \PauliSet$ admit a common $+1$ eigenvalue in this common basis (for instance, consider $\PauliSet^{(n)} = \set{\Identity_2,Z}^{\otimes n}$).
	Then
	$
		\lambda_{\max}\of[\big]{
			\,\sum_{P} P\,\,
		}
		\,=\,
		\sum_{P\in\PauliSet} 1
		\,=\,
		K
	$.
	Thus, from \cref{eq:max_s_P_V_I_wins}, we obtain:
	$$
		\Prob\of[\Big]{ \text{\normalfont $V_P=\Identity$ wins MoE} \,\Big|\,\text{$\PauliSet^{(n)}$ commuting with common $+1$ eigenvalue} }
		\,=\,
		1
		\,,
	$$
	and the extreme two cases of
	$
		1/2 + 1/(2\sqrt{K})
		\leq
		\max_s
		\Prob\of[\big]{ \text{\normalfont $V=(-1)^s\Identity$ wins MoE} }
		\leq
		1
	$
	are achieved by the anticommuting and the commuting cases respectively.
\end{example}

\begin{example}[All Pauli Strings]
	\label{ex:P_win_for_all_Pauli_strings}
	Consider $\PauliSet^{(n)}=\set{\Identity_2,X,Y,Z}^{\otimes n}$. Then, we have:
	\begin{equation}
		\label{eq:sum_of_P}
		\sum_{P\in\PauliSet^{(n)}} P
		\,=\,
		\sum_{P_1,\ldots,P_n\in\set{\Identity_2,X,Y,Z}}\,\bigotimes_{i=1}^n P_i
		\,=\,
		\bigotimes_{i=1}^n \of[\Bigg]{\sum_{P_i\in\set{\Identity_2,X,Y,Z}} P_i}
		\,=\,
		\begin{bmatrix}
			2 & 1-i \\
			1+i & 0
		\end{bmatrix}^{\otimes n}\,.
	\end{equation}
	So when $V_P=\Identity$, we have:
	\begin{align*}
		T_V
		&
		\,=\,
		\frac{2}{4^n}
		\of[\Bigg]{\sum_{P\in\PauliSet} P} \otimes \Identity_{\Bob\Charlie}
		+ \Identity_{\Alice\Bob\Charlie}
		\,=\,
		\frac{2}{4^n}
		\begin{bmatrix}
			2 & 1-i \\
			1+i & 0
		\end{bmatrix}^{\otimes n} \otimes \Identity_{\Bob\Charlie}
		+ \Identity_{\Alice\Bob\Charlie}\,.
	\end{align*}
	Therefore, the maximal eigenvalue is:
	\begin{align*}
		\lambda_{\max}\of[\big]{ T_V }
		\,=\,
		\frac{2}{4^n}\,
		\lambda_{\max}\of[\Bigg]{
		\begin{bmatrix}
			2 & 1-i \\
			1+i & 0
		\end{bmatrix}}^{n} 
		+ 1
		\,=\,
		2\, \of[\bigg]{ \frac{1+\sqrt{3}}{4} }^n +1\,,
	\end{align*}
	and the winning probability is:
	$$
		\Prob\of[\Big]{ \text{\normalfont $V_P=\Identity$ wins MoE}\,\Big|\,\PauliSet^{(n)} = \set{\Identity_2,X,Y,Z}^{\otimes n} }
		\,=\,
		\frac{1}{2} + \frac{1}{2}\,\of[\bigg]{ \frac{1+\sqrt{3}}{4} }^n
		\,.
	$$
	Let us compute another example, when $V_P=P$. 
	With similar computations as in \cref{eq:sum_of_P}, we obtain that 
	$\sum_{P\in\PauliSet} P\otimes P 
		\,=\, 
		2^n\, F^{\otimes n}$ where $F =
	\begin{bsmallmatrix}
		1 & 0 & 0 & 0 \\
		0 & 0 & 1 & 0 \\
		0 & 1 & 0 & 0 \\
		0 & 0 & 0 & 1 \\
	\end{bsmallmatrix}$ 
	is the flip operator (also known as the SWAP operator).
	As a consequence:
	$$
		T_V
		\,=\,
		\frac{1}{4^n}
		\sum_{P\in\PauliSet}
		\of[\Big]{ 
			P \otimes P \otimes \Identity_\Charlie
			+ P \otimes \Identity_{\Bob} \otimes P
			+ \Identity_{\Alice} \otimes P \otimes P
		}
		\,=\,
		\frac{1}{2^n}
		\of[\bigg]{
			F_{\Alice\Bob}^{\otimes n}\otimes \Identity_\Charlie
			+ F_{\Alice\Charlie}^{\otimes n}\otimes \Identity_\Bob
			+ \Identity_\Alice \otimes F_{\Bob\Charlie}^{\otimes n}
		}\,.
	$$
	On the one hand, by subadditivity of the maximal eigenvalue, one can see that:
	$$
		\lambda_{\max}(T_V) 
		\,\leq\, 
		\frac{3}{2^n}\,{\lambda_{\max}\of[\big]{F^{\otimes n}\otimes \Identity} } 
		\,=\, 
		\frac{3}{2^n}\,.
	$$
	On the other hand, one can observe that the value $3/2^n$ belongs to the spectrum of $T_V$: for instance, consider the eigenvector $\ket{1}^{\otimes 3n}$. 
	Hence $\lambda_{\max}(T_V) = 3/2^n$ and we have:
	$$
		\Prob\of[\Big]{ \text{\normalfont $V_P=P$ wins MoE}\,\Big|\,\PauliSet^{(n)} = \set{\Identity_2,X,Y,Z}^{\otimes n} }
		\,=\,
		\frac{1}{4} + \frac{3}{2^{n+2}}
		\,.
	$$
\end{example}

In a similar way as in \Cref{ex:P_win_for_all_Pauli_strings}, one can compute the winning probability of the attack $V_P=\Identity$ for $\set{\Identity_2,X,Z}^{\otimes n}$ and for the augmented HZH multiset $\set[\big]{\Identity_2,\Identity_2,X,Z}^{\otimes n}$, yielding the result displayed in the table below.
Moreover, for $\set{X,Z}^{\otimes n}$, we know from \Cref{prop:Non_Security_of_the_BB84_Encoding} that there is a much better attack in this case, achieving the constant value $1/2+1/(2\sqrt{2})$ for all~$n$.
Hence, by collecting all the results,
we obtain the following expressions:
$$
	\begin{array}{ccc}
		\text{\bf Pauli Set $\PauliSet$} 
		& \text{\bf Universal Lower Bound}
		& \text{\bf Improved Lower Bound}
		\\
		\hline
		\text{Commuting} & \frac{1}{2} + \frac{1}{2\sqrt{K}} & 1
		\\
		\set{X,Z}^{\otimes n} & \frac{1}{2} + \frac{1}{2\sqrt{2}^n} & \frac{1}{2} + \frac{1}{2\sqrt{2}}
		\\
		\text{Anticommuting} & \frac{1}{2} + \frac{1}{2\sqrt{K}} & \frac{1}{2} + \frac{1}{2\sqrt{K}}
		\\
		\set[\big]{\Identity_2,\Identity_2,X,Z}^{\otimes n} & \frac{1}{2} + \frac{1}{2\times2^n} & \frac{1}{2} + \frac{1}{2}\,\of[\big]{ \frac12 + \frac{1}{2\sqrt{2}} }^n
		\\
		\set{\Identity_2,X,Z}^{\otimes n} & \frac{1}{2} + \frac{1}{2\sqrt{3}^n} & \frac{1}{2} + \frac{1}{2}\,\of[\big]{ \frac{1+\sqrt{2}}{3} }^n
		\\
		\set{\Identity_2,X,Y,Z}^{\otimes n} & \frac{1}{2} + \frac{1}{2\times2^n} & \frac{1}{2} + \frac{1}{2}\,\of[\big]{ \frac{1+\sqrt{3}}{4} }^n
	\end{array}
$$
As discussed in \Cref{subsec:Security_Analysis_of_some_Efficient_Protocols}, none of our numerical simulations allowed us to obtain higher lower bounds than those presented in the right column, giving the intuition that they are optimal.

\subsection{Union of Two Pauli Sets}

Here is the relation between the optimal winning probability for $\PauliSet^{(n)}\sqcup\MPauliSet^{(n)}$ and the one for $\PauliSet^{(n)}$:

\begin{lemma}
	\label{prop:Pwin-of-P_n-union-Q_n}
	Consider an integer $n\in\NN^*$ and two disjoint multisets $\PauliSet^{(n)},\MPauliSet^{(n)} \subseteq \set{\Identity_2,X,Y,Z}^{\otimes n}$. 
	Assume that $\Tr\ofAlt[\Big]{ \of[\big]{\sum_{P\in\MPauliSet^{(n)}}P\otimes \Identity_{\Bob\Charlie}}\sigma_{\Alice\Bob\Charlie} } = |\MPauliSet^{(n)}|$ for any quantum state $\sigma_{\Alice\Bob\Charlie}$.
	Then:
	\begin{align*}
		\,\,\,\Prob^*\of[\Big]{
			\text{\normalfont win MoE} 
			\,\Big| 
			\PauliSet^{(n)}\sqcup\MPauliSet^{(n)}
		}
		\,=\,
		\frac{\abs{\PauliSet^{(n)}}}{\abs{\PauliSet^{(n)}} + \abs{\MPauliSet^{(n)}}}\,
		\Prob^*\of[\Big]{
			\text{\normalfont win MoE} 
			\,\Big| 
			\PauliSet^{(n)}
		}
		+
		\frac{\abs{\MPauliSet^{(n)}}}{\abs{\PauliSet^{(n)}} + \abs{\MPauliSet^{(n)}}}\,.
	\end{align*}
	In particular, taking $\MPauliSet^{(n)}=\set[\big]{\Identity_{2^n}}$, we have:
	$$
		\Prob^*\of[\Big]{
			\text{\normalfont win MoE} 
			\,\Big| 
			\PauliSet^{(n)}\sqcup\set[\big]{\Identity_{2^n}}
		}
		\,=\,
		\frac{\abs{\PauliSet^{(n)}}}{\abs{\PauliSet^{(n)}} + 1}\,
		\Prob^*\of[\Big]{
			\text{\normalfont win MoE} 
			\,\Big| 
			\PauliSet^{(n)}
		}
		+
		\frac{1}{\abs{\PauliSet^{(n)}} + 1}
		\,.
	$$
\end{lemma}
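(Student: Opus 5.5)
The plan is to prove the affine identity directly from the definition of the MoE winning probability. The idea is to split the sum over keys into the part indexed by $\PauliSet^{(n)}$ and the part indexed by $\MPauliSet^{(n)}$, and to show that the second part can always be made to contribute its maximal value $|\MPauliSet^{(n)}|$ at no cost to the first part.

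Write $K_1=\abs{\PauliSet^{(n)}}$ and $K_2=\abs{\MPauliSet^{(n)}}$. For any strategy $(\sigma,B,C)$, the winning probability for the union equals
\[
\frac{1}{K_1+K_2}\Bigl(\sum_{P\in\PauliSet}\sum_x \Tr\bigl[(A_{x|P}\otimes B_{x|P}\otimes C_{x|P})\sigma\bigr]+\sum_{Q\in\MPauliSet}\sum_x \Tr\bigl[(A_{x|Q}\otimes B_{x|Q}\otimes C_{x|Q})\sigma\bigr]\Bigr).
\]
The first inner sum equals $K_1$ times the winning probability of $(\sigma,B,C)$ in the game for $\PauliSet^{(n)}$. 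Each term of the second sum is at most $1$, since $\sum_x A_{x|Q}\otimes B_{x|Q}\otimes C_{x|Q}\preccurlyeq \Identity$. Taking the supremum gives the inequality ``$\leq$''.

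For ``$\geq$'', I take any strategy $(\sigma,B,C)$ for $\PauliSet^{(n)}$ and extend it to $\MPauliSet^{(n)}$ by letting Bob and Charlie, on keys $Q\in\MPauliSet$, output $x=0$ deterministically, i.e.\ $B_{0|Q}=C_{0|Q}=\Identity$. The strategy on the keys in $\PauliSet$ is unchanged, since keys are distinct labels and the choice of measurement is key-dependent.

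The main step is to use the hypothesis. The condition $\Tr[(\sum_Q Q\otimes\Identity)\sigma]=K_2$ for every state forces $\Tr[(Q\otimes\Identity)\sigma]=1$ for each $Q$, since each term is at most $1$. Hence $\Tr[(A_{0|Q}\otimes\Identity\otimes\Identity)\sigma]=\tfrac12(1+1)=1$ for every $Q$. Holding for all states, this in effect means $Q=\Identity_{2^n}$, which is the intended case. So each $Q$-term contributes exactly $1$, and the winning probability becomes
\[
\frac{K_1\,p+K_2}{K_1+K_2},
\]
where $p$ is the winning probability of the original strategy for $\PauliSet^{(n)}$. Taking the supremum over strategies gives the equality.

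The only subtlety I expect is the outcome labeling: the hypothesis selects eigenvalue $+1$, so the deterministic answer must be $x=0$. The identity-key term must also be treated via POVMs $A_{x|\Identity}$ rather than states, which is legitimate by the remark on sets containing the identity. The particular case $\MPauliSet^{(n)}=\set{\Identity_{2^n}}$ then follows immediately, since $\Tr[(\Identity\otimes\Identity)\sigma]=1$ for every state.
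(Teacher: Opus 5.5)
Your proposal is correct and follows essentially the same route as the paper. For ``$\geq$'', the paper likewise extends a strategy for $\PauliSet^{(n)}$ by letting Bob and Charlie deterministically answer $x=0$ on the keys of $\MPauliSet^{(n)}$, using the hypothesis so that those keys contribute exactly $\abs{\MPauliSet^{(n)}}$; for ``$\leq$'', it restricts a strategy for the union to $\PauliSet^{(n)}$ and bounds each extra key's contribution by $1$ via $A_{x|Q}\succcurlyeq\zero$, $B_{x|Q},C_{x|Q}\preceq\Identity$ and $\sum_x A_{x|Q}=\Identity$. The only cosmetic difference is that you apply the hypothesis key by key (forcing each $\operatorname{Tr}[(Q\otimes\Identity)\sigma]=1$), whereas the paper applies it once to the summed operator $\sum_{Q}A_{0|Q}\otimes\Identity$.
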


\begin{proof}
	We show the equality by proving inequalities in both ways.
	On the one hand, consider some strategy $\of[\big]{\sigma_{\Alice\Bob\Charlie}, \set{B_{x|P}}, \set{C_{x|P}}}$ for the MoE game associated with~$\PauliSet$. 
	We construct the following strategy for the MoE game associated with $\PauliSet\sqcup\MPauliSet$:
	\begin{align*}
		\sigma_{\Alice\Bob\Charlie}'
		\,\coloneqq\,& 
		\sigma_{\Alice\Bob\Charlie}
		\,,\qquad
		B_{x|P}'
		\,\coloneqq\,
		\left\{
		\begin{array}{cl}
			B_{x|P} & \text{if $P\in\PauliSet$,} \\
			\delta_{x=0}\,\Identity_\Bob & \text{if $P\in\MPauliSet$,}
		\end{array}
		\right.
		\,,\qquad
		C_{x|P}'
		\,\coloneqq\,
		\left\{
		\begin{array}{cl}
			C_{x|P} & \text{if $P\in\PauliSet$,} \\
			\delta_{x=0}\,\Identity_\Charlie & \text{if $P\in\MPauliSet$,}
		\end{array}
		\right.
	\end{align*}
	for any $P\in \PauliSet\sqcup\MPauliSet$.
	They form a valid state and valid POVMs.
	We have:
	\begin{align*}
		\Prob^*\of[\Big]{
			\text{\normalfont win MoE} 
			\,\Big| 
			\PauliSet\sqcup\MPauliSet
		}
		\,\geq\,\,&
		\Prob\of[\Big]{
			\text{\normalfont $\of[\big]{\sigma'_{\Alice\Bob\Charlie}, \set{B'_{x|P}}, \set{C'_{x|P}}}$ wins MoE} 
			\,\Big| 
			\PauliSet\sqcup\MPauliSet
		}
		\\
		\,=\,&
		\frac{1}{\abs[\big]{\PauliSet\sqcup\MPauliSet}} 
		\sum_{x\in\set{0,1}} 
		\sum_{P\in\PauliSet\sqcup\MPauliSet} 
		\Tr\ofAlt[\Bigg]{\of[\Big]{
			A_{x|P}^{(n)}
			\otimes {B'}_{x|P} 
			\otimes {C'}_{x|P}
		} 
		\sigma'_{\Alice\Bob\Charlie} }
		\\
		\,=\,&
		\frac{\abs[\big]{\PauliSet}}{\abs[\big]{\PauliSet\sqcup\MPauliSet}} 
		\times
		\frac{1}{K} 
		\sum_{x\in\set{0,1}} 
		\sum_{P\in\PauliSet} 
		\Tr\ofAlt[\Bigg]{\of[\Big]{
			A_{x|P}^{(n)}
			\otimes {B}_{x|P} 
			\otimes {C}_{x|P}
		} 
		\sigma_{\Alice\Bob\Charlie} }
		\\
		&
		+\,
		\frac{1}{\abs[\big]{\PauliSet\sqcup\MPauliSet}} 
		\sum_{x\in\set{0,1}} 
		\sum_{P\in\MPauliSet} 
		\Tr\ofAlt[\Bigg]{
		\delta_{x=0}
		\of[\Big]{
			A_{x|P}^{(n)}
			\otimes \Identity_\Bob
			\otimes \Identity_\Charlie
		} 
		\sigma_{\Alice\Bob\Charlie} }\,.
	\end{align*}
	In the second-to-last line, we recognize the expression of the winning probability of $\of[\big]{\sigma_{\Alice\Bob\Charlie}, \set{B_{x|P}}, \set{C_{x|P}}}$ for the $\PauliSet$-MoE game.
	Let us compute the last sum. 
	The summation is non-zero only when $x=0$. 
	It only remains $\Tr\ofAlt[\Big]{ \of[\big]{\sum_{P\in\MPauliSet}A_{0|P}^{(n)}\otimes \Identity_{\Bob\Charlie}}\sigma_{\Alice\Bob\Charlie} }$, which is equal to $\abs{\MPauliSet}$ by assumption.
	Therefore, we have:
	$$
		\Prob^*\of[\Big]{
			\text{\normalfont win MoE} 
			\,\Big| 
			\PauliSet\sqcup\MPauliSet
		}
		\,\geq\,
		\frac{\abs[\big]{\PauliSet}}{\abs[\big]{\PauliSet\sqcup\MPauliSet}}\,\,
		\Prob\of[\Big]{
			\text{\normalfont $\of[\big]{\sigma_{\Alice\Bob\Charlie}, \set{B_{x|P}}, \set{C_{x|P}}}$ wins MoE} 
			\,\Big| 
			\PauliSet
		}
		+ \frac{\abs{\MPauliSet}}{\abs[\big]{\PauliSet\sqcup\MPauliSet}} 
		\,,
	$$
	and we conclude by taking the supremum over all possible strategies $\of[\big]{\sigma_{\Alice\Bob\Charlie}, \set{B_{x|P}}, \set{C_{x|P}}}$ for the $\PauliSet$-MoE game, obtaining the first inequality as wanted.

	On the other hand, consider some strategy $\of[\big]{\sigma_{\Alice\Bob\Charlie}, \set{B_{x|P}}, \set{C_{x|P}}}$ for the $\of[\big]{\PauliSet\sqcup\MPauliSet}$-MoE game.
	We construct the following strategy for the $\PauliSet$-MoE game:
	$$
		\sigma_{\Alice\Bob\Charlie}'
		\,\coloneqq\,
		\sigma_{\Alice\Bob\Charlie}
		\,,\quad
		B_{x|P}'
		\,\coloneqq\,
		B_{x|P}
		\,,\quad
		C_{x|P}'
		\,\coloneqq\,
		C_{x|P}\,,
	$$
	for any $P\in \PauliSet$.
	They form a valid state and valid POVMs.
	We have:
	\begin{align*}
		\Prob^*\of[\Big]{
			\text{\normalfont win MoE} 
			\,\Big| 
			\PauliSet
		}
		\,\geq\,\,&
		\Prob\of[\Big]{
			\text{\normalfont $\of[\big]{\sigma'_{\Alice\Bob\Charlie}, \set{B'_{x|P}}, \set{C'_{x|P}}}$ wins MoE} 
			\,\Big| 
			\PauliSet
		}
		\\
		\,=\,&
		\frac{1}{\abs[\big]{\PauliSet}} 
		\sum_{x\in\set{0,1}} 
		\sum_{P\in\PauliSet} 
		\Tr\ofAlt[\Bigg]{\of[\Big]{
			A_{x|P}^{(n)} 
			\otimes {B'}_{x|P} 
			\otimes {C'}_{x|P}
		} 
		\sigma'_{\Alice\Bob\Charlie} }
		\\
		\,=\,&
		\frac{\abs[\big]{\PauliSet\sqcup\MPauliSet}}{\abs[\big]{\PauliSet}}
		\times
		\frac{1}{\abs[\big]{\PauliSet\sqcup\MPauliSet}} 
		\sum_{x\in\set{0,1}} 
		\sum_{P\in\PauliSet\sqcup\MPauliSet}
		\Tr\ofAlt[\Bigg]{\of[\Big]{
			A_{x|P}^{(n)} 
			\otimes {B}_{x|P} 
			\otimes {C}_{x|P}
		} 
		\sigma_{\Alice\Bob\Charlie} }
		\\
		&
		-\,
		\frac{1}{\abs[\big]{\PauliSet}} 
		\sum_{x\in\set{0,1}} 
		\sum_{P\in\MPauliSet} 
		\Tr\ofAlt[\Bigg]{\of[\Big]{
			A_{x|P}^{(n)} 
			\otimes {B}_{x|P} 
			\otimes {C}_{x|P}
		} 
		\sigma_{\Alice\Bob\Charlie} }\,.
	\end{align*}
	In the second-to-last line, we recognize the expression of the winning probability of $\of[\big]{\sigma_{\Alice\Bob\Charlie}, \set{B_{x|P}}, \set{C_{x|P}}}$ for the $\of[\big]{\PauliSet\sqcup\MPauliSet}$-MoE game.
	As for the last line, using that $A_{x|P}^{(n)} \succeq\zero$, and ${B}_{x|P}\preceq\Identity_\Bob$, and ${C}_{x|P}\preceq\Identity_\Charlie$ for all $x$ and $P$, we can lower bound it by:
	$$
		-\,
		\frac{1}{\abs[\big]{\PauliSet}} 
		\sum_{x\in\set{0,1}} 
		\sum_{P\in\MPauliSet} 
		\Tr\ofAlt[\Bigg]{\of[\Big]{
			A_{x|P}^{(n)} 
			\otimes \Identity_\Bob
			\otimes \Identity_\Charlie
		} 
		\sigma_{\Alice\Bob\Charlie} }
		\,=\,
		-\frac{\abs[\big]{\MPauliSet}}{\abs[\big]{\PauliSet}} 
		\,.
	$$
	Therefore, we obtain:
	$$
		\Prob^*\of[\Big]{
			\text{\normalfont win MoE} 
			\,\Big| 
			\PauliSet
		}
		\,\geq\,
		\frac{\abs[\big]{\PauliSet\sqcup\MPauliSet}}{\abs[\big]{\PauliSet}}\,\,
		\Prob\of[\Big]{
			\text{\normalfont $\of[\big]{\sigma_{\Alice\Bob\Charlie}, \set{B_{x|P}}, \set{C_{x|P}}}$ wins MoE} 
			\,\Big| 
			\PauliSet\sqcup\MPauliSet
		}
		- \frac{\abs[\big]{\MPauliSet}}{\abs[\big]{\PauliSet}}
		\,,
	$$
	which gives exactly the other wanted inequality after taking the supremum over all possible strategies $\of[\big]{\sigma_{\Alice\Bob\Charlie}, \set{B_{x|P}}, \set{C_{x|P}}}$ for the $\of[\big]{\PauliSet\sqcup\MPauliSet}$-MoE game.
	Hence the equality.
\end{proof}

\section{Towards Unconditional Security}

\subsection{The BB84 Encoding is not Secure}

The BB84 Encoding is defined in \Cref{ex:BB84_Encoding} as a generalisation of the BB84 one-qubit protocol to any number $n$ of qubits.
It is precisely the ``XOR repetition'' monogamy-of-entanglement game studied in \cite[Theorem~3]{Coladangelo-Liu-Xie-25}, where the exact winning probability stated below was proved for every~$n$ and an optimal semi-classical strategy was exhibited.
We give below a self-contained proof in the Pauli-Encoding formalism.

\begin{proposition}[Non-Security of the BB84 Encoding]
	\label{prop:Non_Security_of_the_BB84_Encoding}
	The BB84 Encoding satisfies:
	$$
		\forall n\in\NN^*,\qquad
		\Prob^*\of[\Big]{
			\text{\normalfont win MoE}
			\,\Big|\,
			\KBBeightyfour^{(n)}
		}
		=
		\cos^2\of[\Big]{\frac{\pi}{8}}
		=
		\frac12+\frac{1}{2\sqrt{2}}\,.
	$$
	In particular, the BB84 Encoding is neither weakly nor strongly secure.
\end{proposition}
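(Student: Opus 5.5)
The plan is to prove the upper bound by splitting the game into two-key subgames, and the lower bound by an explicit attack built from the commutation structure of $\KBBeightyfour^{(n)}$. Write $P_\theta\coloneqq H^\theta Z^{\otimes n}H^\theta$ for $\theta\in\set{0,1}^n$, so $K=2^n$. Two strings $P_\theta,P_{\theta'}$ differ exactly on the positions where $\theta\neq\theta'$, and at each such position one factor is $X$ and the other is $Z$. Hence $P_\theta$ and $P_{\theta'}$ commute iff the Hamming distance $\abs{\theta\oplus\theta'}$ is even and anticommute iff it is odd. In particular the keys split into an even-parity class and an odd-parity class. Each class is pairwise commuting, and any two keys from different classes anticommute, much like the two bases $Z$ and $X$ of one-qubit BB84.

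\textbf{Upper bound.} Pair the keys using the involution $\theta\mapsto\theta\oplus e_1$ (flip the first bit). Each pair $\set{P_\theta,P_{\theta\oplus e_1}}$ consists of two anticommuting Pauli strings. Fix any strategy $(\sigma,\set{B_{x|P}},\set{C_{x|P}})$. Its winning probability is the uniform average, over the $2^{n-1}$ pairs, of its winning probability in the two-key MoE game defined by that pair, because the state and POVMs are the same in every subgame. It therefore suffices to bound every two-key game with anticommuting $P,Q$ by $\cos^2(\pi/8)$. This follows from the overlap bound of \cite{Tomamichel-Fehr-Kaniewski-Wehner-13} for two measurements, $\Prob\leq\frac12+\frac{c}{2}$ with $c=\max_{x,x'}\operatornorm{A_{x|P}A_{x'|Q}}$. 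For anticommuting $P,Q$ one has $c=1/\sqrt2$: indeed $(A_{x|P}A_{x'|Q}A_{x|P})=\tfrac12A_{x|P}$, since the cross terms vanish under anticommutation. Alternatively, by Jordan's lemma $P,Q$ block-diagonalize into copies of $Z,X$, which reduces the claim to one-qubit BB84. Taking the supremum over strategies gives the upper bound.

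\textbf{Lower bound.} I will produce an attack with value exactly $\cos^2(\pi/8)$ by reducing to one-qubit BB84. I look for a Clifford unitary $U$ on $n$ qubits and a fixed $(n-1)$-qubit state $\ket{\psi}$ such that for every $\theta$
$$
U P_\theta U^\dagger = s_\theta\, R_{\pi(\theta)}\otimes S_\theta,
\qquad S_\theta\ket\psi=\ket\psi,
\qquad R_0=Z,\ R_1=X,
$$
where $\pi(\theta)=\abs{\theta}\bmod 2$ and $s_\theta\in\set{\pm1}$ is a sign known from $\theta$. Such a $U$ should exist because the even class generates an abelian group and the odd class equals the even class multiplied by one fixed anticommuting element. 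For instance, for $n=2$ the strings $XX,ZZ$ stabilize $\ket{\Phi^+}$, while $XZ,ZX$ act on it as an anticommuting pair. Applying this inductively, pairing qubits into Bell-type stabilizers, should give $U$ and $\ket\psi$ explicitly. The attack then prepares $U^\dagger(\ket{\phi}\otimes\ket\psi)$ on Alice's side, where $\ket\phi$ is the optimal one-qubit BB84 MoE state. Bob and Charlie use the optimal one-qubit strategy for basis $\pi(\theta)$ with outcomes relabelled by $s_\theta$. Every key then reduces to the BB84 game, and the winning probability is $\cos^2(\pi/8)$.

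The main obstacle is the explicit construction of $U$ and $\ket\psi$ for general $n$ (including odd $n$), and the check that $S_\theta$ really stabilizes $\ket\psi$ for all $2^n$ keys. If this fails, the fallback is to lift the known semi-classical strategy of \cite[Theorem~3]{Coladangelo-Liu-Xie-25} directly. Non-security then follows at once from \Cref{def:Security} via \cref{eq:Cloning_upper_bounded_by_MoE}, since the MoE value stays constant in $n$ and the cloning game inherits this attack.
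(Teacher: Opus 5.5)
\paragraph{Upper bound.} Your upper bound is correct and takes a different route from the paper. The paper first proves $p_{n+1}\leq p_n$: Alice measures her last qubit in a uniformly random basis $Q\in\set{X,Z}$, the pair $(b,Q)$ is broadcast, and the identity $A^{(n+1)}_{x|P\otimes Q}=\sum_b A^{(n)}_{x\oplus b|P}\otimes A^{(1)}_{b|Q}$ is used. It then applies the overlap bound only at $n=1$. You instead split the key average into the $2^{n-1}$ two-key games $\set{P_\theta,P_{\theta\oplus e_1}}$ and apply the overlap bound to each anticommuting pair. Your identity $A_{x|P}A_{x'|Q}A_{x|P}=\frac12A_{x|P}$ correctly gives $c=1/\sqrt2$. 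Your argument is non-inductive and works verbatim for any key multiset that has a perfect matching into anticommuting pairs. The paper's argument additionally shows that $p_n$ is nonincreasing.

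\paragraph{Lower bound.} The lower bound, however, is not established, and it is the half that gives non-security. The existence of $U$ and $\ket{\psi}$ is its entire content. You only argue that they should exist, leave the Bell-pair induction and odd $n$ open, and fall back on a citation.

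Your reduction is sound, and it can be completed without constructing $U$. For $a\in\set{0,1}^n$ put $Y^a\coloneqq Y^{a_1}\otimes\cdots\otimes Y^{a_n}$.
\begin{itemize}
\item Since $Y^aY^b=Y^{a\oplus b}$, the set $\mathcal{S}\coloneqq\set{Y^a:\abs{a}\text{ even}}$ is an abelian group of Hermitian involutions not containing $-\Identity$. Hence its joint $+1$ eigenspace $\mathcal{C}$ has dimension $2^n/2^{n-1}=2$.
\item The keys $\bar Z\coloneqq P_{0^n}=Z^{\otimes n}$ and $\bar X\coloneqq P_{e_1}=X\otimes Z^{\otimes(n-1)}$ commute with $\mathcal{S}$ and anticommute with each other. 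So they act on $\mathcal{C}$ as the $Z$ and $X$ of a qubit.
\item A direct check gives $P_\theta=\pm\bar Z\,Y^{\theta}$ when $\abs{\theta}$ is even (indeed $\bar Z Y^\theta=(-1)^{\abs{\theta}/2}P_\theta$), and $P_\theta=\pm\bar X\,Y^{\theta\oplus e_1}$ when $\abs{\theta}$ is odd. The signs are real because the $Y$-weight is even.
\end{itemize}
Let $\ket{\bar 0}\in\mathcal{C}$ satisfy $\bar Z\ket{\bar0}=\ket{\bar0}$, and set $\ket{\Psi}\coloneqq\cos(\pi/8)\ket{\bar0}+\sin(\pi/8)\bar X\ket{\bar0}$. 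Then $\bra{\Psi}P_\theta\ket{\Psi}=\pm1/\sqrt2$ for every $\theta$, with a sign computable from $\theta$. If Bob and Charlie output that sign, they win with probability $\cos^2(\pi/8)$.

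This $\ket{\Psi}$ is exactly your $U^\dagger(\ket{\phi}\otimes\ket{0}^{\otimes(n-1)})$. It also coincides, up to a global phase, with the paper's explicit state $\ket{w_n}$, whose even- and odd-parity parts are precisely $\cos(\pi/8)\ket{\bar0}$ and $\sin(\pi/8)\bar X\ket{\bar0}$.

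\paragraph{Transfer to the cloning game.} \Cref{eq:Cloning_upper_bounded_by_MoE} only bounds the cloning game from above, so it cannot give non-security by itself. Your claim that the cloning game inherits the attack therefore needs an argument; the paper's proof is also silent on this step. One argument works directly:
\begin{itemize}
\item The pirate measures the ciphertext with the POVM $\set{2^{-n}Q\ketbra{\Psi}{\Psi}Q}_{Q\in\set{\Identity_2,X,Y,Z}^{\otimes n}}$, which sums to $\Identity$ by Pauli twirling, and sends $Q$ to both parties.
\item Bob and Charlie output the bit whose sign is that of $\bra{\Psi}P_\theta\ket{\Psi}$ times the commutation sign of $Q$ and $P_\theta$.
\end{itemize}
This wins the cloning game with probability $\cos^2(\pi/8)$ for every $n$.
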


\begin{proof}
	Let
	$
		p_n
		\coloneqq
		\Prob^*\of[\big]{
			\text{\normalfont win MoE}
			\,\big|\,
			\KBBeightyfour^{(n)}
		}
	$.
	We first prove that this is a decreasing sequence, \ie that $p_{n+1}\leq p_n$.
	Recall the induction relation
	$
		\KBBeightyfour^{(n+1)}
		=
		\KBBeightyfour^{(n)}\otimes\set{X,Z}
	$.
	For every $P\in\KBBeightyfour^{(n)}$, $Q\in\set{X,Z}$, and $x\in\set{0,1}$, one has:
	\begin{equation}
		\label{eq:BB84-projector-recursion}
		A_{x|P\otimes Q}^{(n+1)}
		\,=\,
		\frac{\Identity+(-1)^xP\otimes Q}{2}
		\,=\,
		\frac14
		\sum_{b\in\set{0,1}}
		\of[\big]{\Identity+(-1)^{x\oplus b}P}
		\otimes
		\of[\big]{\Identity+(-1)^bQ}
		\,=\,
		\sum_{b\in\set{0,1}}
		A_{x\oplus b|P}^{(n)}
		\otimes
		A_{b|Q}^{(1)}\,.
	\end{equation}
	Now, consider any strategy 
	$
		\of[\big]{
			\sigma,
			\set{B_{x|P\otimes Q}},
			\set{C_{x|P\otimes Q}}
		}
	$
	for the $(n+1)$-qubit game.
	We construct a strategy for the $n$-qubit game as follows.
	Before the game, choose $Q\in\set{X,Z}$ uniformly, measure the last qubit of Alice with the projective measurement $\set{A_{b|Q}^{(1)}}_b$, and give the classical pair $(b,Q)$ to both Bob and Charlie.
	Upon receiving the key $P$, they use their original measurements for the key $P\otimes Q$ and flip their outputs by~$b$.
	Denote by $B'_{x'|P}$ and $C'_{x'|P}$ the resulting POVMs.
	Then:
	\begin{align*}
		&\hspace{-1cm}\Prob\of[\Big]{
			\text{\normalfont $\of[\big]{\sigma,\set{B'},\set{C'}}$ wins MoE}
			\,\Big|\,
			\KBBeightyfour^{(n)}
		}
		\\
		=\,&
		\frac{1}{2\,\abs{\KBBeightyfour^{(n)}}}
		\sum_{\substack{P\in\KBBeightyfour^{(n)}\\Q\in\set{X,Z}}}
		\sum_{x,b\in\set{0,1}}
		\Tr\ofAlt[\Big]{
			\of[\Big]{
				A_{x|P}^{(n)}
				\otimes
				A_{b|Q}^{(1)}
				\otimes
				B_{x\oplus b|P\otimes Q}
				\otimes
				C_{x\oplus b|P\otimes Q}
			}
			\sigma
		}
		\\
		=\,&
		\frac{1}{2\,\abs{\KBBeightyfour^{(n)}}}
		\sum_{\substack{P\in\KBBeightyfour^{(n)}\\Q\in\set{X,Z}}}
		\sum_{x'\in\set{0,1}}
		\Tr\ofAlt[\Bigg]{
			\of[\Bigg]{
				\sum_{b\in\set{0,1}}
				A_{x'\oplus b|P}^{(n)}
				\otimes
				A_{b|Q}^{(1)}
			}
			\otimes
			B_{x'|P\otimes Q}
			\otimes
			C_{x'|P\otimes Q}
			\,
			\sigma
		}
		\\
		=\,&
		\frac{1}{\abs{\KBBeightyfour^{(n+1)}}}
		\sum_{P\otimes Q\in\KBBeightyfour^{(n+1)}}
		\sum_{x'\in\set{0,1}}
		\Tr\ofAlt[\Big]{
			\of[\Big]{
				A_{x'|P\otimes Q}^{(n+1)}
				\otimes
				B_{x'|P\otimes Q}
				\otimes
				C_{x'|P\otimes Q}
			}
			\sigma
		}
		\\
		=\,&
		\Prob\of[\Big]{
			\text{\normalfont $\of[\big]{\sigma,\set{B},\set{C}}$ wins MoE}
			\,\Big|\,
			\KBBeightyfour^{(n+1)}
		}\,,
	\end{align*}
	where we used \cref{eq:BB84-projector-recursion}.
	Consequently, we have $p_{n+1}\leq p_n$.

	For $n=1$, the game is the usual BB84 monogamy-of-entanglement game like in \cite{Tomamichel-Fehr-Kaniewski-Wehner-13}.
	The standard two-question overlap bound gives:
	\[
		p_1
		\leq
		\frac12
		\of[\Big]{
			1+
			\max_{x,z\in\set{0,1}}
			\operatornorm[\big]{
				A_{x|X}^{(1)}A_{z|Z}^{(1)}
			}
		}
		=
		\frac12+\frac{1}{2\sqrt{2}}\,,
	\]
	and this value is attained by the Breidbart strategy.
	Hence, we obtain the first inequality:
	\begin{equation}
		\label{eq:BB84-upper-bound-all-n}
		\forall n\in\NN^*,\qquad
		p_n
		\leq
		p_1
		=
		\cos^2\of[\Big]{\frac{\pi}{8}}\,.
	\end{equation}

	It remains to prove the matching lower bound.
	For $z\in\set{0,1}^n$, define:
	\[
		\epsilon(z)
		\coloneqq
		z_1\oplus\cdots\oplus z_n,
		\qquad
		g(z)
		\coloneqq
		\bigoplus_{1\leq i<j\leq n}z_iz_j\,,
	\]
	and let:
	\[
		\tau
		\coloneqq
		\tan\of[\Big]{\frac{\pi}{8}}
		=
		\sqrt{2}-1\,.
	\]
	Consider the unit vector
	$$
		\ket{w_n}
		\coloneqq
		\frac{\cos\of[\big]{\pi/8}}{2^{(n-1)/2}}
		\sum_{z\in\set{0,1}^n}
		(-1)^{g(z)}
		\tau^{\epsilon(z)}
		\ket{z}\,.
	$$
	Its normalisation follows from the fact that exactly $2^{n-1}$ strings have each parity and
	$
		\cos^2\of[\big]{\frac{\pi}{8}}
		\of{1+\tau^2}
		=
		1.
	$
	For $\theta\in\set{0,1}^n$, write:
	\[
		P_\theta
		\coloneqq
		H^\theta Z^{\otimes n}H^\theta
		\in
		\KBBeightyfour^{(n)}\,.
	\]
	The map $\theta\mapsto P_\theta$ is a bijection.
	Moreover,
	\begin{equation}
		\label{eq:BB84-state-correlations}
		\bra{w_n}P_\theta\ket{w_n}
		=
		\frac{(-1)^{g(\theta)}}{\sqrt{2}}.
	\end{equation}
	To verify this identity, observe that
	$
		P_\theta\ket{z}
		=
		(-1)^{\bigoplus_{j:\theta_j=0}z_j}
		\ket{z\oplus\theta}
	$
	and that a direct expansion over $\mathbb{F}_2$ gives:
	\[
		g(z)
		\oplus
		g(z\oplus\theta)
		\oplus
		\bigoplus_{j:\theta_j=0}z_j
		=
		g(\theta)
		\oplus
		\of[\big]{1\oplus\epsilon(\theta)}\epsilon(z)\,.
	\]
	Therefore,
	\begin{align*}
		\bra{w_n}P_\theta\ket{w_n}
		&\,=\,
		\frac{\cos^2\of[\big]{\pi/8}}{2^{n-1}}
		(-1)^{g(\theta)}
		\sum_{z\in\set{0,1}^n}
		(-1)^{\of[\big]{1\oplus\epsilon(\theta)}\epsilon(z)}
		\tau^{\epsilon(z)+\epsilon(z\oplus\theta)}
		\\
		&\,=\,
		\frac{(-1)^{g(\theta)}\cos^2\of[\big]{\pi/8}}{2^{n-1}}
		\begin{cases}
			2^{n-1}\of[\big]{1-\tau^2},
			&\text{if }\epsilon(\theta)=0\,,
			\\[1mm]
			2^n\tau,
			&\text{if }\epsilon(\theta)=1\,.
		\end{cases}
	\end{align*}
	Both cases equal $(-1)^{g(\theta)}/\sqrt{2}$ because
	$
		\cos^2\of[\big]{\frac{\pi}{8}}\of[\big]{1-\tau^2}
		=
		2\cos^2\of[\big]{\frac{\pi}{8}}\tau
		=
		1/\sqrt{2}
	$,
	which proves \cref{eq:BB84-state-correlations}.

	Finally, Alice is given the state $\ketbra{w_n}{w_n}$, while Bob and Charlie use the deterministic measurements:
	\[
		B_{x|P_\theta}
		=
		C_{x|P_\theta}
		=
		\delta_{x,g(\theta)}.
	\]
	The winning probability of this strategy is:
	\begin{align*}
		\frac{1}{2^n}
		\sum_{\theta\in\set{0,1}^n}
		\bra{w_n}
		A_{g(\theta)|P_\theta}^{(n)}
		\ket{w_n}
		&=
		\frac{1}{2^n}
		\sum_{\theta\in\set{0,1}^n}
		\frac{
			1+
			(-1)^{g(\theta)}
			\bra{w_n}P_\theta\ket{w_n}
		}{2}
		\\
		&=
		\frac12+\frac{1}{2\sqrt{2}}
		=
		\cos^2\of[\Big]{\frac{\pi}{8}}\,.
	\end{align*}
	Together with \cref{eq:BB84-upper-bound-all-n}, this proves the result.
\end{proof}

\subsection{Strong Security Against Bounded Adversaries}
\label{subsec:Strong_Security_Against_Bounded_Adversaries}

\begin{proposition}
	\label{lem:upper-bound-in-terms-of-Bobs-dimension}
	Denote by $d\in\NN^*$ a common upper bound on the dimensions of Bob's and Charlie's Hilbert spaces.
	Then, for any $n\in\NN^*$ and any $\PauliSet^{(n)}\subseteq\set{\Identity_2,X,Y,Z}^{\otimes n}$, we have:
	$$
		\Prob^*\of[\big]{ \text{\normalfont win MoE} }
		\,\leq\,
		\frac12 + \frac{d\,\sqrt{2^n}}{2\sqrt{\abs{\PauliSet^{(n)}}}}\,.
	$$
\end{proposition}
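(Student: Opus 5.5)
The plan is to go through the pairwise relaxation of \Cref{rem:Curse-of-the-3/4} and then bound the resulting operator norm by a Frobenius norm, where the dimension $d$ enters through traces of Bob's and Charlie's observables.

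I will not apply \Cref{lem:change_B_and_C_into_unitaries}, since Naimark dilation and the direct-sum symmetrization both enlarge Bob's and Charlie's spaces and would spoil the dimension bound. Instead, I keep the given POVMs on spaces of dimension at most $d$. For each dichotomic POVM I write $B_{x|P}=\frac12(\Identity_\Bob+(-1)^xW_P)$ and $C_{x|P}=\frac12(\Identity_\Charlie+(-1)^xW'_P)$, where $W_P,W'_P$ are Hermitian contractions, i.e.\ $\operatornorm{W_P},\operatornorm{W'_P}\leq1$. Repeating the expansion leading to \cref{eq:expression-of-Pwin} with these contractions gives
$$
\Prob\of{\text{win MoE}}\leq\frac14+\frac14\lambda_{\max}\of[\Big]{\frac1K\sum_{P}\of[\big]{P\otimes W_P\otimes\Identity_\Charlie+P\otimes\Identity_\Bob\otimes W'_P+\Identity_\Alice\otimes W_P\otimes W'_P}}.
$$
Since $\operatornorm{W_P\otimes W'_P}\leq1$, subadditivity of $\lambda_{\max}$ (as in \Cref{rem:Curse-of-the-3/4}) and the triangle inequality then yield
$$
\Prob\of{\text{win MoE}}\leq\frac12+\frac{1}{4K}\of[\Big]{\operatornorm[\Big]{\sum_P P\otimes W_P}+\operatornorm[\Big]{\sum_P P\otimes W'_P}}.
$$

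The key step is to bound each of these norms by its Frobenius norm, $\operatornorm{M}\leq\norm{M}_F$. For $M=\sum_P P\otimes W_P$, Hilbert--Schmidt orthogonality of distinct Pauli strings gives
$$
\norm{M}_F^2=\sum_{P,Q}\Tr\ofAlt{PQ}\,\Tr\ofAlt{W_PW_Q}=2^n\sum_P\Tr\ofAlt{W_P^2}\leq2^n\,K\,d,
$$
because $\Tr\ofAlt{W_P^2}\leq d\,\operatornorm{W_P}^2\leq d$. The same bound holds for $W'$. Hence each norm is at most $\sqrt{2^nKd}$, and
$$
\Prob\of{\text{win MoE}}\leq\frac12+\frac{\sqrt{d\,2^n}}{2\sqrt K}\leq\frac12+\frac{d\sqrt{2^n}}{2\sqrt K},
$$
where the last inequality uses $d\geq1$. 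Taking the supremum over strategies of dimension at most $d$ concludes.

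The main obstacle is the first step: checking that the expansion of \cref{eq:expression-of-Pwin} goes through with contractions rather than unitaries, and with Bob and Charlie not symmetrized. The terms linear in $W_P$ and $W'_P$ come with the factor $\sum_x(-1)^x=0$ and cancel exactly as before, and only $\operatornorm{W_P\otimes W'_P}\leq1$ is needed for the subadditivity step. The cross terms $\Tr\ofAlt{PQ}=0$ in the Frobenius computation require $\PauliSet^{(n)}$ to be a genuine set, as in the statement; for a multiset the diagonal sum would instead pick up multiplicities.
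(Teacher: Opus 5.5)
Your argument is correct, and its skeleton is the paper's: the pairwise relaxation of \cref{eq:upper-bound-on-pwin-with-the-1/2} via subadditivity of $\lambda_{\max}$, then the estimate $\lambda_{\max}(M)\leq\lVert M\rVert_2$, with Hilbert--Schmidt orthogonality $\Tr[PQ]=2^n\delta_{P,Q}$ killing the cross terms. You differ from it in two places, both to your advantage.

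First, the paper plugs in \cref{eq:upper-bound-on-pwin-with-the-1/2}, which is stated for the Hermitian unitaries $V_P$ produced by \Cref{lem:change_B_and_C_into_unitaries}. It then treats $V_P$ as acting on a space of dimension at most $d$. However, the Naimark dilation and the direct-sum symmetrization in that lemma enlarge the local dimension (to as much as $4d$). Strictly, the written proof therefore only gives the bound with $d$ replaced by this larger dimension. You work directly with the contractions $W_P=B_{0|P}-B_{1|P}$ and $W'_P=C_{0|P}-C_{1|P}$ on the original spaces, without symmetrizing. This avoids the issue, and nothing beyond $\lVert W_P\rVert,\lVert W'_P\rVert\leq 1$ is ever needed.

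Second, the paper applies the Frobenius bound to $\sum_P P\otimes(V_P\otimes\Identity_\Charlie+\Identity_\Bob\otimes V_P)$ on the whole space. There the spectator identity factor costs an extra dimension, since $\Tr[(V_P\otimes\Identity+\Identity\otimes V_P)^2]\leq 4d^2$. You first split the Bob and Charlie terms with the triangle inequality and use $\lVert M\otimes\Identity_\Charlie\rVert=\lVert M\rVert$, so you only pay $\Tr[W_P^2]\leq d$. This yields the stronger estimate $\frac12+\frac{\sqrt{d\,2^n}}{2\sqrt{K}}$.

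Your closing caveat, that $\PauliSet^{(n)}$ must be a genuine set rather than a multiset, is right and applies equally to the paper's proof.
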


\begin{proof}
	We use the upper bound from \cref{eq:upper-bound-on-pwin-with-the-1/2} on the winning probability:
	$$
		\Prob^*\of[\big]{ \text{\normalfont win MoE} }
		\,\leq\,
		\frac12 + \frac{1}{4\abs{\PauliSet^{(n)}}}\,
		\lambda_{\max}\of[\Bigg]{
			\underbrace{
			\sum_{P\in\PauliSet^{(n)}}
			P \otimes \of[\big]{
			V_P \otimes \Identity_\Charlie
			+ \Identity_{\Bob} \otimes V_P}
			}_{=:\,M}
		}\,.
	$$
	The maximal eigenvalue is upper-bounded by the Frobenius norm $\lambda_{\max}(M)\leq\norm{M}_2\coloneqq\sqrt{\Tr\ofAlt{MM^*}}$.
	Therefore,
	\begin{align*}
		\lambda_{\max}(M)^2
		&\,\leq\,
		\sum_{P,Q\in\PauliSet^{(n)}}
		\underbrace{\Tr\ofAlt[\big]{PQ}}_{=\,2^n\delta_{P,Q}}
		\Tr\ofAlt[\Big]{
			\of[\big]{V_P \otimes \Identity_\Charlie + \Identity_\Bob \otimes V_P}
			\of[\big]{V_Q \otimes \Identity_\Charlie + \Identity_\Bob \otimes V_Q}
		}
		\\
		&\,=\,
		2^n
		\sum_{P\in\PauliSet^{(n)}}
		\Tr\ofAlt[\Big]{
			\of[\big]{V_P \otimes \Identity_\Charlie + \Identity_\Bob \otimes V_P}^2
		}
		\\
		&\,\leq\,
		2^n\abs{\PauliSet^{(n)}}\,4d^2\,.
	\end{align*}
	In the last inequality, we used that the squared operator has operator norm at most~$4$ and acts on a space of dimension at most~$d^2$.
	The result follows by taking the square root and substituting this estimate into the upper bound above.
\end{proof}

\begin{corollary}
	\label{prop:Security-against-bounded-adversaries}
	The Pauli Encoding $\Kall^{(n)}=\set{\Identity_2,X,Y,Z}^{\otimes n}\backslash\set{\Identity_{2^n}}$ is strongly secure against families of adversaries whose local dimensions are bounded independently of~$n$, and weakly secure against families of adversaries whose local dimensions satisfy $d_n=o(\sqrt{2^n})$.
\end{corollary}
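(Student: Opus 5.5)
The plan is to apply \Cref{lem:upper-bound-in-terms-of-Bobs-dimension} to $\Kall^{(n)}$ and then chain it with the cloning-to-MoE reduction \cref{eq:Cloning_upper_bounded_by_MoE}. Here $K=\abs{\Kall^{(n)}}=4^n-1$. For any adversary whose Bob and Charlie registers have dimension at most $d_n$, the proposition gives
\[
	\Prob^*\of[\big]{\text{\normalfont win cloning}}
	\,\leq\,
	\Prob^*\of[\big]{\text{\normalfont win MoE}}
	\,\leq\,
	\frac12+\frac{d_n\sqrt{2^n}}{2\sqrt{4^n-1}}\,.
\]

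Before relying on this chain, I will check that the reduction preserves the dimension bound. In the cloning game, the pirate's channel $\Phi$ outputs into $\Hilbert_\Bob\otimes\Hilbert_\Charlie$. The Choi-matrix relaxation only renormalises $C_\Phi$ into a state $\sigma_{\Alice\Bob\Charlie}$ on the same registers $\Bob,\Charlie$, and it keeps the same POVMs $B_{x|k}$ and $C_{x|k}$. Hence a cloning adversary with local dimension $d_n$ yields an MoE strategy with the same local dimension $d_n$.

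Next I will check that \Cref{lem:change_B_and_C_into_unitaries}, which \cref{eq:upper-bound-on-pwin-with-the-1/2} implicitly uses, does not spoil the dimension bound. That lemma does two things. Naimark dilation may enlarge the dimension, but a dichotomic POVM $B_{x|P}$ can be written as $(\Identity+(-1)^xW_P)/2$ with $W_P$ Hermitian and $\operatornorm{W_P}\leq1$. The Frobenius estimate in the proof of \Cref{lem:upper-bound-in-terms-of-Bobs-dimension} only needs $\operatornorm{W_P\otimes\Identity+\Identity\otimes W'_P}\leq 2$ and a total dimension at most $d^2$. So I will rerun that estimate directly with possibly non-projective $W_P$, and with $W'_P$ for Charlie, without symmetrising. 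The pairwise bound \cref{eq:upper-bound-on-pwin-with-the-1/2} also still holds in this setting, since the $\Identity\otimes W_P\otimes W'_P$ term has operator norm at most $1$. The direct-sum symmetrisation only doubles the dimension, which would be harmless anyway.

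With these checks done, I obtain the two claims. Since $\sqrt{2^n}/\sqrt{4^n-1}\leq \sqrt{2}\,2^{-n/2}$ for all $n\geq1$, the advantage is at most $d_n\,2^{-n/2}/\sqrt2$. If $d_n\leq d$ is bounded independently of $n$, the advantage is $O(2^{-n/2})$. Because $\lambda=\lceil\log_2(4^n-1)\rceil\approx 2n$, this is $O(2^{-\lambda/4})$, which is negligible in $\lambda$ and gives strong security. If $d_n=o(\sqrt{2^n})$, the advantage tends to $0$, which gives weak security with $f(\lambda)$ equal to this vanishing bound.

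The main obstacle is the dimension bookkeeping in the reductions, namely confirming that no step silently enlarges Bob's and Charlie's spaces. I will handle it by the direct rerun of the Frobenius bound with contractions $W_P$, as described above.
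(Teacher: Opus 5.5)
Your proposal is correct and follows essentially the paper's route: the Frobenius-norm bound of \Cref{lem:upper-bound-in-terms-of-Bobs-dimension}, followed by reading off the asymptotics with $\lambda=2n$. There are two small differences. First, you apply that bound directly to $\Kall^{(n)}$ with $K=4^n-1$, whereas the paper first adjoins the identity via \Cref{prop:Pwin-of-P_n-union-Q_n}; this changes the bound by at most a constant factor. Second, you explicitly check that neither the cloning-to-MoE reduction nor the Naimark/symmetrisation step behind \cref{eq:upper-bound-on-pwin-with-the-1/2} enlarges Bob's and Charlie's dimensions. The paper leaves this point implicit, and your rerun of the estimate with Hermitian contractions $W_P,W'_P$ settles it cleanly.
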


\begin{proof}
	Using \Cref{prop:Pwin-of-P_n-union-Q_n}, the winning probability for the MoE game with $\Kall^{(n)}$ is upper-bounded by the one with $\Kall^{(n)}\sqcup\set{\Identity_{2^n}}$.
	Now, using \Cref{lem:upper-bound-in-terms-of-Bobs-dimension} with $\Kall^{(n)}\sqcup\set{\Identity_{2^n}}$, we have:
	$$
		\Prob^*\of[\big]{ \text{\normalfont win MoE} }
		\,\leq\,
		\frac12 + \frac{d_n}{2\sqrt{2^n}}\,.
	$$
	Since the security parameter $\lambda=\lceil\log_2(4^n-1)\rceil$ grows linearly with~$n$, the advantage is negligible when $d_n$ is bounded independently of~$n$, and it vanishes whenever $d_n=o(\sqrt{2^n})$.
\end{proof}

\begin{remark}[Schatten Norms Analysis]
	In the above proof, we used the Frobenius norm $\norm{\,\cdot\,}_2$, 
	which is the $p=2$ particular case of the more general Schatten norms:
	$$
		\norm{M}_p\coloneqq\of[\Big]{\Tr\ofAlt[\big]{\abs{M}^p}}^{1/p}\,,
	$$ 
	where $p\in[1,\infty)$.
	This family of norms is known to converge to the spectral norm $\norm{M}_p\to\norm{M}_\infty$ as $p\to\infty$. 
	Therefore, computing the Schatten $p$-norm for large~$p$ yields a good approximation of $\norm{\,\cdot\,}_\infty$,
	and one might be tempted to compute it to improve the upper bound from \Cref{lem:upper-bound-in-terms-of-Bobs-dimension}, so that the expression no longer relies on the adversaries' dimension~$d$. 
	Nevertheless, following the discussion from \Cref{rem:Curse-of-the-3/4}, we know that there exists a strategy with local dimension $d=2^n+1$ achieving the winning probability at least $3/4$, so this method cannot work.
\end{remark}

\begin{remark}
	A similar computation with the Frobenius norm $\norm{\,\cdot\,}_2$ as above can be applied directly to the expression \cref{eq:expression-of-Pwin} 
	instead of \cref{eq:upper-bound-on-pwin-with-the-1/2}. 
	Using naively that $\abs[\big]{\Tr\ofAlt{V_P}} \leq d$ and that $\abs[\big]{\Tr\ofAlt{V_PV_Q}} \leq d$, this method leads to the following expression:
	$$
		\Prob^*\of[\big]{ \text{\normalfont win MoE} }
		\,\leq\,
		\frac14 + \frac{d\,\sqrt{2^n}}{4\sqrt{\abs{\PauliSet^{(n)}}}}\,\sqrt{8+\abs{\PauliSet^{(n)}}}\,,
	$$
	which is not conclusive.
\end{remark}

\subsection{Strong Indistinguishable Security}

In this paper, we are mainly interested in the unclonable-indistinguishable security defined in \Cref{def:Security}. 
But there exists a standard weaker notion of security called the \emph{indistinguishable security}, originally introduced in the classical framework by \citeauthor{Goldwasser-Micali-84} in \cite{Goldwasser-Micali-84} and later generalized to the quantum setting by \citeauthor{Alagic-Broadbent-Fefferman-Gagliardoni-Schaffner-StJules-16} in \cite{Alagic-Broadbent-Fefferman-Gagliardoni-Schaffner-StJules-16}, that we define and study below.
In this subsection, we show that several Pauli Encodings are strongly indistinguishable secure.

\begin{definition}[Distinguishing Game]
	In the \emph{distinguishing game}, Alice samples a classical message $x\in\set{0,1}$ and a classical key $k\in \Kset$ uniformly at random.
	She encrypts~$x$ into a quantum state $\rho_{x|k}\in\Density(\Hilbert_\Alice)$.
	She then sends the state to a pirate $\Pirate$ who applies a quantum channel
	$$
		\Phi:\Bounded(\Hilbert_\Alice)\to\Bounded(\Hilbert_\Bob)
	$$ 
	without knowing the key $k$.
	Then Bob performs a POVM measurement $\set{B_{x_\Bob}}_{x_\Bob}$ without knowing the key again.
	We say the adversary team $(\Pirate,\Bob)$ wins the distinguishing game with the strategy $(\Phi,\set{B_{x_\Bob}}_{x_\Bob})$ if Bob correctly guesses the message, \ie if
	$
		x_\Bob = x
	$.
\end{definition}

\noindent
The optimal winning probability is expressed as follows:
\begin{align*}
	\Prob^*\of[\Big]{\text{win distinguishing} \,\Big| \set{\rho_{x|k}}}
	&
	\,\coloneqq\,
	\sup_{\Phi, \set{B}}
	\frac{1}{2|\Kset|} 
	\sum_{x,x_\Bob\in\set{0,1}} 
	\sum_{k\in \Kset} 
	\Tr\ofAlt[\Big]{ \Phi\of[\big]{\rho_{x|k}} B_{x_\Bob} } \Indic_{x_\Bob = x}
	\\
	&\,=\,
	\sup_{\Phi, \set{B}}
	\frac{1}{2|\Kset|} 
	\sum_{x\in\set{0,1}} 
	\sum_{k\in \Kset} 
	\Tr\ofAlt[\Big]{ \Phi\of[\big]{\rho_{x|k}} B_x }\,.
\end{align*}

\begin{definition}[Indistinguishable Security]
	\label{def:Indistinguishable_Security}
	A family of encryption schemes $\set[\big]{\rho_{x|k}^{(\lambda)}}_{\lambda\in\NN}$ is said to be \emph{weakly indistinguishable secure} if there exists a function $f:\NN\to[0,1]$ with the vanishing condition $f(\lambda)\to0$ as $\lambda\to\infty$ such that:
	$$
		\forall \lambda\in\NN,\qquad
		\Prob^*\of[\Big]{\text{win distinguishing} \,\Big| \set[\big]{\rho_{x|k}^{(\lambda)}}}
		\,\leq\,
		\frac{1}{2} + f(\lambda)\,.
	$$
	If in addition $f$ is negligible, meaning that for every polynomial $p:\NN\to\mathbb R_{>0}$ there exists $\lambda_0\in\NN$ such that $f(\lambda)\leq 1/p(\lambda)$ for all $\lambda\geq\lambda_0$, then we say that the encryption scheme is \emph{strongly indistinguishable secure}.
\end{definition}

\noindent
Recall that Pauli Encodings are defined with keys $k$ being Pauli strings $P\in\set{\Identity_2,X,Y,Z}^{\otimes n}\backslash\set{\Identity_{2^n}}$ (see \Cref{def:Pauli_Encoding_state_version}).
We adapt the following lemma from \cite[Section~3.5]{Botteron-Broadbent-Culf-Nechita-Pellegrini-Rochette-24}:

\begin{lemma}[Universal Distinguishing Upper Bound]
	\label{lem:Universal_Distinguishing_Upper_Bound}
	For any $n\in\NN^*$ and any $\PauliSet^{(n)}\subseteq\set{\Identity_2,X,Y,Z}^{\otimes n}\backslash\set{\Identity_{2^n}}$,
	we have:
	$$
		\Prob^*\of[\Big]{\text{win distinguishing} \,\Big|\, \PauliSet^{(n)} }
		\,\leq\,
		\frac{1}{2} + \frac{1}{2K}\operatornorm[\Bigg]{\sum_{P\in\PauliSet^{(n)}}P\,}\,.
	$$
\end{lemma}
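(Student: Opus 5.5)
The plan is to reduce the distinguishing game to a Helstrom-type bound on two averaged states. Since Bob's POVM $\set{B_{x_\Bob}}$ does not depend on the key, we can sum over $k=P$ first and work only with the key-averaged ciphertexts
$$
	\bar\rho_x\,\coloneqq\,\frac1K\sum_{P\in\PauliSet^{(n)}}\rho_{x|P}\,=\,\frac{\Identity_{2^n}}{2^n}+\frac{(-1)^x}{2^nK}\,S\,,\qquad S\,\coloneqq\,\sum_{P\in\PauliSet^{(n)}}P\,.
$$
The winning probability of $(\Phi,\set{B})$ is then $\frac12\sum_x\Tr\ofAlt{\Phi(\bar\rho_x)B_x}$. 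Composing $\Phi$ with $\set{B_x}$ gives a POVM $\set{\Phi^*(B_x)}_x$ on $\Hilbert_\Alice$, with $\Phi^*$ the unital adjoint channel. So it suffices to bound $\frac12\sum_x\Tr\ofAlt{\bar\rho_xE_x}$ over two-outcome POVMs $\set{E_0,E_1}$ on $\Complex^{2^n}$.

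First I substitute $E_1=\Identity-E_0$. The value becomes
$$
	\frac12+\frac12\Tr\ofAlt[\big]{(\bar\rho_0-\bar\rho_1)E_0}\,=\,\frac12+\frac{1}{2^nK}\Tr\ofAlt{S\,E_0}\,.
$$
I then bound $\Tr\ofAlt{SE_0}$ using $0\preccurlyeq E_0\preccurlyeq\Identity$: it is at most the sum of the positive eigenvalues of the Hermitian matrix $S$.

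The main obstacle is that this quantity, $\frac{1}{2}\norm{S}_1$ via $\Tr S=0$, is a trace norm, whereas the statement has an operator norm. Naively $\Tr\ofAlt{SE_0}\le 2^n\operatornorm{S}$, which yields only $\frac12+\frac1K\operatornorm{S}$, a factor $2$ too weak. To recover the factor $\tfrac12$, I use that $S$ is traceless because $\Identity_{2^n}\notin\PauliSet^{(n)}$. Write $\lambda_i$ for the eigenvalues of $S$. Then the sum of the positive eigenvalues is $\frac12\sum_i\abs{\lambda_i}\le\frac12\,2^n\operatornorm{S}$. This gives $\Tr\ofAlt{SE_0}\le 2^{n-1}\operatornorm{S}$, and hence the claimed bound $\frac12+\frac{1}{2K}\operatornorm{S}$.

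Equivalently, this is the Helstrom bound $\frac12+\frac14\norm{\bar\rho_0-\bar\rho_1}_1$ with $\norm{\bar\rho_0-\bar\rho_1}_1=\frac{2}{2^nK}\norm{S}_1\le\frac{2}{K}\operatornorm{S}$. I would present it this way and cite Helstrom for the first step.
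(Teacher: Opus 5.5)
Your proof is correct, and it follows a genuinely different route from the paper. The paper does not average the ciphertexts over keys. Instead it reuses its cloning-to-MoE machinery: it writes the channel through its Choi matrix, relaxes $\Tr_{\Bob}\ofAlt{C_\Phi}=\Identity_\Alice$ to $\Tr\ofAlt{C_\Phi}=d_\Alice$, and reduces Bob's POVM to $B_x=(\Identity+(-1)^xU)/2$. This yields $\frac12+\frac{1}{2K}\sup_U\lambda_{\max}\bigl(S^\top\otimes U\bigr)$, which it then bounds by multiplicativity of the operator norm. You instead average over keys first and absorb $\Phi$ into a POVM on $\Hilbert_\Alice$ through the unital adjoint. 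Helstrom then gives the exact optimum $\frac12+\frac{1}{2^{n+1}K}\norm{S}_1$, and the lemma follows from $\Tr\ofAlt{S}=0$ and $\norm{S}_1\leq 2^n\operatornorm{S}$.

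The paper's route has the merit of staying inside its Choi/MoE framework; its bound coincides with the MoE value of the $V_P=\pm\Identity$ attack in \cref{eq:max_s_P_V_I_wins}. However, $\sup_U\lambda_{\max}(S^\top\otimes U)=\operatornorm{S}$ (take $U=\pm1$), so that relaxation can never do better than $\operatornorm{S}/(2K)$. Discarding the marginal constraint on the Choi matrix is exactly what replaces the normalized trace norm $2^{-n}\norm{S}_1$ by $\operatornorm{S}$. Your route keeps that constraint and hence gives the exact value, which can be much smaller. For instance, Cauchy--Schwarz $\norm{S}_1\leq\sqrt{2^n}\,\norm{S}_2$, combined with the Hilbert--Schmidt orthogonality of distinct Pauli strings (as in the proof of \Cref{cor:Universal-Lower-Bound}), bounds the distinguishing advantage by $\frac{1}{2K}\sqrt{\sum_Q m_Q^2}$, with $m_Q$ the multiplicities. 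For any Pauli set without repetitions this is $\frac{1}{2\sqrt K}$. For $n\geq2$ this already improves on the bounds this lemma gives for $\Kall^{(n)}$ and $\Kreal^{(n)}$ in \Cref{prop:Strong_Indistinguishable_Security}.
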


\begin{proof}
	Set $K\coloneqq\abs{\PauliSet^{(n)}}$.
	We proceed similarly as in \Cref{sec:Preliminaries}.
	We can rephrase the winning probability using the Choi matrix $C_\Phi$ of the quantum channel $\Phi$:
	\begin{align*}
		\Prob^*\of[\Big]{\text{win distinguishing} \,\Big|\, \PauliSet^{(n)}}
		&\,\leq\,
		\sup_{\substack{C_{\Phi}\succcurlyeq\zero\\\Tr\ofAlt{C_\Phi}=d_\Alice}}
		\sup_{\set{B}}
		\frac{1}{2K} 
		\sum_{x\in\set{0,1}} 
		\sum_{P\in\PauliSet^{(n)}} 
		\Tr\ofAlt[\Big]{ C_\Phi \of[\big]{\rho_{x|P}^\top \otimes B_{x} } }\,,
		\\
		&\,=\,
		\sup_{\sigma,\set{B}}
		\frac{1}{2K} 
		\sum_{x\in\set{0,1}} 
		\sum_{P\in\PauliSet^{(n)}} 
		\Tr\ofAlt[\Big]{ d_\Alice\sigma_{\Alice\Bob}\, \of[\big]{\rho_{x|P}^\top \otimes B_{x} } }\,,
		\\
		&\,=\,
		\sup_{\set{B}}
		\frac{1}{K} 
		\lambda_{\max}\of[\Bigg]{
			\sum_{x\in\set{0,1}} 
			\sum_{P\in\PauliSet^{(n)}} 
			\widetilde A_{x|P} \otimes B_{x} 
		}
		\,,
	\end{align*}
	where
	$
	\widetilde A_{x|P}
	\coloneqq
	d_\Alice\rho_{x|P}^\top/2
	=
	\of{\Identity_{2^n}+(-1)^xP^\top}/2
	$.
	Then, assuming without loss of generality that $B_x=\of{\Identity_{d_\Bob}+(-1)^xU}/2$ for some Hermitian unitary $U\in\Unitary(\Complex^{d_\Bob})$ (see \Cref{lem:change_B_and_C_into_unitaries}), we have:
	\begin{align*}
		\Prob^*\of[\Big]{\text{win distinguishing} \,\Big|\, \PauliSet^{(n)}}
		&\,\leq\,
		\sup_{U}
		\frac{1}{K} 
		\lambda_{\max}\of[\Bigg]{
			\sum_{x\in\set{0,1}} 
			\sum_{P\in\PauliSet^{(n)}} 
			\frac{1}{4}
			\of[\Big]{
				\Identity_{\Alice\Bob}
				+(-1)^x\Identity_\Alice\otimes U
				+(-1)^xP^\top\otimes\Identity_\Bob
				+P^\top\otimes U
			}
		}\,,
		\\
		&\,=\,
		\frac{1}{2}
		+
		\frac{1}{2K}
		\sup_U
		\lambda_{\max}\of[\Bigg]{
			\of[\bigg]{\sum_{P\in\PauliSet^{(n)}}P^\top}\otimes U
		}\,.
	\end{align*}
	We upper-bound the maximal eigenvalue by the operator norm and use multiplicativity of the operator norm.
	Finally, $\operatornorm{U}=1$ and $\operatornorm{\sum_P P^\top}=\operatornorm{\sum_P P}$, which proves the claimed inequality.
\end{proof}

\begin{proposition}
	\label{prop:Strong_Indistinguishable_Security}
	The following Pauli Encodings are strongly indistinguishable secure:
	\begin{enumerate}
		\item $\Kall^{(n)}=\set{\Identity_2,X,Y,Z}^{\otimes n}\backslash\set{\Identity_{2^n}}$, with
		$$
			\Prob^*\of[\Big]{\text{win distinguishing}\,\Big|\,\Kall^{(n)}}
			\,\leq\,
			\frac12+
			\frac{(1+\sqrt3)^n-1}{2(4^n-1)}
			\,\sim\,
			\frac12+\frac12\of[\bigg]{\frac{1+\sqrt3}{4}}^n.
		$$
		\item $\Kreal^{(n)}=\set{\Identity_2,X,Z}^{\otimes n}\backslash\set{\Identity_{2^n}}$, with
		$$
			\Prob^*\of[\Big]{\text{win distinguishing}\,\Big|\,\Kreal^{(n)}}
			\,\leq\,
			\frac12+
			\frac{(1+\sqrt2)^n-1}{2(3^n-1)}
			\,\sim\,
			\frac12+\frac12\of[\bigg]{\frac{1+\sqrt2}{3}}^n.
		$$
		\item $\KHZH^{(n,\Lall(n))}$, with
		$$
			\Prob^*\of[\Big]{\text{win distinguishing}\,\Big|\,\KHZH^{(n,\Lall(n))}}
			\,\leq\,
			\frac12+
			\frac{(2+\sqrt2)^n-2^n}{2(4^n-2^n)}
			\,\sim\,
			\frac12+\frac12\of[\bigg]{\frac12+\frac{1}{2\sqrt2}}^n.
		$$
		\item $\KBBeightyfour^{(n)}=\set{X,Z}^{\otimes n}$, with
		$$
			\Prob^*\of[\Big]{\text{win distinguishing}\,\Big|\,\KBBeightyfour^{(n)}}
			\,\leq\,
			\frac12+\frac{1}{2(\sqrt2)^n}.
		$$
		\item An anticommuting encoding with $K$ keys, with
		$$
			\Prob^*\of[\Big]{\text{win distinguishing}\,\Big|\,\Kanticomm}
			\,\leq\,
			\frac12+\frac{1}{2\sqrt K}.
		$$
	\end{enumerate}
\end{proposition}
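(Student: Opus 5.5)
The plan is to invoke \Cref{lem:Universal_Distinguishing_Upper_Bound} for each family, so the task reduces to computing or bounding $\operatornorm{\sum_{P\in\PauliSet^{(n)}}P}$ and dividing by $2K$. The key observation is that all these families (except the anticommuting one) are tensor powers of a one-qubit set, possibly with the identity removed. Removing the identity $\Identity_{2^n}$, or its $2^n$ copies in the HZH case, subtracts a multiple of the identity from the sum. Hence the spectrum is shifted by that multiple, and the operator norm can be read off from the spectrum of the full tensor-power sum.

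\textbf{Case 1.} By \cref{eq:sum_of_P}, $\sum_{P\in\set{\Identity_2,X,Y,Z}^{\otimes n}}P=M^{\otimes n}$ with $M=\begin{bsmallmatrix}2&1-i\\1+i&0\end{bsmallmatrix}$, whose eigenvalues are $1\pm\sqrt3$. The eigenvalues of $M^{\otimes n}$ are therefore the products $(1+\sqrt3)^a(1-\sqrt3)^{n-a}$. Subtracting $\Identity$ gives eigenvalues $(1+\sqrt3)^a(1-\sqrt3)^{n-a}-1$. I then need to check that the absolute value is maximized at $a=n$:
\begin{itemize}
\item Since $\abs{1-\sqrt3}<1<1+\sqrt3$, every product has absolute value at most $(1+\sqrt3)^{n-1}\cdot\abs{1-\sqrt3}\le(1+\sqrt3)^{n-1}$ unless $a=n$.
\item Negative eigenvalues minus $1$ have absolute value at most $(1+\sqrt3)^{n-1}+1\le(1+\sqrt3)^n-1$ for $n\ge1$.
\end{itemize}
So the norm is $(1+\sqrt3)^n-1$. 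Dividing by $2K=2(4^n-1)$ gives the bound.

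\textbf{Case 2.} Here $I+X+Z=\begin{bsmallmatrix}2&1\\1&0\end{bsmallmatrix}$ has eigenvalues $1\pm\sqrt2$, and the same argument works since $\abs{1-\sqrt2}<1$. The norm is $(1+\sqrt2)^n-1$ with $K=3^n-1$.

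\textbf{Case 3.} The one-qubit sum is $2I+X+Z$, with eigenvalues $2\pm\sqrt2$, and we subtract $2^n\Identity$. The smaller eigenvalue satisfies $2-\sqrt2<2$, so every eigenvalue other than $(2+\sqrt2)^n$ lies in $(0,2^n)$ after the relevant comparisons. After the shift, these eigenvalues have absolute value below $2^n\le(2+\sqrt2)^n-2^n$. The norm is $(2+\sqrt2)^n-2^n$ with $K=4^n-2^n$.

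\textbf{Case 4.} $\sum P=(X+Z)^{\otimes n}$ has norm $\sqrt2^{\,n}$ with $K=2^n$.

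\textbf{Case 5.} \Cref{ex:P_win_for_anticommuting} gives norm $\sqrt K$.

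\textbf{Asymptotics and the main obstacle.} The asymptotic equivalences follow by dividing numerator and denominator by the leading powers. Strong security follows because every ratio base is below $1$ and $\lambda$ is linear in $n$: for example, $(1+\sqrt3)/4<1$. The only delicate step is verifying that the top eigenvalue dominates in absolute value after subtracting the identity multiple. This comes down to the elementary inequalities above, namely the one-qubit smaller eigenvalue having modulus less than the identity coefficient, which I would check case by case.
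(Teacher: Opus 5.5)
Your strategy is the same as the paper's: apply \Cref{lem:Universal_Distinguishing_Upper_Bound} and compute the operator norm of a one-qubit sum raised to the $n$-th tensor power, shifted by a multiple of the identity. The paper simply asserts the resulting norms. Your final values are right. However, the dominance checks you add (which you rightly call the only delicate step) contain concrete errors, so as written they do not prove the claim for every $n$.

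In Case 1 you drop the factor $\lvert 1-\sqrt3\rvert$ and then need $(1+\sqrt3)^{n-1}+1\le(1+\sqrt3)^n-1$. This fails at $n=1$, since it reads $2\le\sqrt3$. The same happens in Case 2, where it reads $2\le\sqrt2$. Keeping the factor repairs this. The inequality $(\sqrt3-1)(1+\sqrt3)^{n-1}+1\le(1+\sqrt3)^n-1$ is equivalent to $2\le 2(1+\sqrt3)^{n-1}$, which holds for all $n\ge1$, and likewise with $\sqrt2$ in place of $\sqrt3$.

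Case 3 has two problems.
\begin{enumerate}
\item Your claim that every eigenvalue $(2+\sqrt2)^a(2-\sqrt2)^{n-a}$ with $a<n$ lies in $(0,2^n)$ is false for $n\ge4$. For $n=4$ and $a=3$ it equals $2(2+\sqrt2)^2=12+8\sqrt2>16$.
\item The inequality $2^n\le(2+\sqrt2)^n-2^n$ fails at $n=1$.
\end{enumerate}
The correct argument runs as follows. For $a<n$, each such eigenvalue lies in $[(2-\sqrt2)^n,(2+\sqrt2)^n)$, so after the shift it lies in $[(2-\sqrt2)^n-2^n,\,(2+\sqrt2)^n-2^n)$. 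Eigenvalues exceeding $2^n$ are therefore harmless. The lower end is controlled by $2^n-(2-\sqrt2)^n\le(2+\sqrt2)^n-2^n$. This is exactly Jensen's inequality $\frac{1}{2}\bigl((2+\sqrt2)^n+(2-\sqrt2)^n\bigr)\ge 2^n$ for the convex map $x\mapsto x^n$, with equality at $n=1$.

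With these repairs, the rest of your argument (the asymptotic equivalences and negligibility in $\lambda$) goes through as in the paper.
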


\begin{proof}
	Apply \Cref{lem:Universal_Distinguishing_Upper_Bound} and compute the operator norm of the sum of the Pauli labels.
	For the first two families, these sums are respectively
	$$
		\of{\Identity_2+X+Y+Z}^{\otimes n}-\Identity_{2^n}
		\quad\text{and}\quad
		\of{\Identity_2+X+Z}^{\otimes n}-\Identity_{2^n},
	$$
	whose operator norms are $(1+\sqrt3)^n-1$ and $(1+\sqrt2)^n-1$.
	For $\KHZH^{(n,\Lall(n))}$, the sum is
	$$
		\of{2\Identity_2+X+Z}^{\otimes n}-2^n\Identity_{2^n},
	$$
	whose operator norm is $(2+\sqrt2)^n-2^n$.
	For $\KBBeightyfour^{(n)}$, the sum is $(X+Z)^{\otimes n}$ and has operator norm $(\sqrt2)^n$.
	Finally, for $K$ pairwise anticommuting Hermitian Pauli strings, the square of their sum is $K\Identity$, so its operator norm is $\sqrt K$.
	Since the security parameter is the binary key length, each displayed advantage is negligible in that parameter.
\end{proof}

\begin{remark}
	Note that, although being indistinguishable secure, 
	the Pauli Encoding with $\PauliSet^{(n)}=\set{X,Z}^{\otimes n}$
	is not unclonable-indistinguishable secure, see the table at the end of \Cref{subsec:Improved-Lower-Bounds}.
	Hence, this is an example of a separation between those two security notions.
\end{remark}

\subsection{Security Analysis of an Inefficient Protocol: the Anticommuting Encoding}
\label{subsec:Security_Analysis_of_the_Anticommuting_Protocol}

In this subsection, we study the Pauli Encoding associated with a set
$\Kanticomm=\{\Gamma_1,\ldots,\Gamma_K\}$ (\Cref{ex:Anticommuting_Encoding}) of pairwise anticommuting Hermitian Pauli strings,
\begin{equation}
    \Gamma_i^2=\Identity,
    \qquad
    \Gamma_i\Gamma_j=-\Gamma_j\Gamma_i
    \quad(i\neq j)\,.
    \label{eq:anticommuting-Pauli-relations-security}
\end{equation}
This protocol is inefficient in the security parameter $\lambda=\lceil\log_2 K\rceil$, since a representation of $K$ pairwise anticommuting Hermitian involutions requires a Hilbert-space dimension growing exponentially in~$K$.
Its conjectured winning probability is
\begin{equation}
    \Prob^*\of[\big]{\text{\normalfont win MoE}}
    =\frac12+\frac{1}{2\sqrt K}\,.
    \label{eq:anticommuting-security-conjecture}
\end{equation}
The equality is known for $K\leq 7$, and was verified numerically in \cite{Botteron-Broadbent-Culf-Nechita-Pellegrini-Rochette-24} up to $K=17$ by the second level of the NPA hierarchy.
Here, we explain a symmetry reduction that applies uniformly in $K$, obtain the limiting relaxations up to level three, and clarify what the first level can and cannot distinguish about more general Pauli protocols.

\subsubsection{A Light Formulation of the NPA Hierarchy}
\label{subsubsec:anticommuting-NPA-background}

The following formulation is adapted to the operator polynomial isolated in \cite[Proposition~5]{Botteron-Broadbent-Culf-Nechita-Pellegrini-Rochette-24}.
It is a standard noncommutative moment relaxation in the sense of \cite{NPA2008,PNA2010}.

\begin{definition}[Anticommuting Scenario Algebra]
    \label{def:anticommuting-scenario-algebra}
    For $K\in\NN^*$, let $\mathcal A_{\mathrm{ac}}(K)$ be the unital $*$-algebra generated by Hermitian elements
    $b_1,c_1,\ldots,b_K,c_K$ satisfying
    \begin{equation}
        b_i^2=c_i^2=\Identity,
        \qquad
        b_ic_i=c_ib_i,
        \qquad
        b_ic_j=-c_jb_i\quad(i\neq j).
        \label{eq:anticommuting-scenario-relations}
    \end{equation}
    We define the game polynomial
    $$
        H_K\coloneqq\sum_{i=1}^K\bigl(b_i+c_i+b_ic_i\bigr)\in\mathcal A_{\mathrm{ac}}(K).
    $$
\end{definition}

The reduction of \cite[Proposition~5]{Botteron-Broadbent-Culf-Nechita-Pellegrini-Rochette-24} shows that the optimal winning probability can be written as
\begin{equation}
    \Prob^*\of[\big]{\text{\normalfont win MoE}}
    =\frac14+\frac{1}{4K}\sup_{\pi}\lambda_{\max}\of[\big]{\pi(H_K)},
    \label{eq:winning-probability-universal-norm-HK}
\end{equation}
where the supremum is over finite-dimensional $*$-representations $\pi$ of $\mathcal A_{\mathrm{ac}}(K)$.
Thus, the security question is a largest-eigenvalue maximisation problem with polynomial relations.

\begin{definition}[Level-$t$ Moment Relaxation]
    \label{def:level-t-anticommuting-relaxation}
    Let $t\in\NN^*$.
    Denote by $\mathcal B_t(K)$ the reduced words in the generators $b_i,c_i$ of degree at most $t$, including the empty word $\Identity$.
    A level-$t$ moment functional is a linear map $L$ on reduced words of degree at most $2t$ such that
    $$
        L(\Identity)=1,
        \qquad
        L(w^*)=\overline{L(w)},
        \qquad
        M_t(L)\coloneqq\bigl(L(u^*v)\bigr)_{u,v\in\mathcal B_t(K)}\succcurlyeq\zero,
    $$
    and such that all identities induced by \cref{eq:anticommuting-scenario-relations} are imposed whenever they have degree at most $2t$.
    We set
    $$
        \beta_t(K)\coloneqq\sup_L L(H_K),
        \qquad
        \omega_t(K)\coloneqq\frac14+\frac{\beta_t(K)}{4K}.
    $$
\end{definition}

\begin{proposition}[Upper Bounds and Monotonicity]
    \label{prop:NPA-upper-bound-monotonicity-anticommuting}
    For every $K,t\in\NN^*$,
    $$
        \Prob^*\of[\big]{\text{\normalfont win MoE}}
        \leq \omega_{t+1}(K)\leq\omega_t(K).
    $$
    Moreover, the bounded noncommutative moment hierarchy converges, as $t\to\infty$, to the corresponding universal $C^*$-algebraic optimum over arbitrary Hilbert-space representations.
    This limiting value upper-bounds the finite-dimensional supremum in \cref{eq:winning-probability-universal-norm-HK}; equality with the finite-dimensional value is not needed for any finite-level upper bound used below.
\end{proposition}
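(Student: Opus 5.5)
We prove the three claims separately: soundness of each level, monotonicity in $t$, and convergence.

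\emph{Soundness.} We show $\Prob^*(\text{win MoE})\leq\omega_t(K)$ by producing, from any strategy, a feasible level-$t$ moment functional. By \cref{eq:winning-probability-universal-norm-HK}, it suffices to take a finite-dimensional $*$-representation $\pi$ of $\mathcal A_{\mathrm{ac}}(K)$ together with a unit eigenvector $\ket{\psi}$ of $\pi(H_K)$ for its largest eigenvalue. We then set
$$
L(w)\coloneqq\bra{\psi}\pi(w)\ket{\psi}
$$
on reduced words of degree at most $2t$. The required properties follow directly:
\begin{itemize}
\item $L(\Identity)=1$ because $\ket{\psi}$ is a unit vector.
\item $L(w^*)=\overline{L(w)}$ because $\pi$ is a $*$-representation.
\item The relations \cref{eq:anticommuting-scenario-relations} hold in $\pi$, so every induced identity is satisfied by $L$.
\item The moment matrix is a Gram matrix, since $L(u^*v)=\langle\pi(u)\psi,\pi(v)\psi\rangle$, hence $M_t(L)\succcurlyeq\zero$.
\end{itemize}
Moreover, $L(H_K)=\lambda_{\max}(\pi(H_K))$. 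Therefore $\sup_\pi\lambda_{\max}(\pi(H_K))\leq\beta_t(K)$, which gives $\Prob^*(\text{win MoE})\leq\omega_t(K)$ for every $t$.

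\emph{Monotonicity.} We show $\beta_{t+1}(K)\leq\beta_t(K)$ by restriction. Let $L$ be feasible at level $t+1$, and let $L'$ be its restriction to words of degree at most $2t$. Then:
\begin{itemize}
\item $M_t(L')$ is a principal submatrix of $M_{t+1}(L)$, so it is positive semidefinite.
\item Every relation identity of degree at most $2t$ is also of degree at most $2t+2$, so it is already imposed on $L$.
\end{itemize}
Thus $L'$ is feasible at level $t$. Since $H_K$ has degree $2\leq 2t$, we have $L'(H_K)=L(H_K)$. Taking suprema gives $\beta_{t+1}(K)\leq\beta_t(K)$, hence $\omega_{t+1}(K)\leq\omega_t(K)$. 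Combined with soundness at level $t+1$, this yields the chain of inequalities in the statement.

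\emph{Convergence.} This is the only nontrivial part, and we rely on the standard NPA/Pironio–Navascués–Acín theory \cite{NPA2008,PNA2010}.

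First, the algebra is Archimedean. Each generator satisfies $b_i^2=c_i^2=\Identity$, so
$$
2K\cdot\Identity-\sum_i\bigl(b_i^2+c_i^2\bigr)=0 .
$$
This means the quadratic module is Archimedean and every word has norm at most one in any representation.

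Second, we extract a limiting functional. Take optimal functionals $L_t$. All their moments lie in the unit disc, because $2\times2$ principal minors of the positive semidefinite moment matrix with diagonal entries $1$ bound them. By a diagonal (Banach–Alaoglu) argument, a subsequence converges on every word to a functional $L_\infty$. This $L_\infty$ is positive on all sums of squares and respects all the relations.

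Third, we pass to an operator model. The GNS construction applied to $L_\infty$ gives a bounded $*$-representation on a possibly infinite-dimensional Hilbert space, with a unit vector $\ket{\psi}$ such that
$$
\bra{\psi}\pi(H_K)\ket{\psi}=\lim_t\beta_t(K).
$$
Conversely, every Hilbert-space representation yields feasible functionals at all levels, by the same argument as in the soundness step. Hence $\lim_t\beta_t(K)$ equals the supremum of $\lambda_{\max}$ over arbitrary representations, i.e.\ the universal $C^*$-norm optimum.

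Finally, this value dominates the finite-dimensional supremum in \cref{eq:winning-probability-universal-norm-HK}, because finite-dimensional representations are a special case of Hilbert-space representations.

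The main obstacle is this convergence step. It needs compactness and the GNS construction, and possible equality with the finite-dimensional value is a Connes-embedding-type question. As the statement indicates, we neither need nor claim that equality; only the inequality and the finite-level bounds are used.
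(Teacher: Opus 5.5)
Your proposal is correct and follows the same route as the paper: Gram-matrix feasibility of physical moments gives soundness, restricting a level-$(t+1)$ functional (a principal submatrix) gives monotonicity, and the standard bounded NPA/PNA convergence theorem gives the limit. The only difference is that you unpack that convergence theorem (uniform moment bounds, diagonal extraction, GNS), which the paper simply cites via the boundedness of the involutive generators. One small fix: in the infinite-dimensional converse, use approximate eigenvectors instead of an exact top eigenvector, since the supremum of the spectrum need not be an eigenvalue.
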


\begin{proof}
    Every concrete representation $\pi$ and unit vector $\ket{\psi}$ define the moments
    $L(w)=\bra{\psi}\pi(w)\ket{\psi}$, whose moment matrices are Gram matrices and are therefore positive semidefinite.
    Hence every physical strategy is feasible at every finite level, which proves the first inequality.
    Increasing $t$ adds moments and consistency constraints, so the feasible projections form a decreasing sequence, proving the second inequality.
    Convergence to the universal operator-algebraic optimum follows from the standard convergence theorem for bounded noncommutative polynomial optimisation: the involution relations in \cref{eq:anticommuting-scenario-relations} make all generators uniformly bounded; see \cite{PNA2010} and, for the correlation formulation from which the hierarchy originates, \cite{NPA2008}.
\end{proof}

The point of the hierarchy is therefore not that a finite level already describes all operator strategies.
Rather, each level gives a computable upper bound, and higher levels incorporate longer products of the observables.

\subsubsection{The Anticommuting Protocol is Optimal at Level One}
\label{subsubsec:anticommuting-optimal-level-one}

At level one, the moment matrix is indexed only by $\Identity,b_1,\ldots,b_K,c_1,\ldots,c_K$.
This is exactly the Gram-matrix formulation in \cite[Eq.~(19)]{Botteron-Broadbent-Culf-Nechita-Pellegrini-Rochette-24}.
After averaging over permutations of the keys and over the exchange $b_i\leftrightarrow c_i$, it reduces to the three-variable problem of \cite[Lemma~6 and Eq.~(20)]{Botteron-Broadbent-Culf-Nechita-Pellegrini-Rochette-24}.

\begin{proposition}[Exact Level-One Value for the Anticommuting Protocol]
    \label{prop:exact-level-one-anticommuting-value}
    The level-one upper bound is
    $$
        \omega_1(K)=
        \begin{cases}
            \displaystyle \frac12+\frac{1}{2\sqrt K},&2\leq K\leq7,\\[2ex]
            \displaystyle \frac58+\frac{1}{2(K-2)}-\frac{1}{4K},&K\geq8.
        \end{cases}
    $$
    In particular,
    \begin{equation}
        \lim_{K\to\infty}\omega_1(K)=\frac58.
        \label{eq:level-one-asymptotic-five-eighths}
    \end{equation}
\end{proposition}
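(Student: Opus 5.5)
The plan is to compute $\beta_1(K)$ exactly by symmetry reduction followed by a small explicit optimisation. Two directions are needed: an upper bound valid for every feasible level-one functional, and an explicit feasible point attaining it.

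\textbf{Symmetrisation.} The level-one moment matrix is indexed by $\Identity,b_1,\ldots,b_K,c_1,\ldots,c_K$. The degree-two relations of \cref{eq:anticommuting-scenario-relations} give $L(b_i^2)=L(c_i^2)=1$, force $L(b_ic_i)$ to be real, and force $L(b_ic_j)$ to be purely imaginary for $i\neq j$.

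The feasible set is convex, and $H_K$ is invariant under three operations: permutations of the keys, the exchange $b_i\leftrightarrow c_i$, and complex conjugation $L\mapsto\overline L$. We may therefore average over these symmetries. This reduces an optimal $L$ to four real parameters:
\[
L(b_i)=L(c_i)=a,\qquad L(b_ic_i)=r,\qquad L(b_ib_j)=L(c_ic_j)=s\ (i\ne j),\qquad L(b_ic_j)=0\ (i\neq j).
\]
The objective becomes $L(H_K)=K(2a+r)$.

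\textbf{Block diagonalisation.} The reduced moment matrix is
\[
\begin{bmatrix}1&a\mathbf 1^\top&a\mathbf 1^\top\\ a\mathbf 1&(1-s)\Identity+sJ&r\Identity\\ a\mathbf 1&r\Identity&(1-s)\Identity+sJ\end{bmatrix}.
\]
Split the $b$/$c$ space into the symmetric and antisymmetric combinations, and each of these into $\mathbf 1$ and $\mathbf 1^\perp$. Positive semidefiniteness is then equivalent to the following conditions:
\[
1-s+r\ge0,\qquad 1-s-r\ge0,\qquad 1+(K-1)s-r\ge0,\qquad 2Ka^2\le 1+(K-1)s+r.
\]
The last one comes from the $2\times2$ block that couples $\Identity$ with the symmetric all-ones vector.

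\textbf{The reduced optimisation.} We maximise $2a+r$ over these constraints. For fixed $(s,r)$ we take $a=\sqrt{(1+(K-1)s+r)/(2K)}$. This expression increases with $r$, so the binding constraint is $r=\min\{1-s,\,1+(K-1)s\}$.

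For $s<0$ the minimum is $1+(K-1)s$, and the objective is increasing in $s$. Hence we may restrict to $s\in[0,1]$ with $r=1-s$. The problem becomes maximising the one-variable concave function
\[
f(s)=2\sqrt{\tfrac{2+(K-2)s}{2K}}+1-s .
\]
Its stationary point satisfies $a=(K-2)/(2K)$. This point lies in the admissible range $s\ge0$, i.e. $a\ge 1/\sqrt K$, exactly when $K-2\ge2\sqrt K$, which holds for $K\ge8$ and fails for $K\le7$.

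\textbf{Evaluating the two regimes.}
For $K\le7$, the maximum sits at the boundary $s=0$, $r=1$, $a=1/\sqrt K$. This gives $\beta_1=K+2\sqrt K$, hence $\omega_1=\frac12+\frac1{2\sqrt K}$.
For $K\ge8$, substituting $a=(K-2)/(2K)$ and $s=\frac{(K-2)^2/(2K)-2}{K-2}$ gives
\[
2a+r=1+\frac{K-2}{2K}+\frac{2}{K-2},
\]
hence $\omega_1=\frac58+\frac1{2(K-2)}-\frac1{4K}$. The limit $5/8$ follows immediately.

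Attainment is automatic: the maximising parameters satisfy all PSD conditions, so they define a feasible level-one functional.

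\textbf{Main obstacle.} The delicate step is the rigorous reduction to $r=1-s$, $s\ge0$: all sign regimes of $s$ and all four PSD constraints have to be checked. Verifying that averaging over complex conjugation legitimately sets the imaginary cross moments $L(b_ic_j)$ to zero requires some care but is routine; there $I$ would check that $L\mapsto\overline L$ preserves positivity of the moment matrix and every degree-two relation.
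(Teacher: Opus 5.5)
Your proposal is correct and follows essentially the same route as the paper, which defers to Theorem~9 of \cite{Botteron-Broadbent-Culf-Nechita-Pellegrini-Rochette-24}: average the level-one Gram matrix over key permutations and the $b_i\leftrightarrow c_i$ exchange (realification forces $L(b_ic_j)=0$ for $i\neq j$), obtain a three-variable problem in your $(a,r,s)$ (three free parameters, not four), then optimise. Your four PSD conditions are precisely the level-one blocks $\overline B^{(1)}_{(K),\pm}$ and $\overline B^{(1)}_{(K-1,1),\pm}$ listed in the appendix before the $K\to\infty$ rescaling, and your case split at $K-2\geq 2\sqrt K$ and the resulting values check out.
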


\begin{proof}
    This is \cite[Theorem~9]{Botteron-Broadbent-Culf-Nechita-Pellegrini-Rochette-24}.
\end{proof}

The significance of \cref{eq:level-one-asymptotic-five-eighths} extends beyond the anticommuting protocol.
Let $P_1,\ldots,P_K$ be arbitrary Hermitian Pauli strings and write
\begin{equation}
    P_iP_j=\varepsilon_{ij}P_jP_i,
    \qquad
    \varepsilon_{ij}\in\{+1,-1\},
    \qquad
    \varepsilon_{ii}=1.
    \label{eq:Pauli-commutation-sign-matrix}
\end{equation}
The corresponding transformed scenario algebra satisfies $b_ic_j=\varepsilon_{ij}c_jb_i$.
The level-one relaxation sees this relation only through the entries between the vectors $b_i\ket{\psi}$ and $c_j\ket{\psi}$.

\begin{proposition}[Best Level-One Bound Among Pauli Commutation Patterns]
    \label{prop:anticommuting-best-at-level-one}
    Let $\omega_1^{\varepsilon}(K)$ be the level-one upper bound associated with the commutation-sign matrix $\varepsilon=(\varepsilon_{ij})$ in \cref{eq:Pauli-commutation-sign-matrix}.
    Then
    \begin{equation}
        \omega_1^{\varepsilon}(K)\geq\omega_1^{\mathrm{ac}}(K),
        \label{eq:all-anticommuting-minimises-level-one}
    \end{equation}
    where $\omega_1^{\mathrm{ac}}(K)$ denotes the value for which $\varepsilon_{ij}=-1$ whenever $i\neq j$.
    Consequently, no level-one analysis of a Pauli protocol can produce an asymptotic upper bound below $5/8$.
\end{proposition}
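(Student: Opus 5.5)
The plan is to compare the two level-one semidefinite programs directly. I will show that every feasible level-one moment functional for the anticommuting pattern can be converted into a feasible one for an arbitrary sign matrix $\varepsilon$ with the same objective value. Taking suprema then gives \cref{eq:all-anticommuting-minimises-level-one}.

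First I record which constraints the level-one relaxation imposes. The moment matrix $M_1(L)$ is indexed by $\Identity,b_1,\ldots,b_K,c_1,\ldots,c_K$, and its entries are:
\begin{itemize}
    \item $L(\Identity)=1$;
    \item $L(b_i)$ and $L(c_i)$, which are real because $b_i$ and $c_i$ are Hermitian;
    \item $L(b_i^2)=L(c_i^2)=1$;
    \item $L(b_ib_j)$ and $L(c_ic_j)$, on which the scenario relations impose no degree-two constraint;
    \item $L(b_ic_j)$.
\end{itemize}
For the cross entries, the relation $b_ic_j=\varepsilon_{ij}c_jb_i$ together with $L(w^*)=\overline{L(w)}$ gives $L(b_ic_j)=\varepsilon_{ij}\overline{L(b_ic_j)}$. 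Hence the entry is real when $\varepsilon_{ij}=+1$ and purely imaginary when $\varepsilon_{ij}=-1$. Since $\varepsilon_{ii}=1$, the diagonal cross entries $L(b_ic_i)$ are always real.

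The objective $L(H_K)=\sum_i\bigl(L(b_i)+L(c_i)+L(b_ic_i)\bigr)$ involves only real entries whose constraints do not depend on $\varepsilon$. I also assume, as in the transformed scenario described before the statement, that the reduction of \cite[Proposition~5]{Botteron-Broadbent-Culf-Nechita-Pellegrini-Rochette-24} yields the same game polynomial $H_K$ and the same normalisation $\omega_1^{\varepsilon}(K)=\frac14+\beta_1^{\varepsilon}(K)/(4K)$ for every pattern.

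The key step is the conversion itself. Let $L$ be feasible for the anticommuting pattern, and set $M'\coloneqq\frac12\bigl(M_1(L)+\overline{M_1(L)}\bigr)$, the entrywise real part. This matrix is positive semidefinite, because the complex conjugate of a positive semidefinite Hermitian matrix is again positive semidefinite. Its entries determine a functional $L'$ with the following properties:
\begin{itemize}
    \item Every real entry of $M_1(L)$, in particular $L(b_i)$, $L(c_i)$, $L(b_ic_i)$ and the diagonal ones, is unchanged, so $L'(H_K)=L(H_K)$.
    \item For $i\neq j$, the entries $L(b_ic_j)$ were purely imaginary, so $L'(b_ic_j)=0$. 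Since $0=\varepsilon_{ij}\cdot\overline{0}$, this satisfies the constraint for either sign.
\end{itemize}
Thus $L'$ is feasible for $\varepsilon$, which gives $\beta_1^{\varepsilon}(K)\geq\beta_1^{\mathrm{ac}}(K)$ and hence the inequality. The asymptotic consequence then follows from $\lim_K\omega_1^{\mathrm{ac}}(K)=5/8$ in \Cref{prop:exact-level-one-anticommuting-value}.

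The main obstacle I expect is bookkeeping rather than analysis. I need to check that $M'$ is still a genuine moment matrix of a linear functional. That is, entries corresponding to the same reduced word must stay consistent: for example, $(b_i,c_j)$ and $(c_j,b_i)$ must give conjugate values, and the pair $(b_i,c_j)$, $(c_j,b_i)$ must be linked through the sign relation for $\varepsilon$ rather than for the anticommuting pattern. To handle this, I will check the identification of reduced words under each relation set explicitly, entry by entry, at degree at most two.
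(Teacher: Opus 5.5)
Your proposal is correct and follows essentially the same route as the paper: you realify the anticommuting level-one moment matrix, which preserves positivity and the real objective. This kills the purely imaginary cross moments $L(b_ic_j)$ for $i\neq j$, and you conclude by feasible-set inclusion together with the $5/8$ limit. The bookkeeping you flag is immediate: after realification both the $(b_i,c_j)$ and $(c_j,b_i)$ entries vanish, so they are consistent with either sign $\varepsilon_{ij}$.
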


\begin{proof}
    Let $G$ be a feasible complex Hermitian level-one moment matrix.
    For $i,j\in[K]$, define $z_{ij}\coloneqq G_{b_i,c_j}=L(b_ic_j)$.
    Since $(b_ic_j)^*=c_jb_i=\varepsilon_{ij}b_ic_j$, one has
    \begin{equation}
        \overline{z_{ij}}=\varepsilon_{ij}z_{ij}.
        \label{eq:real-imaginary-level-one-entry}
    \end{equation}
    Thus $z_{ij}$ is real when $P_i$ and $P_j$ commute and purely imaginary when they anticommute.

    The objective and all diagonal constraints are real.
    We may therefore replace $G$ by its entrywise real part $\Re(G)$.
    This matrix remains positive semidefinite: for every $x,y\in\mathbb R^{1+2K}$,
    \begin{equation}
        (x+\mathrm i y)^*\Re(G)(x+\mathrm i y)
        =x^{\mathsf T}\Re(G)x+y^{\mathsf T}\Re(G)y\geq0,
    \end{equation}
    because $x^{\mathsf T}\Re(G)x=x^*Gx\geq0$, and similarly for $y$.
    After this realification, \cref{eq:real-imaginary-level-one-entry} gives that $z_{ij}=0$ when $\varepsilon_{ij}=-1$, whereas a commuting pair leaves $z_{ij}$ as a free real moment.

    Hence the all-anticommuting pattern imposes all the zero constraints imposed by any other Pauli commutation pattern, and possibly more.
    Every real level-one moment matrix feasible for the all-anticommuting pattern is therefore feasible for the pattern $\varepsilon$.
    The feasible-set inclusion gives \cref{eq:all-anticommuting-minimises-level-one}.
    The final assertion follows from \cref{eq:level-one-asymptotic-five-eighths}.
\end{proof}

\begin{remark}
    \label{rem:level-one-blindness-other-Pauli-constraints}
    This proposition does not say that the anticommuting protocol is the most secure Pauli protocol.
    It says that it is the most secure protocol that the first NPA level can distinguish (using only degree-two moments).
    Relations involving products such as $b_ib_j$, $c_ic_j$, or $b_ic_jb_kc_\ell$ first become visible at higher levels.
    Thus, the $5/8$ obstruction is a limitation of the relaxation, not a cryptographic impossibility result.
\end{remark}

\subsubsection{Symmetry-Reduced Asymptotic Bounds Through Level Three}
\label{subsubsec:anticommuting-asymptotic-level-three}

The level-two construction of \cite[Section~4.3]{Botteron-Broadbent-Culf-Nechita-Pellegrini-Rochette-24} exploits symmetry by identifying moment entries that belong to the same signed orbit.
This reduces the number of scalar variables from $1+3K^2$ to the eighteen parameters $g_1,\ldots,g_{18}$ displayed there.
That orbit reduction is extremely convenient, and we retain exactly the same moment convention.
However, it leaves one positive-semidefinite constraint on a Gram matrix whose dimension still grows with $K$.

Our complementary reduction applies the Wedderburn decomposition to the invariant Gram matrix itself.
The relevant symmetry group is
$$
	\mathfrak S_K\times C_2,
$$
where $\mathfrak S_K$ simultaneously permutes the indices and the non-trivial element of $C_2$ exchanges the $b$- and $c$-families, including the signs required to restore normal order.
Averaging a feasible moment functional over this finite group preserves both feasibility and the objective.
The moment matrix therefore belongs to the commutant of its action, and the standard symmetry reduction for semidefinite programs applies \cite{GatermannParrilo2004,Serre1977}.

\begin{proposition}[Multiplicity-Block Reduction]
    \label{prop:anticommuting-multiplicity-block-reduction}
    Fix a hierarchy level $t$.
    In the stable range of $K$, the reduced word space admits an orthogonal decomposition
    \begin{equation}
        \mathbb R^{\mathcal B_t(K)}
        \cong
        \bigoplus_{\lambda,\epsilon}
        S^\lambda\otimes\mathbb R^{m^{(t)}_{\lambda,\epsilon}},
        \label{eq:anticommuting-isotypic-decomposition}
    \end{equation}
    where $\lambda\vdash K$, $S^\lambda$ is the corresponding real Specht module, and $\epsilon\in\{+1,-1\}$ is the eigenvalue of the family exchange.
    Every invariant moment matrix has the form
    $$
        M_t(g)
        \cong
        \bigoplus_{\lambda,\epsilon}
        \Identity_{S^\lambda}\otimes B^{(K)}_{\lambda,\epsilon}(g).
    $$
    Consequently,
    \begin{equation}
        M_t(g)\succcurlyeq\zero
        \quad\Longleftrightarrow\quad
        B^{(K)}_{\lambda,\epsilon}(g)\succcurlyeq\zero
        \quad\text{for every occurring }(\lambda,\epsilon).
        \label{eq:anticommuting-block-PSD-equivalence}
    \end{equation}
    At fixed $t$, the multiplicity dimensions $m^{(t)}_{\lambda,\epsilon}$ stabilize as $K\to\infty$.
\end{proposition}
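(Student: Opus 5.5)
This is a standard representation-theoretic block diagonalization of an invariant matrix, so the plan is to make the group action explicit, decompose it with Schur's lemma, and then prove stabilization separately.

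\emph{Step 1: the action.} We first make explicit the action of $G\coloneqq\mathfrak S_K\times C_2$ on the index set $\mathcal B_t(K)$. A permutation $\pi\in\mathfrak S_K$ sends a reduced word $w(b_{i_1},c_{j_1},\dots)$ to $w(b_{\pi(i_1)},c_{\pi(j_1)},\dots)$. The exchange sends $b_i\leftrightarrow c_i$, followed by renormal-ordering. By \cref{eq:anticommuting-scenario-relations}, this renormal-ordering only produces a sign $\pm1$. Hence $G$ acts on $\mathbb R^{\mathcal B_t(K)}$ by signed permutation matrices $\rho(g)$, which are orthogonal. The relations are $G$-invariant and $H_K$ is $G$-invariant, so the averaged functional $\bar L=\frac{1}{|G|}\sum_g L\circ g$ is feasible and has the same objective. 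Its moment matrix satisfies $\rho(g)M_t\rho(g)^{\mathsf T}=M_t$, that is, $M_t$ lies in the commutant of $\rho$.

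\emph{Step 2: block diagonalization.} The exchange commutes with $\mathfrak S_K$ and squares to the identity. So $\mathbb R^{\mathcal B_t(K)}$ splits into its $\epsilon=\pm1$ eigenspaces, each an $\mathfrak S_K$-module. Every irreducible real representation of $\mathfrak S_K$ is absolutely irreducible: it is a Specht module $S^\lambda$, and its real endomorphism algebra is $\mathbb R$. Decomposing each eigenspace into isotypic components gives \cref{eq:anticommuting-isotypic-decomposition}. Choosing an orthonormal basis adapted to this decomposition, Schur's lemma shows that any matrix in the commutant acts as $\Identity_{S^\lambda}\otimes B_{\lambda,\epsilon}$ on each component and has no cross terms between components. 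Positivity of a direct sum is equivalent to positivity of each summand, and $\Identity\otimes B\succcurlyeq\zero$ if and only if $B\succcurlyeq\zero$. This gives \cref{eq:anticommuting-block-PSD-equivalence}. The blocks $B^{(K)}_{\lambda,\epsilon}(g)$ depend linearly on the orbit parameters $g$, because $M_t(g)$ does.

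\emph{Step 3: stabilization (main obstacle).} I would decompose $\mathcal B_t(K)$ by word type. Each reduced word of degree at most $t$ is determined by a pattern: a word in abstract letters $b_{\cdot},c_{\cdot}$ together with a set partition saying which positions share an index. It is specified by an injective assignment of $r\leq t$ distinct indices to the $r$ blocks of the partition. Thus, as an $\mathfrak S_K$-set, the index set is a finite disjoint union, independent of $K$, of twisted copies of the set of injections $[r]\hookrightarrow[K]$, possibly modulo a stabilizer subgroup $H\leq\mathfrak S_r$ and with signs. The corresponding signed permutation module is an induced module
\[
\mathrm{Ind}_{\mathfrak S_{K-r}\times H}^{\mathfrak S_K}\bigl(\mathrm{triv}\otimes\chi\bigr).
\]
By the branching rule (Pieri), the multiplicity of $S^\lambda$ in it equals the multiplicity, in the $H$-twisted module, of the skew shape obtained by removing the first row of $\lambda$. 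Write $\lambda=(K-|\mu|,\mu)$. For $K\geq|\mu|+\mu_1$ and $|\mu|\leq r\leq t$, this multiplicity depends only on $\mu$. This is exactly the standard representation stability of $\mathrm{FI}$-modules.

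The exchange eigenvalue $\epsilon$ commutes with $\mathfrak S_K$ and acts on these multiplicity spaces through a $K$-independent recipe, so $m^{(t)}_{\lambda,\epsilon}$ also stabilizes. The "stable range" is then $K\geq2t$ or so. The delicate point is checking that the signs coming from renormal-ordering do not depend on $K$ in this range. This holds because they depend only on the relative order pattern of the indices, not on their values.
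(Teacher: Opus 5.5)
Your proposal is correct and follows essentially the paper's route: the block form comes from Schur's lemma on the commutant of the orthogonal $\mathfrak S_K\times C_2$ action (the paper's proof is literally ``real commutant decomposition and Schur's lemma''), and stability comes from the fact that every word has support at most $t$. As a result, $\mathbb R^{\mathcal B_t(K)}$ is a sum of permutation modules induced from pointwise stabilizers of supports, and Young's rule restricts the occurring shapes to $\lambda=(K-d,\nu)$ with $d\le t$.

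The paper is more concrete than you only on the $\epsilon$-split, which it obtains from the twisted character $\theta_t(\sigma)=\operatorname{Tr}(\sigma\tau)$, an explicit polynomial in the cycle counts $X_1(\sigma),X_2(\sigma)$, via $m^{(t)}_{\lambda,\pm}=\frac12\bigl(\langle\chi_t,\chi^\lambda\rangle\pm\langle\theta_t,\chi^\lambda\rangle\bigr)$; this is a cleaner way to substantiate your ``$K$-independent recipe''. Two small corrections to your Step 3 are needed. First, the $\mathfrak S_K$-stabilizer of a word is exactly the pointwise stabilizer of its $r$-element support, so there is no extra $H$ and there are no signs; signs enter only through $\tau$, and they depend on the equality pattern of indices, not on their relative order. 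Second, the relevant skew shape is $\lambda/(K-r)$, whose multiplicity is $K$-independent only for $K\ge r+\mu_1$, not $K\ge|\mu|+\mu_1$. For example, with $r=2$ and $\mu=(1)$ the multiplicity is $1$ at $K=2$ but $2$ for all $K\ge3$. Your final range $K\ge 2t$ still covers this.
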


\begin{proof}
    The statement is the real commutant decomposition and Schur's lemma.
    Representation stability follows because a level-$t$ word involves at most $t$ indices: only partitions of the form $\lambda=(K-d,\nu)$ with $0\leq d\leq t$ can occur.
    A detailed construction of the blocks, including the exact multiplicity formula and the stable choice of coordinates, is given in \Cref{app:anticommuting-NPA-reduction}.
    See also \cite{ChurchEllenbergFarb2015,Sagan2001,Macdonald1995} for the underlying stable representation theory.
\end{proof}

The distinction with the orbit method is therefore as follows.
Orbit reduction identifies equal scalar entries of an invariant matrix; Wedderburn reduction decomposes the invariant matrix into its irreducible positive-semidefinite blocks.
The two procedures are compatible and complementary: the former provides a concise set of variables, while the latter removes the growth of the semidefinite block dimensions at fixed hierarchy level.

Our implementation can be found in the accompanying GitHub repository \cite{GitHub}.
It first reduces words and signed moment orbits exactly.
It computes the Specht multiplicities from irreducible characters, evaluated by the Murnaghan--Nakayama rule, and constructs the multiplicity spaces with Young--Jucys--Murphy projectors \cite{Jucys1974,Murphy1981,OkounkovVershik1996}.
Importantly, it selects one family of stable projected seed vectors simultaneously for all sampled values of $K$ and does not orthonormalize it separately at each $K$.
If $V^{(K)}_{\lambda,\epsilon}$ is the resulting full-rank seed matrix, the implemented block is the exact congruence
$$
    B^{(K)}_{\lambda,\epsilon}(g)
    =\bigl(V^{(K)}_{\lambda,\epsilon}\bigr)^{\mathsf T}
    M_t(g)V^{(K)}_{\lambda,\epsilon}.
$$
This avoids $K$-dependent square roots and Gram--Schmidt sign choices and gives a single coordinate convention suitable for exact interpolation.
The finite-$K$ coefficients are obtained in exact arithmetic, interpolated as rational functions of $K$, checked at independent values, and specialized either at a finite value of $K$ or at $K=\infty$ after an exact symbolic congruence.
The technical construction and a level-two example are recorded in \Cref{app:anticommuting-NPA-reduction}.

\begin{proposition}[Existence of the Stable Reduced Relaxations]
    \label{prop:stable-reduced-relaxations-level-three}
    For each $t\in\{1,2,3\}$, the finite-$K$ level-$t$ relaxation admits, in the stable range, a finite family of multiplicity-block constraints whose dimensions do not depend on $K$.
    After the exact block congruence used in the implementation, every coefficient has a finite limit as $K\to\infty$.
    Denote its optimum by $\omega_t^{(\infty)}$.
    Then
    \begin{equation}
        \limsup_{K\to\infty}
        \Prob^*\of[\big]{\text{\normalfont win MoE}}
        \leq
        \omega_t^{(\infty)}.
        \label{eq:limsup-security-bound}
    \end{equation}
\end{proposition}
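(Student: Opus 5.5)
The argument has three stages: first make sure the block reduction of \Cref{prop:anticommuting-multiplicity-block-reduction} produces constraints of fixed size, then show that the congruence coefficients have limits, and finally turn the finite-$K$ bounds into a $\limsup$ bound.

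\textbf{Fixed-size blocks.} Fix $t\in\{1,2,3\}$ and apply \Cref{prop:anticommuting-multiplicity-block-reduction}. A reduced word of degree at most $t$ has a \emph{shape}: which pattern of letters $b,c$ occurs on which repeated or distinct indices, with at most $t$ distinct indices. The number of shapes is finite and independent of $K$. Each shape spans a permutation module isomorphic to an induced representation $\mathrm{Ind}_{\mathfrak S_{K-d}\times H}^{\mathfrak S_K}$, with $d\le t$. By the branching and Pieri rules, for $K\ge 2t$ the multiplicity of $S^{(K-d,\nu)}$ in such a module depends only on $\nu$ and the shape, not on $K$. Splitting further by the eigenvalue $\epsilon$ of the $b\leftrightarrow c$ exchange keeps the multiplicities $m^{(t)}_{\lambda,\epsilon}$ finite and stable. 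So there is a finite list of PSD constraints $B^{(K)}_{\lambda,\epsilon}(g)\succcurlyeq 0$ whose sizes do not depend on $K$. The variables $g$ are the signed orbit representatives of words of degree at most $2t$, and their number is also $K$-independent.

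\textbf{Limits of the coefficients.} Each entry of $(V^{(K)})^{\mathsf T}M_t(g)V^{(K)}$ is linear in $g$. Its coefficients are finite sums over orbits, and orbit sizes are polynomials in $K$. The seed vectors are built from Young symmetrizers applied to fixed words, so their entries are rational in $K$. Hence each coefficient is a rational function of $K$. By the exact interpolation, verified at independent sample points and hence correct for all large $K$, we read off its degree. For a rational function, a finite limit as $K\to\infty$ is equivalent to the numerator degree being at most the denominator degree.

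If some entry grows with $K$, I would rescale with a fixed congruence $D^{(K)}B^{(K)}D^{(K)}$, where $D^{(K)}$ is diagonal with entries powers of $K$. This preserves positive semidefiniteness, and the implementation's choice of $V^{(K)}$ already includes such normalizations. The objective $\beta_t(K)/(4K)$ is likewise the orbit sum of $L(b_i)+L(c_i)+L(b_ic_i)$ divided by $K$, so it is $K$-independent in the orbit variables. Call the limiting block family $B^{(\infty)}_{\lambda,\epsilon}(g)$ and its optimum $\omega_t^{(\infty)}$.

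\textbf{The $\limsup$ bound.} By \Cref{prop:NPA-upper-bound-monotonicity-anticommuting}, $\Prob^*\le\omega_t(K)$, so it suffices to show $\limsup_K\omega_t(K)\le\omega_t^{(\infty)}$. Take near-optimal $g^{(K)}$ along a subsequence attaining the $\limsup$. The constraints $L(\Identity)=1$, the involution relations and $M_t\succcurlyeq 0$ force every moment to have absolute value at most $1$, so the $g^{(K)}$ lie in a compact set. Extract a convergent subsequence $g^{(K)}\to g^\ast$. Since the coefficients converge and the PSD cone is closed, $B^{(\infty)}_{\lambda,\epsilon}(g^\ast)\succcurlyeq 0$. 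The objective is continuous, so its value at $g^\ast$ equals the $\limsup$, which gives \cref{eq:limsup-security-bound}.

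\textbf{Main obstacle.} The hard step is the middle one: certifying that after the exact block congruence every coefficient really has a finite limit, and that rescaling does not make a block degenerate in the limit. Degeneration would not break the bound, because it only weakens the constraints, but it would make $\omega_t^{(\infty)}$ useless. I would first handle this with the explicit rational interpolation, checking numerator versus denominator degrees. As a fallback I would derive the leading asymptotics of orbit counts directly from the shape combinatorics.
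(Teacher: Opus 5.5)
Your proposal follows essentially the same route as the paper: stable multiplicities give $K$-independent block sizes, and falling-factorial orbit counts together with rational projector coefficients make every block coefficient rational in $K$. A renormalizing congruence then yields finite limits, and bounded unit-diagonal moments plus closedness of the PSD cone along a subsequence of near-optimizers give the $\limsup$ bound.

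The only differences are cosmetic. You use Young symmetrizers and diagonal powers of $K$ where the paper uses Young--Jucys--Murphy projectors and a symbolic $LDL^{\mathsf T}$ congruence of the Gram block $V^{\mathsf T}V$. Your remark that a degenerate limit only weakens the limiting SDP, and so cannot invalidate \cref{eq:limsup-security-bound}, is correct.
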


\begin{proof}
    The finite block equivalence is \cref{eq:anticommuting-block-PSD-equivalence}.
    At fixed $t$, only finitely many equality patterns of at most $2t$ indices occur.
    Their orbit cardinalities are falling-factorial polynomials in $K$, while the Young--Jucys--Murphy projectors have rational coefficients in $K$ in the stable coordinate convention.
    Hence the block coefficients are rational functions of $K$.
    The symbolic congruence removes their support-dependent leading powers, so coefficientwise limits exist.
    The moment variables are uniformly bounded, because they are entries of positive semidefinite moment matrices with unit diagonal. Taking a convergent subsequence of finite-$K$ optimizers and passing coefficientwise to the limit therefore gives a feasible point of the limiting SDP, which proves the upper bound in \cref{eq:limsup-security-bound}.
    Further details are provided in \Cref{app:anticommuting-NPA-reduction}.
\end{proof}

The exact construction of the limiting SDPs was followed by a numerical solution of those problems.
The values obtained for these asymptotic upper bounds are
$$
    \begin{array}{c|l|c}
        \text{Level }t
        & \displaystyle\omega_t^{(\infty)}
        & \text{Status}\\[1ex]
        \hline
        1 & 0.625=5/8 & \text{exact}\\
        2 & 0.579982991 & \text{numerical}\\
        3 & 0.555608131 & \text{numerical}
    \end{array}
$$
The successive improvements are substantial and are consistent with the conjectured limit $1/2$ from \cref{eq:anticommuting-security-conjecture}.
They do not, by themselves, prove this conjecture: the values at levels two and three are numerical optima of rigorously specified limiting SDPs.
Note that our results for level two reproduce those of \cite{Botteron-Broadbent-Culf-Nechita-Pellegrini-Rochette-24}, namely, that the value from \cref{eq:anticommuting-security-conjecture} is numerically attained up to $K=17$.
Moreover, our symmetry reduction of level three shows numerically that this value is obtained up to $K=40$.

\subsubsection{Current Status at Level Four}
\label{subsubsec:anticommuting-level-four}

The same construction has been implemented at level four.
Mathematically, this level introduces reduced words of degree four and moments of degree eight.
New stable Specht shapes with four boxes below the first row appear, and the twisted character of the family exchange acquires contributions from words containing two $b$-letters and two $c$-letters.
Thus, level four genuinely detects identifications and signs that are invisible at levels two and three.

The finite-$K$ code is functional, but the exact construction is currently prohibitively slow.
The reduced word space grows as $K^4$, while a direct classification of its moment entries has an intrinsic $K^8$ scale before symmetry and sparsity are exploited.
The implementation uses canonical equality patterns, packed moment-entry buckets, exact character calculations, and a common stable seed convention to limit memory consumption.
Nevertheless, computations beyond $K=10$ take too long for the collection of sufficiently many exact interpolation and validation points.
We therefore do not presently report a general $K$-dependent level-four SDP or an asymptotic level-four value.

This computational obstruction is not evidence of a failure of the method.
It reflects the cost of producing exact, convention-compatible blocks rather than a single floating-point relaxation at a fixed value of $K$.
The level-four implementation already supplies finite-$K$ relaxations and a precise framework for a future asymptotic computation, but further combinatorial compression of the coefficient projection is required before that computation becomes practical.

\subsection{Security Analysis of Some Efficient Protocols}
\label{subsec:Security_Analysis_of_some_Efficient_Protocols}

\subsubsection{Numerical Evidence Towards Strong Security}
\label{subsubsec:numerical-evidence-efficient-protocols}

We now apply the first level of the NPA hierarchy described in
\Cref{subsec:Security_Analysis_of_the_Anticommuting_Protocol}
to the efficient Pauli Encodings introduced in
\Cref{subsec:Examples-of-Pauli-Encodings}.
The relaxation only uses the commutation or anticommutation relation between every pair of Pauli strings.
The Julia implementation can be found in the accompanying repository \cite{GitHub}.

For computational convenience, the identity is adjoined whenever this restores a simple tensor-product description.
If the augmented labelled family contains $K$ operators, of which $r$ are identity labels, then \Cref{prop:Pwin-of-P_n-union-Q_n} gives directly
\begin{equation}
    \Prob^*\of[\big]{\text{\normalfont win MoE}\mid\PauliSet}
    \;\leq\;
    \frac{K\,p_{\mathrm{NPA},1}-r}{K-r},
    \label{eq:efficient-protocol-affine-identity-removal}
\end{equation}
where $p_{\mathrm{NPA},1}$ is the level-one value computed for the augmented family.
Here $r=1$ for $\set{\Identity_2,X,Z}^{\otimes n}$ and $\set{\Identity_2,X,Y,Z}^{\otimes n}$, whereas $r=2^n$ for the augmented $HZH$ family obtained by adjoining $0^n$ to $\Lall(n)$; no correction is required for $\Lsym(w,n)$.
The table below reports the affine-corrected outputs of the SDP after removing all identity labels.

\subsubsection[Some Values at Finite K]{Some Values at Finite $K$}
\label{subsubsec:finite-K-efficient-protocol-values}

\begin{table}[t]
    \centering
    \small
    \setlength{\tabcolsep}{5pt}
    \renewcommand{\arraystretch}{1.45}
    \begin{tabular}{lccccc}
        \shortstack{Pauli\\encoding} & $K$ & \shortstack{Identity\\attack} && \shortstack{NPA\\level one} & \shortstack{Conjectured\\exact value} \\ \hline

        \rowcolor{gray1} \multicolumn{6}{c}{$n=1$} \\
        $\Kreal^{(1)}$ & $2$ & $0.8536$ & $\approx$ & $0.8536$ & $(2+\sqrt{2})/4$ \\
        $\KHZH^{(1,\Lsym(1,1))}$ & $2$ & $0.8536$ & $\approx$ & $0.8536$ & $(2+\sqrt{2})/4$ \\
        $\KHZH^{(1,\Lall(1))}$ & $2$ & $0.8536$ & $\approx$ & $0.8536$ & $(2+\sqrt{2})/4$ \\
        $\Kall^{(1)}$ & $3$ & $0.7887$ & $\approx$ & $0.7887$ & $(3+\sqrt{3})/6$ \\

        \hline
        \rowcolor{gray1} \multicolumn{6}{c}{$n=2$} \\
        $\Kreal^{(2)}$ & $8$ & $0.8018$ & $\approx$ & $0.8018$ & $(5+\sqrt{2})/8$ \\
        $\KHZH^{(2,\Lall(2))}$ & $12$ & $0.8190$ & $\approx$ & $0.8190$ & $(7+2\sqrt{2})/12$ \\
        $\Kall^{(2)}$ & $15$ & $0.7155$ & $\neq$ & $0.7236$ & $(5+\sqrt{5})/10$ \\

        \hline
        \rowcolor{gray1} \multicolumn{6}{c}{$n=3$} \\
        $\KHZH^{(3,\Lsym(1,3))}$ & $24$ & $0.8536$ & $\approx$ & $0.8536$ & $(2+\sqrt{2})/4$ \\
        $\KHZH^{(3,\Lsym(2,3))}$ & $24$ & $0.7500$ & $\approx$ & $0.7500$ & $3/4$ \\
        $\Kreal^{(3)}$ & $26$ & $0.7514$ & $\approx$ & $0.7514$ & $(32+5\sqrt{2})/52$ \\
        $\KHZH^{(3,\Lall(3))}$ & $56$ & $0.7839$ & $\approx$ & $0.7839$ & $(34+7\sqrt{2})/56$ \\
        $\Kall^{(3)}$ & $63$ & $0.6539$ & $\neq$ & $0.6687$ & $337/504$ \\

        \hline
        \rowcolor{gray1} \multicolumn{6}{c}{$n=4$} \\
        $\KHZH^{(4,\Lsym(1,4))}$ & $64$ & $0.8536$ & $\approx$ & $0.8536$ & $(2+\sqrt{2})/4$ \\
        $\KHZH^{(4,\Lsym(3,4))}$ & $64$ & $0.6768$ & $\neq$ & $0.6771$ & $65/96$ \\
        $\Kreal^{(4)}$ & $80$ & $0.7061$ & $\approx$ & $0.7061$ & $(24+3\sqrt{2})/40$ \\
        $\KHZH^{(4,\Lsym(2,4))}$ & $96$ & $0.7500$ & $\approx$ & $0.7500$ & $3/4$ \\
        $\KHZH^{(4,\Lall(4))}$ & $240$ & $0.7498$ & $\approx$ & $0.7498$ & $(73+12\sqrt{2})/120$ \\
        $\Kall^{(4)}$ & $255$ & $0.6073$ & $\neq$ & $0.6436$ & $1313/2040$ \\
        \hline
    \end{tabular}
    \caption{
        Identity-attack probabilities (lower bounds) and numerical level-one NPA upper bounds after removing all identity labels.
        Here, $K$ denotes the resulting number of keys.
        The exact expressions in the last column are conjectures for the level-one NPA values.
        As explained in \Cref{subsubsec:anticommuting-optimal-level-one}, the level-one NPA values are lower bounded by $5/8=0.625$ for any Pauli Encoding.
    }
    \label{tab:level-one-efficient-Pauli-protocols}
\end{table}

The identity attack numerically saturates the level-one relaxation for $\set{\Identity_2,X,Z}^{\otimes n}$ for every $1\leq n\leq4$ considered here.
The same is true for the $HZH$ family associated with $\Lall(n)$, now also for $n=4$.
Since the affine map in \eqref{eq:efficient-protocol-affine-identity-removal} preserves equality, the same numerical saturation holds after removing the identity labels.
This provides strong evidence that, at level one, the behaviour of these two families is completely captured by the simplest attack.

The situation is different for $\set{\Identity_2,X,Y,Z}^{\otimes n}$: from $n=2$ onwards, the level-one value is strictly larger than the value of the identity attack.
Nevertheless, the numerical upper bounds decrease with~$n$.
At this stage, neither the conjectured exact values nor their asymptotic behaviour are proved.

Finally, the fixed-weight $HZH$ families show that increasing the number of labels does not by itself guarantee security.
For fixed~$w$, the identity attack has winning probability
\begin{equation}
    \Prob\of[\big]{\text{\normalfont win MoE}\mid V_P=\Identity}
    =
    \frac12+\frac{1}{2^{w/2+1}},
\end{equation}
independently of~$n$.
Thus, $w$ must grow with~$n$ for this family to be a candidate for strong security.

\printbibliography

\appendix
\addtocontents{toc}{\protect\setcounter{tocdepth}{1}}

\small
\section{\texorpdfstring{\!}{}Exact Symmetry Reduction for the Anticommuting NPA Hierarchy}
\label{app:anticommuting-NPA-reduction}

This appendix records the mathematical construction whose implementation accompanies this article \cite{GitHub}.
The purpose is to justify the finite-$K$ block reduction and the passage to stable large-$K$ coordinates without reproducing the lengthy coefficient matrices generated by the code.
Throughout, $K$ denotes the number of pairs of generators.

\subsection{Reduced Words and Signed Symmetries}

Let $\mathcal A_{\mathrm{ac}}(K)$ be the algebra from \cref{def:anticommuting-scenario-algebra}.
Every word can be put in the form
\begin{equation}
    w=b_{\alpha_1}\cdots b_{\alpha_p}c_{\beta_1}\cdots c_{\beta_q},
    \qquad
    \alpha_r\neq\alpha_{r+1},
    \qquad
    \beta_s\neq\beta_{s+1},
    \label{eq:appendix-normal-word}
\end{equation}
by moving every $b$-letter to the left of every $c$-letter and cancelling adjacent equal letters within each family.
A crossing of $c_i$ with $b_j$ contributes the sign $-1$ precisely when $i\neq j$.

\begin{lemma}[Uniqueness of the Normal Form]
    \label{lem:unique-normal-form-anticommuting}
    The rewriting procedure above is terminating and confluent.
    Consequently, every element represented by a word has a unique signed normal form of the type \cref{eq:appendix-normal-word}.
\end{lemma}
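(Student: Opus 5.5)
Termination and confluence are most cleanly obtained from a normal-form argument: we exhibit a concrete model in which the candidate normal forms are visibly linearly independent. This is the diamond-lemma strategy, with the ambiguity check replaced by a representation.

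\emph{Termination.} Order words by the pair (length, number of inversions), where an inversion is a pair consisting of a $c$-letter placed to the left of a $b$-letter. This pair is compared lexicographically. Cancelling $b_ib_i$ or $c_ic_i$ strictly decreases the length. Swapping an adjacent $c_jb_i$ into $b_ic_j$ keeps the length fixed and removes exactly one inversion. Since the order is a well-order, every sequence of rewriting steps terminates. Each terminal word has all $b$-letters to the left of all $c$-letters and no adjacent equal letters within a family, which is exactly the form in \cref{eq:appendix-normal-word}. The reached word carries a sign, namely the product of the $-1$ factors incurred at crossings with $i\neq j$.

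\emph{Uniqueness (confluence).} Rather than checking all overlapping ambiguities by hand, we construct a faithful model of $\mathcal A_{\mathrm{ac}}(K)$. Let $F$ be the group algebra of the free product $(\mathbb Z_2)^{*K}$, which is generated by involutions $u_1,\dots,u_K$ with no other relations. Define $b_i\mapsto u_i\otimes\Identity$ and $c_j\mapsto \theta_j\otimes u_j$, both acting on the space $\mathbb C[(\mathbb Z_2)^{*K}]\otimes\mathbb C[(\mathbb Z_2)^{*K}]$ (left regular action on each factor). Here $\theta_j$ is chosen on the first tensor factor so that $\theta_j^2=\Identity$, $\theta_j$ commutes with $u_j$, and $\theta_j$ anticommutes with $u_i$ for $i\neq j$.

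A concrete choice is to let $\theta_j$ act on a basis group element $g$ by the sign $(-1)^{\#\{\text{letters of } g \text{ different from } j\}}$. This is the right-grading by the homomorphism $(\mathbb Z_2)^{*K}\to\mathbb Z_2$ that sends $u_i\mapsto 1$ for $i\neq j$ and $u_j\mapsto0$. Multiplying on the left by $u_i$ flips this parity exactly when $i\neq j$, which gives the required (anti)commutation.

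All relations in \cref{eq:anticommuting-scenario-relations} then hold. Moreover, a normal word $b_{\alpha_1}\cdots b_{\alpha_p}c_{\beta_1}\cdots c_{\beta_q}$ applied to $e\otimes e$ yields $\pm\, u_{\alpha_1}\cdots u_{\alpha_p}\otimes u_{\beta_1}\cdots u_{\beta_q}$. Since the $\alpha$ and $\beta$ sequences are reduced words in the free product, these vectors are distinct basis elements for distinct normal forms.

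Consequently, distinct normal forms are linearly independent in $\mathcal A_{\mathrm{ac}}(K)$; in particular none of them equals $0$, and no normal form equals $\pm$ another. Now suppose a word rewrote to two normal forms $\pm w_1$ and $\pm w_2$. They would represent the same element, so $w_1=w_2$ as words. The signs must also agree, since otherwise $2w_1=0$, contradicting the fact that $w_1$ acts as a nonzero operator.

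This gives uniqueness of the signed normal form. Confluence then follows because every rewriting path ends in this same unique signed normal form.

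\emph{Main obstacle.} The delicate step is verifying the sign operator $\theta_j$: we must check that it anticommutes with $u_i\otimes\Identity$ for $i\neq j$ and commutes with $u_j\otimes\Identity$ and with the second-factor action. Equally, we must check that the cancellations inside the free-product factor never create a sign inconsistency. Both reduce to the homomorphism property of the parity map, which we will state explicitly.
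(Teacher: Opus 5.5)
Your proof is correct, but it establishes confluence by a different route from the paper.

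The paper proves termination the same way you do. It then applies Bergman's Diamond Lemma directly, checking that the overlap ambiguities resolve. These are a crossing adjacent to a cancellation, such as $c_jc_jb_i$ or $c_jb_ib_i$, and they resolve because the crossing sign is picked up twice.

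You replace this ambiguity check by a module in which normal words are visibly independent:
\begin{itemize}
\item The assignment $b_i\mapsto u_i\otimes 1$, $c_j\mapsto\theta_j\otimes u_j$ on $\mathbb C[(\mathbb Z_2)^{*K}]\otimes\mathbb C[(\mathbb Z_2)^{*K}]$ satisfies all defining relations. This holds because the parity map $u_i\mapsto 1$ ($i\neq j$), $u_j\mapsto 0$ is a well-defined homomorphism on the free product.
\item A normal word sends $e\otimes e$ to $u_{\alpha_1}\cdots u_{\alpha_p}\otimes u_{\beta_1}\cdots u_{\beta_q}$. The sign is in fact $+1$, since every $\theta_j$ fixes $e$.
\item The normal-form theorem for free products makes these basis vectors distinct.
\end{itemize}
Termination plus linear independence then forces every rewriting path from a given word to end at the same signed normal word. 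For a terminating system, this is equivalent to confluence.

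The paper's route is shorter and purely combinatorial. However, it relies on the ambiguity list being complete; it does not mention, for instance, the trivially resolvable self-overlaps $b_ib_ib_i$ and $c_ic_ic_i$.

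Your route needs no case analysis. It costs an appeal to the free-product normal-form theorem and an infinite-dimensional representation. In return, it makes explicit that normal words form a linear basis of $\mathcal A_{\mathrm{ac}}(K)$ and are nonzero. That is exactly what guarantees the moment variables indexed by normal words carry no hidden linear relations.
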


\begin{proof}
    Moving a $c$-letter to the right decreases the number of out-of-order pairs, while cancelling $b_i^2$ or $c_i^2$ decreases the word length.
    Hence no infinite reduction sequence exists.
    The only non-disjoint ambiguities arise when a crossing is adjacent to a cancellation.
    Resolving the cancellation first or moving the letter through first produces the same result, because the crossing sign occurs twice and therefore squares to $1$.
    All ambiguities are resolvable, so Bergman's Diamond Lemma applies \cite{Bergman1978}.
\end{proof}

For $t\in\NN$, let $\mathcal B_t(K)$ be the normal words of degree at most $t$ and let $\mathcal W_t(K):=\mathbb R^{\mathcal B_t(K)}$ with its canonical orthonormal word basis.
If $a_0(x)=1$ and $a_p(x)=x(x-1)^{p-1}$ for $p\geq1$, then
\begin{equation}\nonumber
    \dim\mathcal W_t(K)=\sum_{p+q\leq t}a_p(K)a_q(K).
\end{equation}
In particular, the dimension grows as $K^t$ at fixed $t$.

The symmetric group $\mathfrak S_K$ acts by simultaneous relabelling of the indices.
Let $\tau$ exchange the two generator families and restore normal order:
\begin{equation}\nonumber
    \tau\bigl(
        b_{\alpha_1}\cdots b_{\alpha_p}
        c_{\beta_1}\cdots c_{\beta_q}
    \bigr)
    =(-1)^{\#\{(r,s):\alpha_r\neq\beta_s\}}
        b_{\beta_1}\cdots b_{\beta_q}
        c_{\alpha_1}\cdots c_{\alpha_p}.
\end{equation}
The actions of $\mathfrak S_K$ and $\tau$ commute, giving an orthogonal action of
$G_K:=\mathfrak S_K\times C_2$.
The moment variables are indexed by signed orbits under simultaneous relabelling, $\tau$, and the adjoint $w\mapsto w^*$.
If an orbit contains the same normal word with both signs, its real moment is forced to vanish.
This recovers the eighteen-variable convention of \cite[Section~4.3]{Botteron-Broadbent-Culf-Nechita-Pellegrini-Rochette-24} at level two and extends it canonically at higher levels.

\subsection{Characters and Multiplicities}

For $\sigma\in\mathfrak S_K$, let $X_r(\sigma)$ denote the number of $r$-cycles of $\sigma$.

\begin{lemma}[Character of the Word Representation]
    \label{lem:appendix-word-character}
    The character of $\mathcal W_t(K)$ is
    \begin{equation}\nonumber
        \chi_t(\sigma)
        =\sum_{p+q\leq t}
        a_p\bigl(X_1(\sigma)\bigr)
        a_q\bigl(X_1(\sigma)\bigr).
    \end{equation}
\end{lemma}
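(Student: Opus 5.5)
The plan is to compute the character as a trace of a permutation matrix. Since $\mathcal W_t(K)$ has the orthonormal basis $\mathcal B_t(K)$ of normal words, and $\sigma\in\mathfrak S_K$ acts by simultaneous relabelling of indices, $\chi_t(\sigma)$ counts the normal words $w$ of degree at most $t$ with $\sigma\cdot w=\pm w$, each weighted by its sign.

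\textbf{Step 1: the action is an honest permutation.} The first thing to check is that relabelling introduces no signs. Relabelling sends a normal word $b_{\alpha_1}\cdots b_{\alpha_p}c_{\beta_1}\cdots c_{\beta_q}$ to $b_{\sigma(\alpha_1)}\cdots b_{\sigma(\alpha_p)}c_{\sigma(\beta_1)}\cdots c_{\sigma(\beta_q)}$. This is again a normal word: all $b$-letters still precede all $c$-letters, and $\sigma$ is injective, so $\alpha_r\neq\alpha_{r+1}$ is preserved. No reordering is needed, so no sign appears. By \Cref{lem:unique-normal-form-anticommuting}, the basis elements are well defined, so $\sigma$ acts by a permutation of $\mathcal B_t(K)$ with no signs. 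Hence $\chi_t(\sigma)$ is the number of fixed normal words.

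\textbf{Step 2: count the fixed words.} By uniqueness of the normal form, $w$ is fixed exactly when $\sigma(\alpha_r)=\alpha_r$ and $\sigma(\beta_s)=\beta_s$ for all $r,s$. In other words, all indices must lie in the fixed-point set of $\sigma$, which has size $m=X_1(\sigma)$. The $b$-part and the $c$-part are independent, and each is a sequence of length $p$ (resp. $q$) over an alphabet of size $m$ with no two consecutive letters equal. The number of such sequences is $m(m-1)^{p-1}=a_p(m)$ for $p\geq1$, and $1=a_0(m)$ for the empty sequence.

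\textbf{Step 3: sum over degrees.} Summing over $p+q\leq t$ gives the formula. As a consistency check, at $\sigma=\mathrm{id}$ it recovers the dimension formula.

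The only delicate point is Step 1, namely confirming that relabelling never triggers a reordering sign. The $\tau$ part, which does carry signs, is not involved in this lemma.
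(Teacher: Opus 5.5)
Your proof is correct and follows essentially the same argument as the paper: a normal word is fixed exactly when every index is a fixed point of $\sigma$, there are $a_p(X_1(\sigma))$ reduced sequences of each length $p$, and one sums over $p+q\leq t$. Your Step 1 adds something the paper leaves implicit, namely the check that relabelling sends normal words to normal words with no reordering sign, so the character is a genuine fixed-point count.
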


\begin{proof}
    A normal word is fixed by $\sigma$ if and only if the index in every letter is a fixed point of $\sigma$.
    For a positive length $p$, there are $X_1(\sigma)(X_1(\sigma)-1)^{p-1}$ reduced $b$-sequences, and similarly for the $c$-sequence.
    Summing over $p+q\leq t$ proves the formula.
\end{proof}

To split the multiplicity according to the eigenvalue of $\tau$, define the twisted character
\begin{equation}\nonumber
    \theta_t(\sigma):=\operatorname{Tr}_{\mathcal W_t(K)}(\sigma\tau).
\end{equation}
Up to level four, direct counting gives
\begin{equation}\nonumber
    \theta_t(\sigma)
    =1
    +\mathbf 1_{t\geq2}\bigl(X_1(\sigma)-2X_2(\sigma)\bigr)
    +\mathbf 1_{t\geq4}
       \bigl(X_1(\sigma)(X_1(\sigma)-1)-4X_1(\sigma)X_2(\sigma)
       +4X_2(\sigma)^2-2X_2(\sigma)\bigr).
\end{equation}
Indeed, a diagonal contribution to $\sigma\tau$ requires equal $b$- and $c$-lengths.
The first non-trivial contribution comes from length $(1,1)$, and the next one from length $(2,2)$.

Let $\chi^\lambda$ be the irreducible character of the Specht module $S^\lambda$, for $\lambda\vdash K$.
The multiplicities in \cref{eq:anticommuting-isotypic-decomposition} are
\begin{equation}
    m^{(t)}_{\lambda,\pm}
    =\frac12\left(
        \left\langle\chi_t,\chi^\lambda\right\rangle_{\mathfrak S_K}
        \pm
        \left\langle\theta_t,\chi^\lambda\right\rangle_{\mathfrak S_K}
    \right).
    \label{eq:appendix-multiplicity-formula}
\end{equation}
The implementation evaluates \cref{eq:appendix-multiplicity-formula} exactly by summing over cycle types and computing $\chi^\lambda$ with the Murnaghan--Nakayama rule; see \cite[Chapter~4]{Sagan2001} and \cite[Section~I.7]{Macdonald1995}.
No table of multiplicities or hard-coded character polynomial is required.

\begin{proposition}[Stable Shapes]
    \label{prop:appendix-stable-shapes}
    At hierarchy level $t$, only partitions of the form
    \begin{equation}\nonumber
        \lambda=(K-d,\nu),
        \qquad
        0\leq d\leq t,
        \qquad
        \nu\vdash d,
        \qquad
        K-d\geq\nu_1,
    \end{equation}
    can occur.
    For fixed $(d,\nu)$ and fixed $t$, the multiplicities in \cref{eq:appendix-multiplicity-formula} are independent of $K$ for all sufficiently large~$K$.
\end{proposition}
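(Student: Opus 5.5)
Two facts need proof: only partitions $(K-d,\nu)$ with $d\le t$ occur, and the multiplicities \cref{eq:appendix-multiplicity-formula} are eventually independent of $K$. Both follow from realizing $\mathcal W_t(K)$ as a sum of permutation modules on tuples of at most $t$ distinct indices.

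\textbf{Step 1: decompose $\mathcal W_t(K)$ into permutation modules.} Group the normal words by their equality pattern. A normal word $b_{\alpha_1}\cdots b_{\alpha_p}c_{\beta_1}\cdots c_{\beta_q}$ has an index tuple $(\alpha,\beta)$ of length $p+q\le t$. Record which positions carry equal indices. This is a set partition of $\{1,\dots,p+q\}$ with $d\le t$ blocks, compatible with the adjacency constraints $\alpha_r\neq\alpha_{r+1}$ and $\beta_s\neq\beta_{s+1}$.

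The $\mathfrak S_K$-span of the words with a fixed pattern is isomorphic to the permutation module on injective maps $\{1,\dots,d\}\to[K]$. This module is $\mathrm{Ind}_{\mathfrak S_{K-d}}^{\mathfrak S_K}\mathbf 1 \cong M^{(K-d,1^d)}$. The action on words involves no signs, as Lemma~\ref{lem:appendix-word-character} already reflects.

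Hence $\mathcal W_t(K)\cong\bigoplus_{\text{patterns}} M^{(K-d,1^d)}$. The list of patterns and their block counts $d$ depends only on $t$ once $K\ge t$.

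\textbf{Step 2: shapes and stable $\chi_t$-multiplicities.} By Young's rule (Pieri iterated $d$ times), $M^{(K-d,1^d)}=\bigoplus_\lambda K_{\lambda,(K-d,1^d)}S^\lambda$. Only $\lambda$ with $\lambda_1\ge K-d$ occur, i.e.\ $\lambda=(K-e,\nu)$ with $e\le d\le t$ and $\nu\vdash e$. This proves the shape claim.

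Stability comes from the branching description. $\mathrm{Ind}_{\mathfrak S_{K-d}}^{\mathfrak S_K}\mathbf 1$ is obtained by adding $d$ boxes one at a time to the one-row diagram $(K-d)$. The number of such paths ending at $(K-e,\nu)$ is independent of $K$ once $K-d\ge\nu_1+d$: when the first row is long enough, the removal and addition possibilities for lower rows never interact with the first-row constraint. So $\langle\chi_t,\chi^\lambda\rangle$ is eventually constant in $K$ for fixed $(e,\nu)$.

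Equivalently, one can argue with character polynomials. $\chi_t$ is a polynomial in $X_1$ by Lemma~\ref{lem:appendix-word-character}, and such polynomials have $K$-independent decompositions into stable irreducible characters in the sense of \cite{ChurchEllenbergFarb2015}.

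\textbf{Step 3: the twisted part, expected to be the main obstacle.} The $\pm$ split needs $\langle\theta_t,\chi^\lambda\rangle$ to be stable as well. The map $\tau$ commutes with $\mathfrak S_K$, so the $\tau$-eigenspaces are $\mathfrak S_K$-submodules of $\mathcal W_t(K)$. Because of the crossing signs, however, they are not themselves permutation modules.

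I would handle this pattern by pattern. $\tau$ maps a pattern class of type $(p,q)$ to one of type $(q,p)$.
\begin{itemize}
\item If $p\neq q$, the classes pair up, and $\tau$ swaps two isomorphic copies of $M^{(K-d,1^d)}$. Each pair contributes equally to the $\pm1$ eigenspaces, which is stable.
\item If $p=q$ and the pattern is fixed by $\tau$, then $\tau$ acts on this $M^{(K-d,1^d)}$ as a signed permutation of the block labels commuting with $\mathfrak S_K$. That is, it acts through a sign character of the normalizer $\mathfrak S_{K-d}\times\mathfrak S_d$. The crossing sign depends only on the pattern, since $\alpha_r\neq\beta_s$ is pattern data. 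The $\pm$ eigenspaces are then $\mathrm{Ind}_{\mathfrak S_{K-d}\times H}^{\mathfrak S_K}(\mathbf 1\otimes\psi_\pm)$ for a subgroup $H\le\mathfrak S_d$ and characters $\psi_\pm$ fixed by $t$.
\end{itemize}
Induction of $\mathbf 1_{\mathfrak S_{K-d}}\otimes(\text{rep of }\mathfrak S_d)$ has Littlewood--Richardson decomposition $\sum c^\lambda_{(K-d),\mu}$. These coefficients are independent of $K$ once $K-d\ge\mu_1+d$, again by Pieri's rule.

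The delicate point is checking that the crossing sign is genuinely pattern-determined and compatible with the $\mathfrak S_d$-relabelling. This is where I expect the real bookkeeping, and it can be cross-checked against the explicit $\theta_t$ formula. Summing over patterns then gives stable $m^{(t)}_{\lambda,\pm}$.
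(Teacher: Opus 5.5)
Your proof is correct. For the shape claim and the stability of $\langle\chi_t,\chi^\lambda\rangle$, it is the paper's argument with the details filled in. The paper's proof only says that words have support at most $t$, so $\mathcal W_t(K)$ is built from permutation modules induced from Young subgroups. It then applies Young's rule for the shapes and calls stability a standard instance of representation stability. Your decomposition $\mathcal W_t(K)\cong\bigoplus M^{(K-d,1^d)}$ by equality pattern, with the path count or the character-polynomial remark, makes exactly this precise.

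Your Step 3 genuinely goes beyond the paper. The statement concerns $m^{(t)}_{\lambda,\pm}$, and stability of $\langle\chi_t,\chi^\lambda\rangle$ alone does not give stability of each $\tau$-eigenspace. The paper's proof never mentions $\tau$. One could justify its citation by noting that $(1\pm\tau)/2$ are FI-module endomorphisms, so the eigenspaces are finitely generated FI-modules, but the paper does not say this. Your pattern-by-pattern argument is elementary and self-contained instead.

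The ``delicate point'' you flag is already settled by your own observation. On a $\tau$-stable pattern class the crossing sign is a constant $\epsilon_\Pi$. So $\tau$ acts there as $\epsilon_\Pi$ times right relabelling by a fixed involution $\pi_\Pi\in\mathfrak S_d$.

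Two small fixes are needed:
\begin{itemize}
\item The pairing bullet must cover every class that $\tau$ moves, not only $p\neq q$. For $p=q=2$, for example, the patterns $\alpha_2=\beta_1$ and $\alpha_1=\beta_2$ are exchanged by $\tau$.
\item If $\pi_\Pi$ is trivial, the whole class lies in the $\epsilon_\Pi$-eigenspace.
\end{itemize}

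Your description also yields a shortcut. Taking traces gives
$\theta_t(\sigma)=\sum_\Pi\epsilon_\Pi\,2^b\prod_{j<a}(X_1-j)\prod_{j<b}(X_2-j)$.
The sum runs over $\tau$-stable patterns in which $\pi_\Pi$ fixes $a$ blocks and swaps $b$ pairs of blocks. This is a character polynomial of weighted degree $a+2b\leq t$, and it reproduces the paper's formulas ($1+X_1-2X_2$ at levels two and three, and the extra level-four terms). So the character-polynomial argument you used for $\chi_t$ gives stability of $\langle\theta_t,\chi^\lambda\rangle$ directly.
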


\begin{proof}
    Each basis word has support at most $t$.
    Therefore $\mathcal W_t(K)$ is generated by permutation representations induced from subgroups fixing all but at most $t$ points.
    Young's rule implies that an irreducible constituent can have at most $t$ boxes below its first row.
    Stability of the resulting multiplicities is a standard instance of representation stability for symmetric groups; see \cite{ChurchEllenbergFarb2015}.
\end{proof}

\subsection{Stable Young--Jucys--Murphy Coordinates}

Fix a stable partition $\lambda=(K-d,\nu)$ and let $T_\lambda$ be its row-standard tableau.
For $m\in[K]$, define the Young--Jucys--Murphy element
\begin{equation}\nonumber
    J_m:=\sum_{r<m}(r\ m)\in\mathbb R[\mathfrak S_K].
\end{equation}
The elements $J_m$ commute and are self-adjoint.
On Young's seminormal basis, $J_m$ acts with eigenvalue equal to the content of the box occupied by $m$ \cite{Jucys1974,Murphy1981,OkounkovVershik1996}.

Let $T_\lambda^{(m)}$ be the restriction of $T_\lambda$ to $[m]$, let $x_m$ be the box added at step $m$, and let $\operatorname{Add}(T_\lambda^{(m-1)})$ be the addable boxes of the prefix shape.
The primitive orthogonal idempotent selecting the $T_\lambda$ seminormal line is obtained recursively by
\begin{equation}\nonumber
    E_{T_\lambda}
    =\prod_{m=1}^K
      \prod_{y\in\operatorname{Add}(T_\lambda^{(m-1)})\setminus\{x_m\}}
      \frac{J_m-\operatorname{ct}(y)}
           {\operatorname{ct}(x_m)-\operatorname{ct}(y)}.
\end{equation}
The alternative contents must be taken from the actual prefix shape; this is essential for the spectral projection to follow the chosen tableau branch.
Let
\begin{equation}\nonumber
    P_\epsilon:=\frac12(\Identity+\epsilon\tau),
    \qquad
    \epsilon\in\{+1,-1\}.
\end{equation}
Since the two group factors commute, $E_{T_\lambda}$ and $P_\epsilon$ are commuting orthogonal projectors.
Their common image has dimension $m^{(t)}_{\lambda,\epsilon}$.

The implementation uses stable word descriptors.
A positive descriptor labels one of the bounded number of indices in the long first row of $T_\lambda$, while a negative descriptor labels one of the $d$ boxes in its bounded tail.
The same descriptor can therefore be instantiated for every sufficiently large $K$.
The code scans these descriptors in a fixed support-refining order and retains a seed whenever its projection
\begin{equation}\nonumber
    v_r^{(K)}:=P_\epsilon E_{T_\lambda}e_{w_r^{(K)}}
\end{equation}
raises the exact rank.
The scan stops when the multiplicity from \cref{eq:appendix-multiplicity-formula} is reached.
A single seed convention is certified on the selected values of $K$ and reused throughout one interpolation.

\begin{proposition}[Raw Projected Multiplicity Block]
    \label{prop:appendix-raw-projected-block}
    Let
    \begin{equation}\nonumber
        V_{\lambda,\epsilon}^{(K)}
        :=\bigl[v_1^{(K)}\ \cdots\ v_m^{(K)}\bigr],
        \qquad
        m=m^{(t)}_{\lambda,\epsilon},
    \end{equation}
    and assume that this matrix has full column rank.
    For every $G_K$-invariant symmetric moment matrix $M_t(g)$, define
    \begin{equation}
        B_{\lambda,\epsilon}^{(K)}(g)
        :=\bigl(V_{\lambda,\epsilon}^{(K)}\bigr)^{\mathsf T}
          M_t(g)V_{\lambda,\epsilon}^{(K)}.
        \label{eq:appendix-raw-block}
    \end{equation}
    Then
    \begin{equation}
        M_t(g)\succcurlyeq\zero
        \quad\Longleftrightarrow\quad
        B_{\lambda,\epsilon}^{(K)}(g)\succcurlyeq\zero
        \quad\text{for every occurring }(\lambda,\epsilon).
        \label{eq:appendix-raw-block-equivalence}
    \end{equation}
\end{proposition}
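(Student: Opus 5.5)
The proof combines three ingredients: the isotypic decomposition of $\mathcal W_t(K)$, Schur's lemma for $G_K$-invariant operators, and the fact that a congruence by a full-column-rank matrix preserves positive semidefiniteness in both directions.

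First, since $M_t(g)$ is $G_K$-invariant and symmetric, it commutes with the orthogonal action of $\mathfrak S_K\times C_2$ on $\mathcal W_t(K)$. Hence it preserves each isotypic component $\mathcal W_{\lambda,\epsilon}\cong S^\lambda\otimes\mathbb R^{m^{(t)}_{\lambda,\epsilon}}$ of \cref{eq:anticommuting-isotypic-decomposition}. These components are mutually orthogonal, so $M_t(g)$ is block diagonal with respect to them. Moreover, $M_t(g)\succcurlyeq\zero$ if and only if each restriction $M_t(g)|_{\mathcal W_{\lambda,\epsilon}}$ is positive semidefinite.

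Next, I will use Schur's lemma over $\mathbb R$. This is legitimate because Specht modules are absolutely irreducible, so their real commutant is just $\mathbb R$. It gives $M_t(g)|_{\mathcal W_{\lambda,\epsilon}}=\Identity_{S^\lambda}\otimes C_{\lambda,\epsilon}$ for some symmetric $m\times m$ matrix $C_{\lambda,\epsilon}$. Consequently, positive semidefiniteness on $\mathcal W_{\lambda,\epsilon}$ is equivalent to positive semidefiniteness of the compression of $M_t(g)$ to any single copy of $S^\lambda$ tensored with the multiplicity space. I will take this copy to be $U_{\lambda,\epsilon}:=\operatorname{Im}(P_\epsilon E_{T_\lambda})$. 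It is the tensor product of the seminormal line selected by $T_\lambda$ (via the Young--Jucys--Murphy idempotent acting on the $S^\lambda$ factor) with the multiplicity space restricted to the $\epsilon$-eigenspace of $\tau$. Since $E_{T_\lambda}$ and $P_\epsilon$ are commuting orthogonal projectors, $U_{\lambda,\epsilon}$ has dimension $m^{(t)}_{\lambda,\epsilon}$, as stated before the proposition. Also, $M_t(g)$ restricted to $U_{\lambda,\epsilon}$ maps into $U_{\lambda,\epsilon}$, because $M_t(g)$ commutes with both projectors. Its matrix there is unitarily equivalent to $C_{\lambda,\epsilon}$.

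Finally, the columns $v_r^{(K)}=P_\epsilon E_{T_\lambda}e_{w_r^{(K)}}$ lie in $U_{\lambda,\epsilon}$, and by the full-rank assumption there are exactly $m$ of them. Thus $V:=V^{(K)}_{\lambda,\epsilon}$ is a bijection from $\mathbb R^m$ onto $U_{\lambda,\epsilon}$. Then $B=V^{\mathsf T}M_t(g)V$ satisfies $y^{\mathsf T}By=(Vy)^{\mathsf T}M_t(g)(Vy)$, and $Vy$ ranges over all of $U_{\lambda,\epsilon}$. So $B\succcurlyeq\zero$ if and only if $M_t(g)|_{U_{\lambda,\epsilon}}\succcurlyeq\zero$; this is Sylvester's law of inertia for non-orthonormal bases.

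Chaining the three equivalences over all occurring $(\lambda,\epsilon)$ gives \cref{eq:appendix-raw-block-equivalence}. Pairs with zero multiplicity contribute nothing.

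The main subtlety is to justify that $\operatorname{Im}E_{T_\lambda}$ inside $\mathcal W_t(K)$ is exactly (the selected seminormal line)$\,\otimes\,$(multiplicity space). This needs $E_{T_\lambda}$ to be a primitive idempotent of $\mathbb R[\mathfrak S_K]$ that annihilates every other isotypic component. Here the Okounkov--Vershik theory enters: the JM eigenvalues along a tableau path determine the path uniquely, and the prefix-addable-box product kills every other branch. I would state this as a cited fact rather than reprove it.
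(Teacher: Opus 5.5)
Your proposal is correct and follows essentially the same route as the paper: isotypic decomposition with Schur's lemma, identification of $\operatorname{Im}(P_\epsilon E_{T_\lambda})$ with the multiplicity space, the columns of $V^{(K)}_{\lambda,\epsilon}$ as a non-orthonormal basis of that image, and invariance of positive semidefiniteness under an invertible congruence. You also make explicit two points the paper only asserts: the absolute irreducibility of Specht modules over $\mathbb R$, and the Jucys--Murphy content argument showing that $E_{T_\lambda}$ annihilates every other tableau branch.
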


\begin{proof}
    The projector $P_\epsilon E_{T_\lambda}$ selects one seminormal line in every copy of $S^\lambda\boxtimes\epsilon$.
    Its image is therefore canonically isomorphic to the multiplicity space.
    Full column rank means that the columns of $V_{\lambda,\epsilon}^{(K)}$ form a basis of this image.
    Restriction to this image gives the multiplicity operator in a non-orthonormal basis, and \cref{eq:appendix-raw-block} is its Gram-congruence matrix.
    Positivity is invariant under an invertible change of coordinates, and Schur's lemma then gives \cref{eq:appendix-raw-block-equivalence}; see \cite{GatermannParrilo2004,Serre1977}.
\end{proof}

\subsection{A Pedagogical Level-Two Block Entry}

The following calculation illustrates why the Wedderburn decomposition contains information that is not visible from a list of orbit variables alone.
Let $u=(u_1,\ldots,u_K)\in\mathbb R^K$ satisfy
\begin{equation}\nonumber
    \sum_{i=1}^K u_i=0,
    \qquad
    \sum_{i=1}^K u_i^2=1.
\end{equation}
Thus $u$ belongs to the standard representation $S^{(K-1,1)}$.
Consider the family-exchange-even unit vector
\begin{equation}\nonumber
    q_u:=\frac{1}{\sqrt2}
    \sum_{i=1}^K u_i\bigl(e_{b_i}+e_{c_i}\bigr).
\end{equation}
Use the level-two orbit convention $g_2=L(b_ic_i)$ and $g_3=L(b_ib_j)=L(c_ic_j)$ (for $i\neq j$).
The anticommutation relations and realification give $L(b_ic_j)=0$ for $i\neq j$.
Therefore
\begin{equation}\nonumber
    \left\langle q_u,M_2(g)q_u\right\rangle=\frac12\sum_{i,j}u_iu_j\bigl(L(b_ib_j)+L(c_ic_j)+L(b_ic_j)+L(c_ib_j)\bigr) =1+g_2+g_3\sum_{i\neq j}u_iu_j =1+g_2-g_3,
\end{equation}
where the last equality uses
\begin{equation}\nonumber
    \sum_{i\neq j}u_iu_j
    =\left(\sum_i u_i\right)^2-\sum_i u_i^2=-1.
\end{equation}

The term $-g_3$ is the contribution of centring in the standard representation.
An anchored vector such as $e_{b_1}+e_{c_1}$ does not see it.
This example explains the conceptual difference between the orbit-based reduction of \cite[Section~4.3]{Botteron-Broadbent-Culf-Nechita-Pellegrini-Rochette-24}, which identifies the scalar $g_3$, and the Wedderburn reduction, which determines how $g_3$ acts on each irreducible sector.
The implemented raw projected block is congruent to this transparent orthonormal-coordinate calculation.

\subsection{Exact Interpolation and the Large-\texorpdfstring{$K$}{K} Limit}

At fixed hierarchy level, each entry of \cref{eq:appendix-raw-block} is assembled from finitely many equality patterns.
The number of instantiations of a support pattern is a falling factorial in $K$.
The coefficients of the Young--Jucys--Murphy projector are rational functions of the tableau contents and hence rational functions of $K$ in the stable convention.
It follows that every block coefficient is a rational function of $K$.

The implementation evaluates the finite blocks in exact rational arithmetic at a sequence of values of $K$.
It then interpolates the non-zero entries as rational functions and checks the resulting models at independent exact values of $K$.
For the large-$K$ specialization, the constant Gram block
\begin{equation}\nonumber
    G_{\lambda,\epsilon}^{(K)}
    :=
    \bigl(V_{\lambda,\epsilon}^{(K)}\bigr)^{\mathsf T}
    V_{\lambda,\epsilon}^{(K)}
\end{equation}
controls the scale of the raw coordinates.
The code performs an exact symbolic $LDL^{\mathsf T}$ decomposition of this rational-function matrix, uses the induced invertible triangular congruence, and extracts the leading powers and coefficients of $K$ exactly.
The resulting renormalized block has a coefficientwise finite limit.
Every finite-$K$ congruence preserves positive semidefiniteness; consequently, every coefficientwise limit of renormalized feasible blocks is positive semidefinite whenever the interpolated rational models have been certified.

We now give explicitly the limiting SDPs at levels one and two.
The orbit variables $g_i$ and their representative words are defined in the accompanying code.
At level $t$, the limiting winning-probability bound is
\begin{equation}\nonumber
    \omega_t^{(\infty)} \,=\, \frac14
    \left(1+\max \set[\Big]{ 2g_1+g_2 \,:\, \overline B_{\lambda,\epsilon}^{(t)}(g) \succcurlyeq\zero \text{ for every occurring }(\lambda,\epsilon) } \right).
\end{equation}

\paragraph{Level one.}
The four limiting blocks are
\begin{align}
    \overline B_{(K),+}^{(1)}(g)
    &\,=\,
    \begin{pmatrix}
        1 & g_1\\
        g_1 & \frac12g_3
    \end{pmatrix}\,,
    &
    \overline B_{(K),-}^{(1)}(g)
    &\,=\,
    \begin{pmatrix}
        \frac12g_3
    \end{pmatrix}\,,
    \nonumber\\
    \overline B_{(K-1,1),+}^{(1)}(g)
    &\,=\,
    \begin{pmatrix}
        \frac12(1+g_2-g_3)
    \end{pmatrix}\,,
    &
    \overline B_{(K-1,1),-}^{(1)}(g)
    &\,=\,
    \begin{pmatrix}
        \frac12(1-g_2-g_3)
    \end{pmatrix}\,.
    \nonumber
\end{align}

\paragraph{Level two.}
The two blocks associated with the trivial representation are
$$
\resizebox{\linewidth}{!}{$
    \overline B_{(K),+}^{(2)}(g)
    \,=\,
    \begin{pmatrix}
        1 & g_1 & g_2 & g_3\\
        g_1 & \frac12g_3 & g_4 & \frac12(g_5+g_{11})\\
        g_2 & g_4 & -g_6 & g_{13}\\
        g_3 & \frac12(g_5+g_{11}) & g_{13} & \frac12(g_9+g_{18})
    \end{pmatrix}
    \quad\text{and}\quad
    \overline B_{(K),-}^{(2)}(g)
    \,=\,
    \begin{pmatrix}
        \frac12g_3 & \frac12(g_{11}-g_5) & -g_5\\
        \frac12(g_{11}-g_5) & \frac12(g_{18}-g_9) & 0\\
        -g_5 & 0 & g_9
    \end{pmatrix}.
$}
$$
The two blocks associated with the standard representation are
$$
\resizebox{\linewidth}{!}{$
    \overline B_{(K-1,1),+}^{(2)}(g)
    \,=\,
    \begin{pmatrix}
        \frac12(1+g_2-g_3)
        & g_1-g_4
        & \frac12(g_1+g_{10}-2g_{11}-2g_5)
        & \frac14(g_1-g_{10}+2g_4)
        & \frac12(g_1+g_4)
        \\
        g_1-g_4
        & 1+g_6
        & -2g_{13}
        & 0
        & g_3
        \\
        \frac12(g_1+g_{10}-2g_{11}-2g_5)
        & -2g_{13}
        & \frac12(2g_{16}+g_{17}-4g_{18}+g_3-4g_9)
        & \frac14(-g_{17}+g_3+2g_8)
        & -\frac12(g_{13}+g_{14})
        \\
        \frac14(g_1-g_{10}+2g_4)
        & 0
        & \frac14(-g_{17}+g_3+2g_8)
        & \frac18(-2g_{16}+g_{17}+g_3)
        & \frac14(g_{13}-g_{14})
        \\
        \frac12(g_1+g_4)
        & g_3
        & -\frac12(g_{13}+g_{14})
        & \frac14(g_{13}-g_{14})
        & \frac12(g_3+g_8)
    \end{pmatrix}
$}
$$
and
$$
\resizebox{\linewidth}{!}{$
    \overline B_{(K-1,1),-}^{(2)}(g)
    \,=\,
    \begin{pmatrix}
        \frac12(1-g_2-g_3)
        & \frac12(g_1+g_{10}-2g_{11}+2g_5)
        & \frac14(g_1-g_{10}-2g_4)
        & -g_1+g_4+2g_5
        \\
        \frac12(g_1+g_{10}-2g_{11}+2g_5)
        & \frac12(2g_{16}+g_{17}-4g_{18}+g_3+4g_9)
        & \frac14(-g_{17}+g_3-2g_8)
        & -g_{13}-g_{14}
        \\
        \frac14(g_1-g_{10}-2g_4)
        & \frac14(-g_{17}+g_3-2g_8)
        & \frac18(-2g_{16}+g_{17}+g_3)
        & \frac12(g_{13}-g_{14})
        \\
        -g_1+g_4+2g_5
        & -g_{13}-g_{14}
        & \frac12(g_{13}-g_{14})
        & 2g_3-2g_8-4g_9
    \end{pmatrix}.
$}
$$
The remaining four blocks are
\begin{align*}
    \overline B_{(K-2,2),+}^{(2)}(g)
    \,=\,&
    \begin{pmatrix}
        1+g_{15}-2g_{16}-g_{17}+2g_{18}-g_3+g_6+g_7+2g_9
    \end{pmatrix}\,,
    \\[1ex]
    \overline B_{(K-2,2),-}^{(2)}(g)
    \,=\,&
    \begin{pmatrix}
        1+g_{15}-2g_{16}-g_{17}+2g_{18}-g_3-g_6-g_7-2g_9
        &
        2(-g_{12}+g_{13}+g_{14}-g_2)
        \\
        2(-g_{12}+g_{13}+g_{14}-g_2)
        &
        2(1-2g_3-g_7+2g_8+2g_9)
    \end{pmatrix}\,,
    \\[1ex]
    \overline B_{(K-2,1,1),+}^{(2)}(g)
    \,=\,&
    \begin{pmatrix}
        \frac14(1-g_{15}+2g_{16}-g_{17}-g_3-g_6+g_7)
        &
        \frac12(g_2-g_{12}-g_{13}+g_{14})
        \\
        \frac12(g_2-g_{12}-g_{13}+g_{14})
        &
        \frac12(1-2g_3+g_7-2g_8)
    \end{pmatrix}\,,
    \\[1ex]
    \overline B_{(K-2,1,1),-}^{(2)}(g)
    \,=\,&
    \begin{pmatrix}
        \frac14(1-g_{15}+2g_{16}-g_{17}-g_3+g_6-g_7)
    \end{pmatrix}\,.
\end{align*}

\paragraph{Level three.}
At level three, the same construction gives fourteen limiting blocks in $175$ variables.
Their explicit entries are substantially longer and are therefore left to the accompanying code.

\end{document}